\documentclass[11pt]{article}

\usepackage[margin=1in]{geometry}
\usepackage{amsmath,amssymb,amsthm,mathtools,bm}
\usepackage{booktabs,array,longtable}
\usepackage{enumitem}
\usepackage{xcolor}
\usepackage[numbers,sort&compress]{natbib}
\usepackage{hyperref}
\usepackage[nameinlink,capitalise]{cleveref}
\usepackage{aliascnt}
\usepackage{microtype}

\hypersetup{
  pdftitle={Classification and Exact Local Masking in Finite-Field Clifford Dual-Unitary Circuits},
  pdfauthor={Basanta R. Pahari},
  pdfsubject={Finite-field Clifford dual-unitary circuits, perfect tensors, local masking, and operator transport},
  pdfkeywords={Clifford circuits, dual-unitary gates, perfect tensors, quantum information masking, finite fields, operator transport},
  colorlinks=true,
  linkcolor=blue!55!black,
  citecolor=blue!55!black,
  urlcolor=blue!55!black
}

\newtheorem{theorem}{Theorem}[section]
\newaliascnt{proposition}{theorem}
\newtheorem{proposition}[proposition]{Proposition}
\aliascntresetthe{proposition}
\newaliascnt{lemma}{theorem}
\newtheorem{lemma}[lemma]{Lemma}
\aliascntresetthe{lemma}
\newaliascnt{corollary}{theorem}
\newtheorem{corollary}[corollary]{Corollary}
\aliascntresetthe{corollary}
\newaliascnt{definition}{theorem}
\newtheorem{definition}[definition]{Definition}
\aliascntresetthe{definition}
\newaliascnt{remark}{theorem}
\newtheorem{remark}[remark]{Remark}
\aliascntresetthe{remark}
\newaliascnt{conjecture}{theorem}
\newtheorem{conjecture}[conjecture]{Conjecture}
\aliascntresetthe{conjecture}

\newcommand{\F}{\mathbb F}

\newcommand{\Sp}{\operatorname{Sp}}
\newcommand{\SL}{\operatorname{SL}}
\newcommand{\GL}{\operatorname{GL}}
\newcommand{\rank}{\operatorname{rank}}
\newcommand{\tr}{\operatorname{tr}}
\newcommand{\wt}{\operatorname{wt}}

\newcommand{\id}{\mathrm{id}}
\newcommand{\cU}{\mathcal U}

\newcommand{\cN}{\mathcal N}

\newcommand{\cB}{\mathcal B}
\newcommand{\cE}{\mathcal E}
\newcommand{\cD}{\mathcal D}

\newcommand{\cL}{\mathcal L}
\newcommand{\norm}[1]{\left\lVert #1\right\rVert}
\newcommand{\Frob}[1]{\left\lVert #1\right\rVert_{\mathrm F}}

\title{Classification and Exact Local Masking in
Finite-Field Clifford Dual-Unitary Circuits}

\author{
Basanta R. Pahari\\[1mm]
\small Department of Mathematics, Hawai‘i Community College\\
\small University of Hawai‘i System, Hilo, Hawai‘i, USA\\
\small \texttt{basanta@hawaii.edu}
}

\date{June 2026}

\begin{document}
\maketitle

\begin{abstract}
We classify two-qudit Clifford dual-unitary gates over the finite field
$\mathbb{F}_q$, where the local dimension $q$ is a prime power, and apply
the classification to exact local masking and operator transport in
homogeneous brickwork circuits. Under ordered one-qudit Clifford
equivalence, the dual-unitary locus contains $q-2$ perfect-tensor
cores, one rank-one core, and one SWAP core. Homogeneous repetition
separates these cores into five distinct transport phases.

The one-site Weyl edge channels determine exact local-masking
distances. Writing $d_r(t)$ for the masking distance against output
observers controlling at most $r$ sites, perfect-tensor circuits attain
\[
d_1(t)=4t,
\qquad
d_2(t)=4t-2,
\]
whereas delayed erasers satisfy
\[
d_1(t)=4t-2,
\qquad
d_2(t)=4t-4
\]
for $t\geq 2$. Consequently, sufficiently short quantum messages are
completely hidden from every one- or two-qudit output subsystem, even
when the input is entangled with a reference, while remaining exactly
recoverable from the full output.

For $q=3$, we construct an explicit perfect-tensor Clifford gate from
two inverse SUM gates. Exhaustive Weyl-support searches for
$t=1,2,3$ reproduce the predicted masking distances. For a coherent perturbation of this gate, local leakage scales linearly
with the perturbation strength, whereas the infidelity of recovery using
the ideal inverse scales quadratically near the perfect point.

\end{abstract}

\section{Introduction}

Dual-unitary circuits are built from two-site gates that remain unitary
after the index reshuffling that exchanges the space and time
directions. For traceless one-site observables, their
infinite-temperature two-point correlations vanish away from the two
edges of the causal light cone, while the surviving edge correlations
are generated by iterates of two one-site quantum channels
\(\mathcal{M}_{\pm}\)
\cite[Property~1, Eq.~(12), and Property~2, Eqs.~(17)--(19)]{Bertini2019exact}.
Claeys and Lamacraft subsequently showed that dual-unitary circuits are
maximum-velocity circuits with butterfly velocity \(v_B=1\). For the
maximally chaotic dual-unitary class, they also showed that the
long-time odd-parity out-of-time-order correlator inside the light cone
is determined by the same one-site channels
\cite[Secs.~III and IV.A, Eqs.~(58)--(59)]{ClaeysLamacraft2020maximum}.
Kos, Bertini, and Prosen studied perturbations away from the
dual-unitary point. For a suitable class of unperturbed dual-unitary
gates, they expressed two-point correlations as sums over
one-dimensional transfer matrices contracted along allowed space-time
paths. Their path-sum formula is rigorous in the dilute-defect limit
and for circuits with random longitudinal fields, while the clean
weakly perturbed case is supported by analytical arguments and
numerical tests
\cite[Secs.~III--V, Eqs.~(30)--(33)]{Kos2021Correlations}.
These results motivate a direct study of the light-cone channels. A natural extension relaxes the single-gate dual-unitary
condition itself. Yu, Wang, and Kos introduced a hierarchy of
multi-gate solvability conditions whose first level is ordinary
dual unitarity. At the second level, two-point correlations may
survive both on the light-cone edges and along the time axis, and
the same work gives Weyl-basis constructions of higher-dimensional
dual-unitary and hierarchical gates
\cite[Secs.~2.2--2.4 and 3.1.2]{Yu2024hierarchical}.
Here we classify the corresponding finite-field Clifford dual-unitary
gates at the ordinary dual-unitary level.

Clifford dynamics provides a finite algebraic framework for this
problem. More generally, Clifford conjugation acts linearly on
generalized Pauli labels and preserves their commutation form,
producing a symplectic matrix representation over \(\mathbb{Z}_d\)
\cite[Sec.~II.B--C, Eqs.~(5)--(10)]{Hostens2005Stabilizer}.
In the finite-field model used here, \(n\)-qudit Weyl operators modulo
phase are labelled by vectors in \(\F_q^{2n}\), and their commutation
relations are encoded by the trace-symplectic form
\cite[Sec.~2, Corollary~4, and Lemma~5]{Ketkarnonbinary}.
We restrict to the \(\F_q\)-linear Clifford transformations whose
induced label action lies in \(\Sp(2n,q)\). For a recent overview of the binary-extension Galois-qudit formulation, see \cite[Secs.~2.2 and~3.2]{wills2026review}.

The structural description of Clifford quantum cellular automata as
reversible Clifford dynamics induced by symplectic cellular automata,
including the characterization of their one-dimensional
translationally invariant form and elementary generators, was
developed in \cite{Schlingemann2008on}. This formalism was later used
for qubit spacetime-translation-invariant dual-unitary Clifford
circuits to analyze operator spreading, recurrence times, one-site
gliders, and the performance of circuit-generated codes under erasure
errors
\cite[Secs.~II.B--C, III.C--D, and VI.B--D]{Sommers2023Crystalline}.
For monitored dual-unitary Clifford circuits with probabilistic
measurements chosen to preserve spatial unitarity, Yao and Claeys
characterized temporal-entanglement barriers and found qualitatively
different measurement-modified scaling for good and poor scramblers
\cite[Table~I and Secs.~III--VIII]{Yao2024Temporal}.
More recently, Bejan, Claeys, and Yao studied the spreading of locally
injected magic under random Clifford dynamics and tested the resulting
magic-length-scale phenomenology in identity-doped dual-unitary
Clifford circuits. They found ballistic magic length scales with
velocities approximately \(v_E\) and \(2v_E\), and noted that the
undoped dual-unitary Clifford point is qualitatively distinct
\cite[App.~S8, especially Fig.~S7 and the final paragraph]{Bejan2025Magic}.
These works address operator spreading, recurrence, coding behavior,
temporal entanglement, and magic transport. Here we instead classify
undoped homogeneous finite-field Clifford dual-unitary gates under
ordered local equivalence and determine their exact decoupling
properties for bounded output regions.

Unimodular eigenoperators of the light-cone channels are
maximal-velocity one-site solitons, also called gliders. Holden-Dye,
Masanes, and Pal extended this picture to width-\(w\) solitons and
proved that, under the finite-support assumptions of their theorem,
every conserved quantity formed from finite-range densities decomposes
into charges generated by such solitons. In the infinite-chain limit,
they obtain a one-to-one correspondence between finite-width solitons
and conserved densities
\cite[Sec.~2.3, Theorem~1, and Remark~1]{holden2312fundamental}.
We instead classify the one-site transport channels of finite-field
Clifford dual-unitary gates and use their nilpotent or gliding
structure to derive the transport-phase counts and local masking
distances. Here
\emph{local masking} means exact state-independent reduction on every
output subsystem of the prescribed size.

This notion is distinct from standard bipartite quantum data hiding.
In the foundational formulation of DiVincenzo, Leung, and Terhal, a
classical label is encoded into orthogonal bipartite mixed states that
are perfectly distinguishable by a global measurement but reveal only
a vanishing amount of information to observers restricted to local
operations and classical communication
\cite[Abstract and Secs.~I--III]{divincenzoquantum}.
In the present setting, an observer may perform an arbitrary
measurement on the bounded output subsystem it controls, but has no
access to its complement. The resulting privacy statement is exact
subsystem decoupling rather than approximate indistinguishability under
a restricted measurement class.

Perfect tensors provide a second background connection. For four
parties, the perfect-tensor condition is equivalent to the associated
\(q^2 \times q^2\) coefficient matrix being \(2\)-unitary. Such matrices
form a distinguished subset of dual-unitary gates and, through the
operator--state correspondence, generate \(\operatorname{AME}(4,q)\)
states
\cite[Sec.~VI.B--C and App.~B]{goyeneche2015absolutely}.
See also
\cite[p.~1 and Supplemental Material]{rather2020creating} and
\cite[pp.~2--5, Sec.~II.B, Eq.~(14), and Definitions~1--3]{rather2022construction}.
The projective classification of the four columns is governed by the
classical cross-ratio on \(\operatorname{PG}(1,q)\)
\cite[Sec.~6.1]{Hirschfeld_projective}.
For a recent survey of AME-state constructions, their relation to
\(2\)-unitary matrices and perfect tensors, local-equivalence questions,
and applications to quantum codes and tensor networks, see
\cite[Secs.~II--V, VII--X, and App.~C]{Rajchel2026}.
In the ordered classification below, the cross-ratio parametrizes the
perfect sector.

Minimal-support \(\operatorname{AME}(N,d)\) states are in direct
correspondence with classical maximum-distance-separable codes
\cite[Sec.~IV.B]{goyeneche2015absolutely}
\cite[Sec.~III, especially Eqs.~(4)--(6)]{raissi2018optimal}.
Even-party absolutely maximally entangled states are related to
pure-state threshold quantum secret-sharing schemes through the
correspondence proved in
\cite[Theorem~2]{helwig2012absolute}.
More broadly, multipartite entanglement, quantum error-correcting codes,
and the entangling power of unitary evolutions can be studied within
the common framework of average bipartite entanglement measures
developed in \cite{scott2004multipartite}. In the protocol studied
here, an arbitrary state is placed on a contiguous input interval, while
the remaining sites are initialized in an independent maximally mixed state, and a
translation-invariant brickwork circuit is applied. How large can the
interval be while every observer controlling at most one or two output
sites receives no information about the input?

The mixed exterior supplies the randomness, while the full output
remains reversibly decodable. Thus the construction lies outside the
deterministic bipartite masking model of Modi \emph{et al.}
\cite[Definition~1 and Theorem~3]{modi2018masking}.
Multipartite deterministic masking is nevertheless possible. Li and
Wang constructed encodings of arbitrary \(d\)-level states into
multipartite systems and, whenever a pair of orthogonal Latin squares
of order \(d\) exists, into three \(d\)-level systems whose one-party
marginals are independent of the input state
\cite[Definition~1 and Theorems~1--2]{li2018multipartite}.
The present construction differs because the independent maximally
mixed exterior supplies the randomness and the privacy threshold is
generated dynamically as a function of circuit depth. Its reduced-state
condition is closest to the multipartite \(k\)-uniform masking condition
of Shi \emph{et al.}, which requires every reduction to any \(k\)
output parties to be independent of the encoded state
\cite[Definition~9]{shi2021k}.
Only the reduced-state privacy condition is shared. Here that condition
is generated dynamically for every allowed input state and at every
circuit time covered by the masking law.

The first result is an ordered local classification of isolated gates.
The dual-unitary locus has exactly \(q\) nonlocal cores: \(q-2\) perfect
cores indexed by \(\delta=\det B=\det C\), one rank-one core, and one
SWAP core. The parameter \(\delta\) is a M\"obius reparameterization of the
classical cross-ratio of an ordered \([4,2,3]_q\) MDS configuration. We also compute the stabilizers
and orbit sizes.

The circuit classification is finer. Under homogeneous repetition,
the rank-one core separates into delayed-erasure and gliding phases,
giving five transport phases in total. Their exact counts form a
codimension ladder in \(\Sp(4,q)\). The one-site edge channels determine
the all-time masking laws. Immediate erasure gives
\[
d_1(t)=4t,
\qquad
d_2(t)=4t-2,
\]
while delayed erasure gives
\[
d_1(t)=4t-2,
\qquad
d_2(t)=4t-4,
\qquad
t\geq 2.
\]
Every nonnilpotent edge channel produces a one-site glider. Thus
isolated-gate equivalence is strictly coarser than homogeneous-circuit
transport.

The qubit case is the boundary case because there is no pure
four-qubit state whose two-qubit reduced density matrices are all
maximally mixed. Equivalently, no \(\operatorname{AME}(4,2)\) state
exists
\cite[Sec.~2, Theorem~1]{higuchi2000entangled}.
A modern direct treatment, including seven proofs of this
nonexistence result, is given in
\cite[Observation~1 and Proofs~1--7]{Huber_2026}.
Its optimal phase is delayed erasure, and a signed scalar edge witness
distinguishes delayed erasure, one-edge gliding, two-edge gliding, and
SWAP transport.

Relative to earlier studies of translation-invariant qubit Clifford
circuits, this work adds three elements. First, the classification
applies uniformly to the restricted finite-field Clifford groups over
every prime power \(q\). Second, it separates ordered local gate cores
from the finer transport phases produced by homogeneous repetition.
Third, it converts the edge-channel classification into exact
reference-secure masking distances for one- and two-site output regions.

The remainder of this paper is organized as follows.
Section~\ref{sec:setup} introduces the finite-field Clifford
representation and local masking. Section~\ref{sec:cores} gives the
core classification. Sections~\ref{sec:counting}
and~\ref{sec:concentration} count the transport phases and describe
their geometry. Sections~\ref{sec:masking} and~\ref{sec:privacy} prove
the masking laws and their privacy interpretation. Section~\ref{sec:qutritdemo}
gives the qutrit implementation and analyzes coherent and stochastic
errors. Sections~\ref{sec:qubit} and~\ref{sec:witness} specialize the results to
qubits. Higher-local phenomena and the remaining open problems are
discussed in the appendices.
\section{Finite-field Clifford gates and local masking}
\label{sec:setup}

\subsection{Finite-field phase space}

Let \(q=p^m\) be a prime power, and let \(V=\F_q^2\) carry the
alternating form
\[
[u,v]=u^T Jv,
\qquad
J=
\begin{pmatrix}
0&1\\
-1&0
\end{pmatrix}.
\]

The generalized Pauli and Clifford groups in arbitrary Hilbert-space
dimension \(d\) admit a modular-arithmetic representation over
\(\mathbb Z_d\), in which Clifford conjugation induces a symplectic
transformation of the Pauli labels
\cite[Sec.~II.B--C, Eqs.~(5)--(10)]{Hostens2005Stabilizer}.
We instead use the finite-field, or restricted, Clifford construction.  Let \(q=p^m\), let
\(\omega=\exp(2\pi i/p)\), and let
\(\operatorname{tr}_{q/p}\colon\F_q\to\F_p\) denote the field trace.
The finite-field shift and phase operators are
\[
X(a)\lvert x\rangle=\lvert x+a\rangle,
\qquad
Z(b)\lvert x\rangle
=
\omega^{\operatorname{tr}_{q/p}(bx)}\lvert x\rangle,
\qquad
a,b,x\in\F_q.
\]
The operators \(X(a)Z(b)\) form a finite-field Weyl error basis, so,
modulo phase, a one-qudit Weyl operator is labelled by a pair
\((a,b)\in\F_q^2\).  Their commutation relation is determined by the
trace-symplectic pairing
\[
\operatorname{tr}_{q/p}\!\left(ba'-b'a\right)
\]
\cite[Sec.~2, Eq.~(1), Lemmas~1 and~5]{Ketkarnonbinary}.
In the binary-extension setting this construction is commonly described
using Galois-qudit Pauli operators; for a recent review of that language,
see \cite[Secs.~2.2 and~3.2]{wills2026review}.

We restrict to the \(\F_q\)-linear Clifford transformations.  Under the
standard identification of the label space with a two-dimensional
symplectic vector space over \(\F_q\), the projective restricted
one-qudit Clifford group is
\(\SL(2,q)=\Sp(2,q)\)
\cite[Sec.~II, the two paragraphs following Eq.~(7), and Sec.~III,
the paragraph following Eq.~(11)]{Zhu2017Multiqubit}.  Accordingly, we take the Weyl labels, modulo
phase, to lie in
\[
V=\F_q^2,
\]
with projective restricted Clifford action
\[
\Sp(2,q)=\SL(2,q).
\]
For two qudits, the corresponding symplectic quotient is
\[
\Sp(4,q)
=
\left\{
S\in\GL(4,q):S^T\Omega S=\Omega
\right\},
\qquad
\Omega=J\oplus J.
\]

The matrix realization of the symplectic group and the
low-dimensional identity
\[
\Sp(2,q)=\SL(2,q)
\]
are standard
\cite[Ch.~8, pp.~68--70, especially Theorem~8.1]
{taylor1992geometry}.  For prime \(q\), this is the full projective two-qudit Clifford
quotient.  For nonprime prime powers \(q\), it is the standard
finite-field, or restricted, Clifford subgroup
\cite[Sec.~II, the two paragraphs following Eq.~(7), and the
paragraph following Eq.~(11)]{Zhu2017Multiqubit}.  Throughout the paper, we count symplectic classes, meaning Clifford gates modulo Weyl displacements and overall phases.

Write
\[
 S=\begin{pmatrix}A&B\\C&D\end{pmatrix},
 \qquad
 A,B,C,D\in M_2(\F_q).
\]
The identity
\begin{equation}
 X^TJX=(\det X)J
 \label{eq:detidentity}
\end{equation}
holds for every $2\times2$ matrix over $\F_q$, including characteristic two.

\subsection{Dual-unitarity and perfect tensors}

Let
\[
\lvert\Phi_{q^2}\rangle
=
\frac{1}{q}
\sum_{x,y\in\F_q}
\lvert x,y\rangle\otimes\lvert x,y\rangle
\]
be the normalized maximally entangled state between two copies of the
two-qudit Hilbert space.  This state is a tensor product of two
generalized Bell stabilizer states.  If \(U\) is a two-qudit Clifford
gate, then its normalized Choi state is
\[
\lvert U\rangle
=
(I_{q^2}\otimes U)\lvert\Phi_{q^2}\rangle.
\]
Since Clifford operations map stabilizer states to stabilizer states,
\(\lvert U\rangle\) is a four-qudit stabilizer state
\cite[App.~C, Eq.~(C7)]{Hostens2005Stabilizer}.
Under the standard gate--state reshaping, a two-qudit gate is
\(2\)-unitary precisely when the gate matrix, its partial transpose,
and its reshuffling are all unitary.  Equivalently, its four-index
coefficient tensor defines an \(\operatorname{AME}(4,q)\) state
\cite[Sec.~VI.B--C and App.~B]{goyeneche2015absolutely}.  Such a state
is a four-party perfect tensor: for every bipartition of its indices
into sets \(X\) and \(X^c\) with \(\lvert X\rvert\leq\lvert X^c\rvert\),
the tensor is proportional to an isometry from \(X\) to \(X^c\)
\cite[Sec.~2, Definition~2]{pastawski2015holographic}.
For a bipartite stabilizer state, the entanglement entropy is determined
by the rank of the subgroup of stabilizers supported locally on either
side.  Equivalently, the state can be reduced by local unitaries to a
collection of independent maximally entangled pairs
\cite[Result~1 and Eqs.~(1), (6), and (8)]{fattal2004entanglement}.
Thus, maximal entanglement across a balanced bipartition is
equivalent to the absence of a nontrivial stabilizer supported entirely
on either side.  Applied to the two spacetime bipartitions, this gives
the following block criterion.

\begin{lemma}
\label{lem:blockcriteria}
A finite-field two-qudit Clifford symplectic class is dual-unitary if and only if
\[
 B,C\in\GL(2,q).
\]
It is a perfect tensor, equivalently two-unitary, if and only if all four blocks $A,B,C,D$ are invertible.
\end{lemma}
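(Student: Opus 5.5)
We work with the Choi stabilizer state $\lvert U\rangle$ and use the entanglement criterion of Fattal et al. already recalled above. Label the four legs of the Choi state as $\mathrm{in}_1,\mathrm{in}_2$ (the two reference copies) and $\mathrm{out}_1,\mathrm{out}_2$. First we write down the stabilizer group of $\lvert U\rangle$ explicitly in symplectic terms. For each input label $v=(v_1,v_2)\in\F_q^4$, the Weyl operator with label $v$ on the reference side, suitably conjugated (transpose, which flips the sign of the $Z$ part), tensored with the Weyl operator with label $Sv$ on the output side, stabilizes $\lvert U\rangle$. The stabilizer group modulo phase is therefore the Lagrangian subspace
\[
\mathcal L_S=\{(\bar v_1,\bar v_2,A v_1+Bv_2,\,Cv_1+Dv_2): v_1,v_2\in\F_q^2\}\subset(\F_q^2)^4,
\]
where $\bar v$ denotes the image of $v$ under a fixed invertible map of $\F_q^2$ (the transpose correction). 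That map does not affect any rank condition below. Here the output labels for site $1$ and site $2$ are $Av_1+Bv_2$ and $Cv_1+Dv_2$.

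Unitarity corresponds to the bipartition $\{\mathrm{in}_1,\mathrm{in}_2\}\,|\,\{\mathrm{out}_1,\mathrm{out}_2\}$ and holds automatically, since $S$ is invertible. Dual-unitarity means unitarity of the reshuffled gate, i.e.\ maximal entanglement ($2\log q$ entropy) across the spacetime bipartition $\{\mathrm{in}_1,\mathrm{out}_1\}\,|\,\{\mathrm{in}_2,\mathrm{out}_2\}$. By the Fattal et al. criterion, this is equivalent to $\mathcal L_S$ containing no nonzero element supported only on $\{\mathrm{in}_1,\mathrm{out}_1\}$, and none supported only on $\{\mathrm{in}_2,\mathrm{out}_2\}$.

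An element is supported on $\{\mathrm{in}_1,\mathrm{out}_1\}$ iff $v_2=0$ and $Cv_1=0$. Such nonzero elements exist iff $\ker C\neq 0$. Symmetrically, an element is supported on $\{\mathrm{in}_2,\mathrm{out}_2\}$ iff $v_1=0$ and $Bv_2=0$. So the spacetime bipartition is maximally entangled iff $B,C\in\GL(2,q)$, which is the dual-unitarity claim.

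For the perfect-tensor claim, the remaining balanced bipartition is $\{\mathrm{in}_1,\mathrm{out}_2\}\,|\,\{\mathrm{in}_2,\mathrm{out}_1\}$, which corresponds to the partial transpose. An element supported on $\{\mathrm{in}_1,\mathrm{out}_2\}$ requires $v_2=0$ and $Av_1=0$; one supported on the complement requires $v_1=0$ and $Dv_2=0$. The condition is therefore $A,D$ invertible. Since all three balanced bipartitions of four parties are now covered, all four blocks are invertible iff every $2|2$ cut is maximally entangled. Unbalanced $1|3$ cuts are then automatic: a single leg is maximally mixed whenever any $2|2$ cut containing it is maximal, by monotonicity of the stabilizer rank count. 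This gives $\operatorname{AME}(4,q)$, i.e.\ a perfect tensor and $2$-unitarity.

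The main obstacle is the bookkeeping in the first step: getting the Choi stabilizer group right over $\F_q$, including the transpose sign and the trace-symplectic pairing in characteristic $p$. One also has to confirm that the "count local stabilizers" criterion applies verbatim for prime-power $q$, where the group is an $\F_p$-space but our subspaces are $\F_q$-linear. I would handle this by noting that the entropy equals $\log q$ times the $\F_q$-codimension count. Projecting $\mathcal L_S$ onto a side of dimension $4$ over $\F_q$, $\mathcal L_S$ has $\F_q$-dimension $4$, and the local subgroup $\mathcal L_S\cap W_X$ has dimension $4-\dim\pi_{X^c}(\mathcal L_S)$. Maximal entropy $2\log q$ holds iff this intersection is trivial, with $\F_q$-linearity ensuring dimensions scale uniformly by $m$ over $\F_p$.
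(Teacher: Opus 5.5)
Your proposal is correct and follows essentially the same route as the paper: both pass to the Choi stabilizer state, invoke the Fattal et al.\ local-stabilizer entropy criterion, and identify the nontrivial stabilizers on the two sides of the cut $\{\mathrm{in}_1,\mathrm{out}_1\}|\{\mathrm{in}_2,\mathrm{out}_2\}$ with $\ker C$ and $\ker B$, and those on the remaining balanced cut with $\ker A$ and $\ker D$. Your explicit Lagrangian $\mathcal L_S$, the $\F_q$-dimension count, and the remark on $1|3$ cuts make explicit bookkeeping that the paper leaves implicit.
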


\begin{proof}
Consider the balanced spacetime bipartition for which the first output
site is grouped with the first input site.  A Weyl label supported on
the second input site has the form \((0,y)\), with \(y\in\F_q^2\), and
under the symplectic action it is mapped to
\[
S(0,y)=(By,Dy).
\]
Its image is supported entirely on the second output site if and only if
\(By=0\).  Hence a nonidentity Weyl operator of this form exists exactly
when \(\ker B\neq\{0\}\).  In the Choi stabilizer state, such an operator
gives a nontrivial stabilizer supported wholly on one side of the
corresponding balanced bipartition.  By the stabilizer entropy
criterion, that bipartition is maximally entangled if and only if no
such nontrivial local stabilizer exists.  Therefore maximal
entanglement across this cut is equivalent to \(B\in\GL(2,q)\).

Applying the same argument to a Weyl label supported on the first input
site shows that maximal entanglement across the opposite spacetime cut
is equivalent to \(C\in\GL(2,q)\).  Thus the gate is dual-unitary if and
only if
\[
B,C\in\GL(2,q).
\]

For the remaining inequivalent balanced bipartition, the corresponding
local stabilizers are parametrized by \(\ker A\) and \(\ker D\).
Maximal entanglement across this cut is therefore equivalent to
\(A,D\in\GL(2,q)\).  Hence the Choi state is maximally entangled across
all three balanced bipartitions, and thus is a perfect tensor, if and
only if all four blocks \(A,B,C,D\) are invertible.
\end{proof}

\begin{corollary}
\label{cor:blockentropies}
Let \(\lvert S\rangle\) be the normalized four-qudit Choi state
associated with
\[
S=\begin{pmatrix}A&B\\C&D\end{pmatrix}\in\Sp(4,q),
\]
and measure entropy in \(q\)-ary units,
\[
\mathsf E_q(X|X^c)
=
\frac{S(\rho_X)}{\log q}.
\]
For the two nontrivial balanced spacetime cuts, the entropies are
\[
\mathsf E_q^{(B)}=\rank B=\rank C,
\qquad
\mathsf E_q^{(A)}=\rank A=\rank D.
\]
The input--output bipartition always has entropy \(2\).  The three ordered local core types have balanced-cut entropy profiles,
up to the ordering of the cuts,
\[
S_\delta:\ (2,2,2),
\qquad
S_{\mathsf R}:\ (2,2,1),
\qquad
S_{\mathsf S}:\ (2,2,0).
\]
\end{corollary}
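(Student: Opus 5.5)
The plan is to compute the entropy of each bipartition of the four-qudit Choi stabilizer state \(\lvert S\rangle\) by counting the stabilizers supported on one side, using the stabilizer entropy formula of Fattal et al. For a pure stabilizer state on \(n\) qudits with a stabilizer group of size \(q^{2n}\) (as an \(\F_q\)-space of dimension \(n\) per qudit pair, here dimension \(4\) inside \(\F_q^{8}\)), the entropy across \(X|X^c\) in \(q\)-ary units is
\[
\mathsf E_q(X|X^c)=\lvert X\rvert-\dim_{\F_q}\mathcal S_X,
\]
where \(\mathcal S_X\) is the subspace of stabilizer labels supported on \(X\). The stabilizer labels of \(\lvert S\rangle\) are the pairs \((\bar v,Sv)\), \(v\in\F_q^4\), with \(\bar v\) the conjugated input label. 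Here \(\mathcal S_X\) is an \(\F_q\)-subspace, since \(S\) is \(\F_q\)-linear.

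For the cut grouping input site \(2\) with output site \(2\), exactly as in the proof of Lemma~\ref{lem:blockcriteria}, the stabilizers supported on that side are the vectors \(v=(0,y)\) with \(By=0\). So \(\dim\mathcal S_X=\dim\ker B=2-\rank B\), and the entropy is \(\rank B\). Counting instead the stabilizers on the complementary side \(\{\mathrm{in}_1,\mathrm{out}_1\}\) gives \(v=(x,0)\) with \(Cx=0\), hence entropy \(\rank C\). Since a pure state has equal entropies on complementary sides, \(\rank B=\rank C\) follows automatically. As a cross-check, I will also derive this directly: the symplectic relations \(S^T\Omega S=\Omega\), combined with \eqref{eq:detidentity}, give \(\det A+\det C=1\) and analogous block identities, and these force the ranks to agree. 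The third cut, input site \(1\) with output site \(2\), is handled the same way with \(\ker A\) and \(\ker D\), giving \(\mathsf E_q^{(A)}=\rank A=\rank D\).

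For the input--output cut, the reduced state on the inputs is maximally mixed, because \(\lvert\Phi_{q^2}\rangle\) is maximally entangled and \(U\) acts only on the output. The entropy is therefore \(2\). Equivalently, \(v\mapsto Sv\) is injective, so no nonzero stabilizer lives on the input side alone.

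The profiles then follow by reading off ranks from the core representatives fixed in Section~\ref{sec:cores}. For \(S_\delta\), all blocks are invertible, giving \((2,2,2)\). For the rank-one core \(S_{\mathsf R}\), \(B,C\) are invertible and \(\rank A=1\), giving \((2,2,1)\). For the SWAP core \(S_{\mathsf S}\), \(A=D=0\), giving \((2,2,0)\). Ordered local equivalence multiplies blocks by invertible one-qudit matrices, which preserves ranks, so the profiles are class invariants.

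The main obstacle is bookkeeping rather than mathematics: I have to check that the conjugation convention on the input labels (transpose or complex-conjugate Weyl labels) does not change the kernel dimensions, and that the counting formula is correctly normalized for \(\F_q\)-dimensions rather than \(\F_p\)-dimensions. Because the stabilizer group is an \(\F_q\)-space, the \(\F_q\)-dimension count divided into \(q\)-ary units gives an integer rank, and this is the step I will verify carefully.
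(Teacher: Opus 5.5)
Your proposal is correct and follows essentially the same route as the paper: the finite-field stabilizer entropy count \(\mathsf E_q=2-\dim_{\F_q}\mathcal S_X\), side-local stabilizers parametrized by \(\ker B,\ker C\) (respectively \(\ker A,\ker D\)), equality of entropies on complementary sides, and maximal entanglement of the input--output cut, with your explicit rank read-off for \(S_\delta,S_{\mathsf R},S_{\mathsf S}\) supplying the profile step that the paper leaves implicit. One small slip: a pure stabilizer state on \(n\) qudits has \(q^{n}\) (not \(q^{2n}\)) stabilizers modulo phase, although your concrete count of a \(4\)-dimensional subspace of \(\F_q^{8}\) is the correct one.
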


\begin{proof}
For a balanced bipartition with two qudits on each side, the
finite-field stabilizer entropy formula gives
\[
\mathsf E_q(X|X^c)
=
2-\dim_{\F_q}\mathcal S_X,
\]
where \(\mathcal S_X\) is the subgroup of Choi stabilizers supported
entirely on \(X\)
\cite[Result~1 and Eqs.~(6) and~(8)]{fattal2004entanglement}.
For the spacetime cut detected by \(B\), these local stabilizers are
parametrized by labels \(v\in\ker B\).  Hence
\[
\dim_{\F_q}\mathcal S_X
=
\dim_{\F_q}\ker B
=
2-\rank B,
\]
which gives \(\mathsf E_q^{(B)}=\rank B\).  Applying the same argument
to the opposite side gives \(\mathsf E_q^{(B)}=\rank C\).  The remaining
balanced spacetime cut similarly gives
\(\mathsf E_q^{(A)}=\rank A=\rank D\).  Finally, the normalized Choi vector of a unitary operator is obtained
from the canonical maximally entangled vector by applying the unitary
to one tensor factor.  Hence the input--output cut is maximally
entangled and has entropy \(2\) in \(q\)-ary units
\cite[Example~2.8, pp.~68--69, and Sec.~2.2.3, pp.~91--92,
Eq.~(2.143)]{watrous2018theory}.
\end{proof}

\subsection{Brickwork circuits and masking distances}

Let $C$ be a two-qudit Clifford gate.  One Floquet period on the infinite chain is
\begin{equation}
 \cU_C=
 \left(\prod_{n\in\mathbb Z}C_{2n+1,2n+2}\right)
 \left(\prod_{n\in\mathbb Z}C_{2n,2n+1}\right),
 \label{eq:brickwork}
\end{equation}
where the right layer acts first.  A finite Weyl string $P$ has weight $\wt(P)$ and contiguous span $\ell(P)$, the length of the smallest interval containing its support.

\begin{definition}
For $r\ge1$ and $t\ge0$, define
\begin{equation}
 d_r^C(t)=
 \min_{\substack{P\ne I\\
 \wt(\cU_C^tP\cU_C^{-t})\le r}}
 \ell(P).
 \label{eq:dr}
\end{equation}
Equivalently, $d_r^C(t)$ is the smallest span of the backward image of a nonidentity output Weyl string of weight at most $r$.
\end{definition}

\begin{theorem}
\label{thm:referenceprivacy}
Work on any finite chain containing the backward causal cone of the observed output set.  Let $M$ be a contiguous input interval and define the mixed-state encoding
\begin{equation}
 \cE_{t,M}(\rho_M)
 =
 \cU_C^t
 \left(
 \rho_M\otimes\frac{I_{M^c}}{q^{|M^c|}}
 \right)
 \cU_C^{-t}.
 \label{eq:privateencoder}
\end{equation}
For an output set $A$, let
\[
 \cN_{A\leftarrow M}^{(t)}
 =\tr_{A^c}\circ\cE_{t,M}.
\]
If $|M|<d_r^C(t)$ and $|A|\le r$, then $\cN_{A\leftarrow M}^{(t)}$ is completely depolarizing,
\begin{equation}
 \cN_{A\leftarrow M}^{(t)}(X)
 =\tr(X)\frac{I_A}{q^{|A|}}
 \qquad\text{for every operator }X\text{ on }M.
 \label{eq:depolarizingobserver}
\end{equation}
Equivalently, for every reference system $R$ and every state $\rho_{RM}$,
\begin{equation}
 (\id_R\otimes\cN_{A\leftarrow M}^{(t)})(\rho_{RM})
 =
 \rho_R\otimes\frac{I_A}{q^{|A|}}.
 \label{eq:referenceprivacy}
\end{equation}
The set $A$ may be disconnected.
\end{theorem}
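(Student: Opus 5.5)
We work in the Heisenberg picture and expand the reduced output state in the Weyl basis on $A$. The aim is to show that every non-identity Weyl coefficient vanishes, which gives \eqref{eq:depolarizingobserver}. The reference statement \eqref{eq:referenceprivacy} then follows by linearity.

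Fix $X$ on $M$ and write $\sigma_A=\cN^{(t)}_{A\leftarrow M}(X)$. Since the Weyl operators on $A$ form an orthogonal operator basis with $\tr(Q^\dagger Q')=q^{|A|}\delta_{QQ'}$ (up to phase), we have
\[
\sigma_A=\frac{1}{q^{|A|}}\sum_{Q\text{ on }A}\tr(Q^\dagger\sigma_A)\,Q .
\]
Each coefficient can be rewritten on the full chain:
\[
\tr(Q^\dagger\sigma_A)=\tr\!\Big((\cU_C^{-t}Q^\dagger\cU_C^{t})\,\big(X\otimes I_{M^c}/q^{|M^c|}\big)\Big).
\]
We may take the chain finite, since it contains the backward causal cone of $A$, so all traces are finite. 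The conjugation is then well defined because brickwork Clifford gates map Weyl strings to Weyl strings. For $Q=I$ the coefficient is $\tr X$, which produces the term $\tr(X)I_A/q^{|A|}$.

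Now take $Q\neq I$ supported in $A$. Let $P=\cU_C^{-t}Q\cU_C^{t}$, which is a Weyl string up to phase. Then $\cU_C^tP\cU_C^{-t}=Q$ has weight at most $|A|\le r$, and $P\ne I$. By the definition \eqref{eq:dr}, $\ell(P)\ge d_r^C(t)>|M|$. So the support of $P$ cannot lie inside the interval $M$, and $P$ acts by some non-identity one-site Weyl factor on at least one site outside $M$.

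This is the key step. Because $M$ is contiguous and has length $|M|<\ell(P)$, the smallest interval containing $\operatorname{supp}P$ is strictly longer than $M$, so $\operatorname{supp}P\not\subseteq M$. Write $P=P_M\otimes P_{M^c}$ with $P_{M^c}\ne I$. Then
\[
\tr(P^\dagger(X\otimes I/q^{|M^c|}))=\tr(P_M^\dagger X)\,\tr(P_{M^c}^\dagger)/q^{|M^c|}=0,
\]
since non-identity Weyl operators are traceless. Hence every non-identity coefficient vanishes, and $\sigma_A=\tr(X)I_A/q^{|A|}$. Disconnectedness of $A$ is never used, because only $\wt(Q)\le r$ enters.

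For \eqref{eq:referenceprivacy}, expand $\rho_{RM}=\sum_i R_i\otimes X_i$ and apply $\id_R\otimes\cN$ termwise. This gives $\sum_i R_i\tr(X_i)\otimes I_A/q^{|A|}=\rho_R\otimes I_A/q^{|A|}$. Conversely, \eqref{eq:referenceprivacy} specializes to \eqref{eq:depolarizingobserver} by taking $R$ trivial and extending linearly from states to all operators.

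The main obstacle is the bookkeeping of phases: conjugating Weyl operators produces Weyl operators only up to phase, and the finite-chain truncation must not alter the backward image. Both are harmless, since phases do not affect whether a trace vanishes, and the causal-cone hypothesis makes the backward image agree with the infinite-chain one.
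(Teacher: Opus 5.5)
Your proposal is correct and follows essentially the same argument as the paper. Both expand in the Weyl basis on $A$, use the definition of $d_r^C(t)$ to force the backward image of each nonidentity Weyl operator onto the maximally mixed pad $M^c$, and conclude by tracelessness; the only difference is that the paper tests directly against $X_R\otimes W_A$ to obtain the reference statement first and deduces the operator identity by linearity, whereas you prove the operator identity first and then expand $\rho_{RM}$.
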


\begin{proof}
It is enough to test the reduced state against operators $X_R\otimes W_A$, where $X_R$ is arbitrary and $W_A$ ranges over the Weyl basis on $A$.  The identity $W_A=I_A$ gives the correct marginal $\rho_R$.  Suppose $W_A\ne I_A$.  Its weight is at most $r$, so the backward image
\[
 \cU_C^{-t}W_A\cU_C^t
\]
has contiguous span at least $d_r^C(t)$.  Since $|M|<d_r^C(t)$, this Weyl string cannot be supported entirely in $M$.  It acts nontrivially on at least one site of the maximally mixed pad $M^c$, and its trace against that pad is therefore zero.  Hence every nonidentity Weyl coefficient on $A$ vanishes, independently of $X_R$ and $\rho_{RM}$.  This proves \eqref{eq:referenceprivacy}, and the operator identity \eqref{eq:depolarizingobserver} follows by linearity.
\end{proof}

\begin{corollary}
\label{cor:fulldecoder}
The complete output permits exact recovery of the input state.  Applying $\cU_C^{-t}$ to \eqref{eq:privateencoder} returns
\[
 \rho_M\otimes\frac{I_{M^c}}{q^{|M^c|}},
\]
so tracing out $M^c$ recovers $\rho_M$ exactly.  No purification of the pad is required by the decoder.
\end{corollary}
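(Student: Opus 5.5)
The plan is to obtain Corollary~\ref{cor:fulldecoder} directly from the definition \eqref{eq:privateencoder} and the unitarity of one Floquet period, working on the same finite chain as in Theorem~\ref{thm:referenceprivacy}. I would not use the masking distances at all; the statement concerns only the full output.

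First, I would note that $\cU_C$ is a finite product of two-qudit gates $C_{j,j+1}$, so it is unitary. Then $\cU_C^{t}$ is unitary with inverse $\cU_C^{-t}$. On a finite chain, the brickwork \eqref{eq:brickwork} must be truncated or taken with periodic boundary conditions. Either way it remains a finite product of unitaries, so the inverse is the reversed product of the $C^{-1}$ layers. This is also a Clifford circuit, which matters if one wants the decoder to be implementable.

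Second, I would apply the channel $Y\mapsto \cU_C^{-t}Y\cU_C^{t}$ to \eqref{eq:privateencoder}. The conjugations cancel and leave exactly $\rho_M\otimes I_{M^c}/q^{|M^c|}$. The partial trace over $M^c$ then gives $\rho_M\cdot\tr(I_{M^c})/q^{|M^c|}=\rho_M$.

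Third, for the reference-system version, I would apply $\id_R\otimes(\cdot)$ by linearity. This shows that $\tr_{M^c}\circ\,\mathrm{Ad}_{\cU_C^{-t}}\circ\cE_{t,M}=\id_M$ as a channel, so entanglement with $R$ is also recovered exactly. The decoder acts only on the output and never references a purification of the pad, which justifies the final sentence of the statement.

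None of these steps poses a real obstacle. The only point needing care is the finite-chain convention: the causal-cone truncation used in Theorem~\ref{thm:referenceprivacy} must still be a unitary acting on the whole register. I would handle this by fixing the chain and applying the full brickwork on it, rather than a light-cone-restricted partial circuit.
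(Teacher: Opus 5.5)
Your proposal is correct and follows the same route as the paper, which simply conjugates the encoder output by $\cU_C^{-t}$ to cancel the unitary and then traces out the maximally mixed pad. Your extra remarks, on extending to a reference system by linearity and on applying the full brickwork on a fixed finite chain so that the decoder inverts exactly the applied circuit, are harmless refinements of that one-line argument.
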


\begin{remark}
The maximally mixed pad must be independent of the message.  If it is prepared by purification into an auxiliary system $E$, the theorem assumes that $E$ is discarded, retained by the encoder, or otherwise inaccessible to the observer.  The claim does not cover an adversary who receives both $A$ and the pad purification.
\end{remark}

A depth-$2t$ nearest-neighbor circuit gives the elementary causal ceilings
\begin{equation}
 d_1(t)\le4t,
 \qquad
 d_2(t)\le4t-2.
 \label{eq:causalceilings}
\end{equation}
The second bound is obtained by choosing a two-site output that is the image, under the final two-qudit gate, of a one-site Weyl operator.

\section{Ordered local-core classification}
\label{sec:cores}

The local classification of a single gate is coarser than the transport classification of the repeated circuit.  Under ordered local finite-field Clifford transformations, the five transport phases reduce to three core types.  The different rank-one transport phases arise from the way the local factors are repeated in the homogeneous wiring.

\begin{definition}
Let
\[
 \cL_q=\SL(2,q)^4.
\]
For $P,Q,R,T\in\SL(2,q)$ define
\begin{equation}
 (P,Q;R,T)\cdot S
 =
 \begin{pmatrix}P&0\\0&Q\end{pmatrix}
 S
 \begin{pmatrix}R&0\\0&T\end{pmatrix}^{-1}.
 \label{eq:localaction}
\end{equation}
This is the symplectic quotient of post- and pre-composition by ordered one-qudit finite-field Clifford gates.
\end{definition}

Let
\[
 \cD_q=
 \left\{
 S=\begin{pmatrix}A&B\\C&D\end{pmatrix}\in\Sp(4,q):
 B,C\in\GL(2,q)
 \right\}
\]
be the dual-unitary locus.  For $a\in\F_q^\times$ put
\[
 K_a=\begin{pmatrix}a&0\\0&1\end{pmatrix},
 \qquad
 E=\begin{pmatrix}1&0\\0&0\end{pmatrix},
 \qquad
 F=JEJ=\begin{pmatrix}0&0\\0&-1\end{pmatrix}.
\]
For $\delta\in\F_q\setminus\{0,1\}$ define
\begin{equation}
 S_\delta=
 \begin{pmatrix}
 K_{1-\delta}&K_\delta\\[1mm]
 K_\delta&D_\delta
 \end{pmatrix},
 \qquad
 D_\delta=
 \begin{pmatrix}
 -\delta&0\\[1mm]
 0&\dfrac{\delta-1}{\delta}
 \end{pmatrix},
 \label{eq:canonicalcore}
\end{equation}
while
\begin{equation}
 S_{\mathsf R}=
 \begin{pmatrix}E&I\\ I&F\end{pmatrix},
 \qquad
 S_{\mathsf S}=
 \begin{pmatrix}0&I\\ I&0\end{pmatrix}.
 \label{eq:rankoneswapcores}
\end{equation}
All these matrices are symplectic.  The first family consists of perfect tensors, $S_{\mathsf R}$ has rank pattern $(1,2,2,1)$, and $S_{\mathsf S}$ has rank pattern $(0,2,2,0)$.

\begin{theorem}
\label{thm:fullcores}
The ordered local quotient of the dual-unitary locus is
\begin{equation}
 \boxed{
 \cD_q/\cL_q
 \cong
 \bigl(\F_q\setminus\{0,1\}\bigr)
 \sqcup\{\mathsf R,\mathsf S\}.
 }
 \label{eq:qcores}
\end{equation}
Equivalently, there are exactly $q$ ordered local finite-field Clifford cores:
\begin{enumerate}[label=(\roman*)]
\item one perfect core $S_\delta$ for each $\delta\in\F_q\setminus\{0,1\}$;
\item one rank-one core $S_{\mathsf R}$ containing every class with $A\ne0$ singular;
\item one SWAP core $S_{\mathsf S}$ containing every class with $A=0$.
\end{enumerate}
The three types are disjoint because the rank of $A$ is invariant under \eqref{eq:localaction}; within the perfect type, $\delta=\det B=\det C$ is a complete invariant.
\end{theorem}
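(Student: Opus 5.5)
The plan is to write out the symplectic conditions blockwise, normalize $B$ and $C$ to the identity with the local action, and then classify the residual action on $A$ and $D$. For $S\in\Sp(4,q)$ with blocks $A,B,C,D$, the condition $S^T\Omega S=\Omega$ with $\Omega=J\oplus J$ reads
\[
A^TJA+C^TJC=J,\qquad B^TJB+D^TJD=J,\qquad A^TJB+C^TJD=0.
\]
By \eqref{eq:detidentity}, the first two become $\det A+\det C=1$ and $\det B+\det D=1$.

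The local action \eqref{eq:localaction} sends $A\mapsto PAR^{-1}$, $B\mapsto PBT^{-1}$, $C\mapsto QCR^{-1}$, $D\mapsto QDT^{-1}$ with $P,Q,R,T\in\SL(2,q)$. This preserves every rank and every determinant. In particular $\rank A$ is invariant, which separates the three types, and $\det B$, $\det C$ are invariant.

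From the off-diagonal relation with $B,C$ invertible, I get $D=-J^{-1}C^{-T}A^TJB$. Taking determinants gives $\det D=\det A\cdot\det B/\det C$. Substituting into $\det B+\det D=1$ and using $\det A=1-\det C$ yields $\det B=\det C$ after a short computation. Call this common value $\delta\neq0$; then $\det A=\det D=1-\delta$.

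Next I normalize. Choose $T$ with $PBT^{-1}$ equal to $K_\delta$, i.e.\ $T=K_\delta^{-1}PB$, which has determinant $1$. In the same way choose $R$ so that $C\mapsto K_\delta$. The residual freedom fixing both $B=C=K_\delta$ consists of $P$ and $Q$ with $T=K_\delta^{-1}PK_\delta$ and $R=K_\delta^{-1}QK_\delta$. Under it, $A\mapsto PAK_\delta^{-1}Q^{-1}K_\delta$, so $A' = AK_\delta^{-1}$ transforms as $PA'Q^{-1}$, a two-sided $\SL_2\times\SL_2$ action. The relation above then forces $D$ to be determined by $A$.

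The three cases follow from this two-sided action.
\begin{enumerate}[label=(\roman*)]
\item If $\delta\neq1$, then $A'$ is invertible with $\det A'=(1-\delta)/\delta$. The two-sided $\SL_2$ action is transitive on matrices of fixed determinant, so $A$ can be brought to $K_{1-\delta}$ and $D$ is forced to be $D_\delta$. Hence $\delta$ is a complete invariant and these classes are the perfect cores.
\item If $\delta=1$, then $\det A=0$. Nonzero rank-one matrices form a single orbit under $\SL_2\times\SL_2$, since any $uv^T$ can be moved to $e_1e_1^T$ using transitivity of $\SL_2$ on nonzero vectors. This gives the unique core $S_{\mathsf R}$, with $D$ forced to be $F$ after normalizing $B=C=I$.
\item If $A=0$, then $D=0$ and we obtain $S_{\mathsf S}$.
\end{enumerate}

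Finally, I check that each listed representative is actually symplectic by direct substitution; the paper already asserts this. This confirms the count $(q-2)+1+1=q$.

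The step I expect to be the main obstacle is the bookkeeping of the residual stabilizer: verifying that, after fixing $B$ and $C$, the action on $A$ is genuinely the full two-sided $\SL_2\times\SL_2$ action, rather than a twisted subgroup that is coupled through $K_\delta$. I also need to handle characteristic $2$ uniformly, where the identity \eqref{eq:detidentity} carries the argument.
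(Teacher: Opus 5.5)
Your proposal is correct and is essentially the paper's argument: the same invariants $\rank A$ and $\det B$, the same normal forms $S_\delta$, $S_{\mathsf R}$, $S_{\mathsf S}$, and $D$ forced by the off-diagonal symplectic equation. The only difference is the order: the paper normalizes $A$ first using $(P,R)$, so that $T$ and $Q$ stay completely free and each enters only one of $B$, $C$, which avoids the residual-stabilizer bookkeeping. That bookkeeping is in any case already done correctly in your text, since $A'=AK_\delta^{-1}\mapsto PA'Q^{-1}$ with $P,Q\in\SL(2,q)$ independent, so the obstacle you flagged does not arise, and nothing in the argument depends on the characteristic beyond the identity $X^TJX=(\det X)J$.
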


\begin{proof}
Under \eqref{eq:localaction}, $A$ transforms to $PAR^{-1}$ and $B$ to $PBT^{-1}$.  Hence $\rank A$ and $\det B$ are invariant.

Suppose first that $A$ is invertible.  By the determinant parameterization proved in \Cref{lem:detparam},
\[
 \det A=1-\delta,
 \qquad
 \det B=\det C=\delta,
\]
with $\delta\notin\{0,1\}$.  Set
\[
 P=K_{1-\delta}A^{-1},
 \qquad R=I,
 \qquad T=K_\delta^{-1}PB,
 \qquad Q=K_\delta C^{-1}.
\]
All four matrices lie in $\SL(2,q)$, and the transformed first three blocks are $K_{1-\delta},K_\delta,K_\delta$.  The symplectic equations uniquely determine the fourth block, giving $S_\delta$.  Distinct values of $\delta$ cannot be equivalent.

Now assume $A$ is nonzero singular.  Then $\delta=1$, and $A$ and $D$ both have rank one.  The left-right action of $\SL(2,q)\times\SL(2,q)$ is transitive on nonzero rank-one matrices, so choose $P,R$ with
\[
 PAR^{-1}=E.
\]
Set
\[
 T=PB,
 \qquad
 Q=RC^{-1}.
\]
Since $\det B=\det C=1$, both $T$ and $Q$ lie in $\SL(2,q)$, and the transformed cross blocks are the identity.  Symplecticity then forces the fourth block to be $F=JEJ$.  Thus every nonzero rank-one class is equivalent to $S_{\mathsf R}$.

Finally, if $A=0$, then the off-diagonal symplectic equation and invertibility of $B,C$ force $D=0$.  The same choices $T=PB$ and $Q=RC^{-1}$ normalize the two cross blocks to the identity, giving $S_{\mathsf S}$.
\end{proof}

\begin{theorem} 
\label{thm:allstabilizers}
Let
\[
h=|\SL(2,q)|=q(q^2-1),
\]
where the equality follows from the standard order formula for
\(\SL(m,q)\)
\cite[Ch.~4, p.~19]{taylor1992geometry}.
The three core types have the following stabilizers and ordered
local-orbit sizes.

\[
\begin{array}{c|c|c|c}
\text{core}&\text{number of core parameters}&|\operatorname{Stab}|&|\operatorname{Orb}|\\
\hline
S_\delta&q-2&h&h^3\\
S_{\mathsf R}&1&q^2(q-1)&h^2(q-1)(q+1)^2\\
S_{\mathsf S}&1&h^2&h^2.
\end{array}
\]
Summing the ordered local orbits reproduces the full dual-unitary count
\[
 (q-2)h^3+h^2(q-1)(q+1)^2+h^2
 =h^2q(q^3-q^2+1).
\]
\end{theorem}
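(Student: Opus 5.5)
The plan is to apply orbit--stabilizer to the action \eqref{eq:localaction} of $\cL_q=\SL(2,q)^4$, with $|\cL_q|=h^4$. Each stabilizer will be computed directly from the transformed blocks
\[
A'=PAR^{-1},\qquad B'=PBT^{-1},\qquad C'=QCR^{-1},\qquad D'=QDT^{-1}.
\]
The orbit sizes are then $h^4/|\operatorname{Stab}|$, and the final identity is a polynomial check. The number of core parameters in each row is read off from \Cref{thm:fullcores}.

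\emph{Perfect cores.} For $S_\delta$, the stabilizer equations for the first three blocks are $PK_{1-\delta}=K_{1-\delta}R$, $PK_\delta=K_\delta T$ and $QK_\delta=K_\delta R$. Given any $P\in\SL(2,q)$, these force
\[
R=K_{1-\delta}^{-1}PK_{1-\delta},\qquad T=K_\delta^{-1}PK_\delta,\qquad Q=K_\delta RK_\delta^{-1}.
\]
Each of these is a conjugate of an element of $\SL(2,q)$ and so lies in $\SL(2,q)$. The fourth block $D_\delta$ is then preserved automatically. Indeed, the transformed matrix is symplectic, and by the uniqueness step in the proof of \Cref{thm:fullcores} its first three blocks determine the fourth. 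This is the one point I expect to need care: I would make the uniqueness explicit by solving the symplectic relation for $D$, using $B,C$ invertible together with \eqref{eq:detidentity}. Hence $\operatorname{Stab}(S_\delta)\cong\SL(2,q)$, so $|\operatorname{Stab}|=h$ and $|\operatorname{Orb}|=h^3$.

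\emph{SWAP core.} For $S_{\mathsf S}$ the identity cross blocks force $T=P$ and $Q=R$, while the zero blocks impose no condition. Thus $P$ and $Q$ are free, giving $|\operatorname{Stab}|=h^2$ and $|\operatorname{Orb}|=h^2$.

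\emph{Rank-one core.} For $S_{\mathsf R}$ the identity cross blocks again give $T=P$ and $Q=R$. The remaining conditions are $PE=ER$ and $RF=FP$. Comparing entries, $PE=ER$ gives $P_{21}=0$, $R_{12}=0$ and $P_{11}=R_{11}$. The condition $RF=FP$ gives $R_{12}=0$, $P_{21}=0$ and $R_{22}=P_{22}$. Hence
\[
P=\begin{pmatrix}a&b\\0&a^{-1}\end{pmatrix},\qquad R=\begin{pmatrix}a&0\\c&a^{-1}\end{pmatrix},
\]
with $a\in\F_q^\times$ and $b,c\in\F_q$. This gives $|\operatorname{Stab}|=q^2(q-1)$, and therefore
\[
|\operatorname{Orb}|=\frac{h^4}{q^2(q-1)}=h^2(q-1)(q+1)^2,
\]
using $h^2=q^2(q^2-1)^2$.

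Finally, by \Cref{thm:fullcores} the orbits partition $\cD_q$, so their sizes sum to $|\cD_q|$. Dividing the sum by $h^2$ leaves
\[
(q-2)q(q^2-1)+(q-1)(q+1)^2+1=q^4-q^3+q=q(q^3-q^2+1),
\]
which is the stated total.
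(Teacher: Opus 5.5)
Your proposal is correct and follows essentially the same route as the paper: orbit--stabilizer for the action of $\cL_q$ (of order $h^4$), with the stabilizers read off from the block equations. This means the parametrization by $P$ for $S_\delta$; $T=P$, $Q=R$ together with $PE=ER$ and $RF=FP$ forcing the triangular forms for $S_{\mathsf R}$; and $T=P$, $Q=R$ with $P,R$ free for $S_{\mathsf S}$. Your extra check that the fourth block $D_\delta$ is automatically preserved, via uniqueness of $D$ from the symplectic relations, and your explicit verification of the polynomial identity fill in steps the paper leaves implicit.
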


\begin{proof}
For $S_\delta$, the stabilizer is parametrized by $P\in\SL(2,q)$ through
\[
 R=K_{1-\delta}^{-1}PK_{1-\delta},
 \qquad
 T=K_\delta^{-1}PK_\delta,
 \qquad
 Q=K_\delta RK_\delta^{-1},
\]
so it has size $h$.

For $S_{\mathsf R}$, the block equations imply $T=P$, $Q=R$, and
\[
 PE=ER,
 \qquad
 RF=FP.
\]
Writing the matrices explicitly gives
\[
 P=\begin{pmatrix}a&b\\0&a^{-1}\end{pmatrix},
 \qquad
 R=\begin{pmatrix}a&0\\u&a^{-1}\end{pmatrix},
\]
with $a\in\F_q^\times$ and $b,u\in\F_q$.  Thus the stabilizer has size $q^2(q-1)$.

For $S_{\mathsf S}$, the only conditions are $T=P$ and $Q=R$, with $P,R\in\SL(2,q)$ arbitrary, so the stabilizer has size $h^2$.  Orbit--stabilizer gives the stated orbit sizes.
\end{proof}

\begin{corollary}
\label{cor:corevsphase}
The perfect cores $S_\delta$ form the immediate-erasure stratum, and the SWAP core forms the SWAP-transport stratum.  The unique rank-one core $S_{\mathsf R}$ contains all four rank-one edge-incidence patterns: delayed bilateral erasure, left-only gliding, right-only gliding, and two-edge gliding.  Therefore these four wiring patterns are not invariants of an isolated dual-unitary gate under ordered local Clifford equivalence.
\end{corollary}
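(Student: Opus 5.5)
The plan is to compute the one-site edge channels of the homogeneous brickwork circuit directly from the blocks of $S$, and then read off each stratum. In the symplectic picture, a one-site Weyl label $v\in\F_q^2$ sitting on the left input of a gate is sent to $(Av, Cv)$ on the two outputs, and a label on the right input is sent to $(Bw, Dw)$. For a one-site operator riding a light-cone edge, only the component that stays on the edge survives after each layer, so the edge maps are essentially $D$ and $A$ (up to the ordering conventions of \eqref{eq:brickwork}). After one Floquet period the right-moving edge map is a product of the form $D\,D$ or $A\,D$ or similar, depending on which input slot the operator occupies in the next brick. In homogeneous wiring the same gate is reused, so the period map on an edge is a product of two diagonal-type blocks. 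I would fix the convention first and write the two edge maps explicitly as $M_+ = X_+Y_+$ and $M_- = X_-Y_-$, with $X_\pm,Y_\pm\in\{A,D\}$.

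For the perfect cores $S_\delta$, all blocks are invertible by \Cref{lem:blockcriteria}. The claim that this is the immediate-erasure stratum has to be understood as follows: every one-site operator spreads to weight $2$ after one gate and never returns to weight $1$ on an edge. Concretely, the images $Av$ and $Cv$ are both nonzero, so no one-site label stays one-site. This gives the immediate erasure of single-site edge transport.

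For the SWAP core, $A=D=0$, so $v\mapsto(0,Cv)$ with $C$ invertible. Every one-site label is translated intact along the light cone, which is SWAP transport.

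For the rank-one core, the key point is that the local factors $P,Q,R,T$ do not commute with the homogeneous wiring. An isolated gate $S$ in the orbit of $S_{\mathsf R}$ sits inside a circuit whose edge period maps are products like $D'A'$, where $A'=P^{-1}ER$-type rank-one matrices. Such a product is either nilpotent or has a nonzero eigenvalue, according to whether the image of one rank-one factor lies in the kernel of the other. I would exhibit four explicit elements of the orbit $\cL_q\cdot S_{\mathsf R}$, each obtained by choosing $(P,Q;R,T)$ to rotate the lines $\operatorname{im}A,\ker A,\operatorname{im}D,\ker D$. These realize the four incidence patterns:
\begin{enumerate}[label=(\alph*)]
\item both edge products nilpotent (delayed erasure);
\item only the left product nonnilpotent (left-only gliding);
\item only the right product nonnilpotent (right-only gliding);
\item both products nonnilpotent (two-edge gliding).
\end{enumerate}
Since these four gates are ordered-locally equivalent to $S_{\mathsf R}$ by \Cref{thm:fullcores} but produce different transport, the wiring patterns are not invariants of an isolated gate.

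The main obstacle is the incidence computation. The symplectic relations link $D$ to $A$: for $S_{\mathsf R}$ one has $D=JAJ$-like, and in general $D$ is determined by $A,B,C$. The line configurations therefore cannot be chosen independently. I would parametrize $A=uw^T$ and solve for $D$ from $S^T\Omega S=\Omega$, using \eqref{eq:detidentity} to get $D$ in the form $\pm C^{-T}\!J\,w'u'^TJ\,B^{-1}$. The nilpotency conditions then reduce to two scalar bilinear pairings $[\cdot,\cdot]$ between the lines. I would check that each of these pairings can be made zero or nonzero independently by a choice of $P,R$, with the remaining factors $Q,T$ adjusted to keep $B$ and $C$ fixed.
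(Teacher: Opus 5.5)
\textbf{There is a genuine gap in the rank-one part.} Your perfect and SWAP paragraphs are fine, and the closing logic (one ordered local orbit carrying several transport patterns) is the right one. The problem is that you misidentify the edge channels.

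In a dual-unitary Clifford gate, a one-site label can never stay on its own leg. We have $(v,0)\mapsto(Av,Cv)$ with $Cv\neq0$, so the label remains one-site only when the diagonal component vanishes, and the surviving label is carried by the \emph{cross} block. This is \Cref{lem:edgemaps}: $\mathcal T_R$ acts as $x\mapsto B^\# x$ on $\ker A^\#=\operatorname{im}A$, and $\mathcal T_L$ acts as $x\mapsto C^\# x$ on $\ker D^\#=\operatorname{im}D$. The same partial map is iterated at every half-layer. Hence the right edge glides if and only if $B$ fixes $\operatorname{im}A$, and the left edge glides if and only if $C$ fixes $\ker A=J[v]$.

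Your own SWAP paragraph already uses the cross block. Under your stated framework, $A=D=0$ would make SWAP immediate erasure.

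Products of diagonal blocks cannot stand in for this. From \eqref{eq:Dformula} in the $\delta=1$ sector one gets $\ker D=B^{-1}\operatorname{im}A$ and $\operatorname{im}D=C\ker A$. So right gliding is $DA=0$ and left gliding is $AD=0$. Since $\operatorname{tr}(AD)=\operatorname{tr}(DA)$ and both products have rank at most one, $AD$ and $DA$ are nilpotent simultaneously, exactly when at least one edge glides. This has two consequences:
\begin{itemize}
\item Your correspondence is inverted: both products are non-nilpotent precisely in the delayed phase.
\item It cannot separate left-only from right-only gliding.
\end{itemize}
The squares $A^2$ and $D^2$ only see traces, so they do not help either. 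For example, $S_{\mathsf R}$ (two-edge gliding) and $(I,J;I,J)\cdot S_{\mathsf R}$ (delayed erasure, with $A'=E$ and $D'=-E$) both have $A^2$ and $D^2$ non-nilpotent.

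\textbf{Your final step also fails as stated.} In the orbit of $S_{\mathsf R}$ one has $A'=PER^{-1}$, $B'=PT^{-1}$ and $C'=QR^{-1}$. Keeping $B'=C'=I$ forces $T=P$ and $Q=R$. Then $\ker D'=\operatorname{im}A'$ and $\operatorname{im}D'=\ker A'$ for every $P,R$, so every such gate is a two-edge glider. The pairings you want to toggle vanish identically.

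The incidences change only when the cross blocks move relative to the lines of $A'$:
\begin{itemize}
\item the right edge glides if and only if $T^{-1}P$ fixes $[e_1]$;
\item the left edge glides if and only if $R^{-1}Q$ fixes $[e_2]$.
\end{itemize}
With $P=R=I$, the choices $(Q,T)=(I,I),(I,J),(J,I),(J,J)$ give two-edge, left-only, right-only and delayed patterns respectively.

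The paper argues by counting instead. \Cref{thm:fullcores}(ii) places every class with nonzero singular $A$ in the single orbit of $S_{\mathsf R}$. The proof of \Cref{thm:fivephase} shows that the two incidences are independent, with positive counts $Lq^2(q-1)$, $Lq(q-1)$ per orientation, and $L(q-1)$.

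You should also justify the converse in the first sentence of the statement. A rank-one class has $\ker A^\#=\operatorname{im}A\neq0$, so $\mathcal T_R\neq0$. Hence immediate erasure occurs only for the perfect cores.
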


\begin{corollary}
\label{cor:compilation}
For perfect classes,
\[
 S\sim_{\mathrm{ordLC}}S'
 \iff
 \det B=\det B'.
\]
Every dual-unitary symplectic class admits a local-plus-core decomposition using exactly one of $S_\delta,S_{\mathsf R},S_{\mathsf S}$.  Uniform sampling from any fixed ordered local orbit is obtained by choosing $P,Q,R,T$ independently and uniformly from $\SL(2,q)$ and applying \eqref{eq:localaction}; every orbit element has exactly the stabilizer size shown in \Cref{thm:allstabilizers} preimages.
\end{corollary}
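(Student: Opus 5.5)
The corollary has three parts, and I would derive each from \Cref{thm:fullcores} and \Cref{thm:allstabilizers} with no new computation.

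\textbf{Equivalence criterion for perfect classes.} Let $S,S'$ be perfect, so all four blocks are invertible (\Cref{lem:blockcriteria}).

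For the forward direction, suppose $S'=(P,Q;R,T)\cdot S$. Then $B'=PBT^{-1}$ with $P,T\in\SL(2,q)$, so $\det B'=\det B$.

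For the converse, the proof of \Cref{thm:fullcores} supplies explicit local elements carrying $S$ to $S_\delta$ and $S'$ to $S_{\delta'}$, where $\delta=\det B$ and $\delta'=\det B'$. If $\det B=\det B'$, then both classes are carried to the same $S_\delta$. Composing the first element with the inverse of the second gives a local element carrying $S$ to $S'$, since $\cL_q$ is a group acting by \eqref{eq:localaction}.

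\textbf{Local-plus-core decomposition.} This is the orbit statement of \Cref{thm:fullcores} read as a factorization. Every $S\in\cD_q$ satisfies
\[
S=\begin{pmatrix}P&0\\0&Q\end{pmatrix}S_\ast\begin{pmatrix}R&0\\0&T\end{pmatrix}^{-1},
\qquad S_\ast\in\{S_\delta,S_{\mathsf R},S_{\mathsf S}\}.
\]
The core is unique because the three types are separated by $\rank A$, and within the perfect type by $\det B$.

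\textbf{Uniform sampling.} Fix an orbit $\mathcal O=\cL_q\cdot S_\ast$ and consider the orbit map $g\mapsto g\cdot S_\ast$ from $\cL_q$ onto $\mathcal O$. The fiber over $g_0\cdot S_\ast$ is the coset $g_0\operatorname{Stab}(S_\ast)$. Every fiber therefore has cardinality $|\operatorname{Stab}(S_\ast)|$, which is the value listed in \Cref{thm:allstabilizers}. Consequently, a uniform random $g\in\cL_q$ pushes forward to the uniform distribution on $\mathcal O$. Choosing $P,Q,R,T$ independently and uniformly from $\SL(2,q)$ is exactly uniform sampling on $\cL_q=\SL(2,q)^4$, so the sampler is uniform on the orbit.

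The only step needing care is checking that \eqref{eq:localaction} really is a group action. This requires $(g h)\cdot S=g\cdot(h\cdot S)$ with the inverse placed on the right-hand factor, which is immediate from block-diagonal multiplication. Once that holds, the coset description of the fibers and the equivalence criterion both follow.
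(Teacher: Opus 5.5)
Your proposal is correct and takes essentially the paper's route. The paper gives no separate proof and treats this corollary as an immediate consequence of \Cref{thm:fullcores} (invariance of $\det B$ together with the explicit normalizations to $S_\delta$, $S_{\mathsf R}$, $S_{\mathsf S}$) and of orbit--stabilizer applied to the stabilizers in \Cref{thm:allstabilizers}; your coset description of the fibers of $g\mapsto g\cdot S_\ast$ spells out exactly the uniform-sampling claim.
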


\subsection{MDS cross-ratio of the perfect cores}
The canonical perfect core is diagonal in the \(x\)- and \(z\)-phase-space coordinates. Its \(x\)-coordinate graph is the classical \([4,2,3]_q\) MDS code with generator matrix
\begin{equation}
G_{\delta}
=
\begin{pmatrix}
1 & 0 & 1-\delta & \delta \\
0 & 1 & \delta & -\delta
\end{pmatrix}.
\label{eq:MDSgenerator}
\end{equation}
The connection between minimal-support absolutely maximally entangled states and classical MDS codes is described in \cite[Sec.~IV.B]{goyeneche2015absolutely}.

If \(c_1,\ldots,c_4\) are its projective columns, their ordered
cross-ratio is
\begin{equation}
\lambda
=
\frac{\det(c_1,c_3)\det(c_2,c_4)}
{\det(c_1,c_4)\det(c_2,c_3)}
=
\frac{\delta}{\delta-1}.
\label{eq:crossratio}
\end{equation}
For four ordered distinct points of \(\operatorname{PG}(1,q)\),
equality of the corresponding cross-ratios is equivalent to
projective equivalence in the prescribed order
\cite[Sec.~6.1]{Hirschfeld_projective}.
The map
\[
\delta\longmapsto\frac{\delta}{\delta-1}
\]
is an involution of \(\F_q\setminus\{0,1\}\).  The
\(q-2\) ordered perfect cores recover exactly the classical
cross-ratio moduli of four ordered projective points.  This projective modulus labels the ordered Clifford core and already
appears in the higher-local transport of canonical homogeneous
circuits.

If the four tensor legs are not ordered, the action of \(S_4\) on
their orderings descends through the Klein four subgroup that fixes
the cross-ratio.  The resulting quotient
\[
S_4/V_4\cong S_3
\]
is the classical anharmonic group.  Its action identifies the six
cross-ratio representatives
\begin{equation}
\lambda,
\qquad
1-\lambda,
\qquad
\lambda^{-1},
\qquad
(1-\lambda)^{-1},
\qquad
\frac{\lambda}{\lambda-1},
\qquad
\frac{\lambda-1}{\lambda}.
\label{eq:anharmonicorbit}
\end{equation}
This permutation action and its six cross-ratio values are reviewed
in \cite[Sec.~6.1]{Hirschfeld_projective}.
Let $\epsilon_2=0$ in characteristic two and $\epsilon_2=1$ otherwise, and let
\[
 \rho_q=\#\{x\in\F_q:x^2-x+1=0\}.
\]

The anharmonic group has one identity element, three involutions,
and two elements of order three.  The identity fixes all \(q-2\)
elements of \(\F_q\setminus\{0,1\}\).  The three involutions may be
taken as
\[
 \lambda\longmapsto1-\lambda,
 \qquad
 \lambda\longmapsto\lambda^{-1},
 \qquad
 \lambda\longmapsto\frac{\lambda}{\lambda-1}.
\]
Their fixed points are respectively
\[
\frac12,\qquad -1,\qquad 2.
\]
These are precisely the three ordered cross-ratio representatives of
a harmonic tetrad
\cite[Sec.~6.1]{Hirschfeld_projective}.
They belong to \(\F_q\setminus\{0,1\}\) precisely when
\(\operatorname{char}\F_q\ne2\).  Thus each involution has
\(\epsilon_2\) fixed points.  In characteristic three the three
representatives coincide, since
\[
-1=2=\frac12.
\]  The two order-three transformations
may be written as
\[
 \lambda\longmapsto\frac{1}{1-\lambda},
 \qquad
 \lambda\longmapsto\frac{\lambda-1}{\lambda}.
\]
For either transformation, the fixed-point equation is
\[
 \lambda^2-\lambda+1=0,
\]
so each has \(\rho_q\) fixed points.  Burnside's lemma therefore gives
\begin{equation}
 N_{\rm perm}(q)
 =
 \frac{(q-2)+3\epsilon_2+2\rho_q}{6}.
 \label{eq:permutationorbits}
\end{equation}

\section{Five transport phases}
\label{sec:counting}

Put
\[
 h=|\SL(2,q)|=q(q^2-1),
 \qquad
 L=h^2.
\]

\begin{lemma}
\label{lem:detparam}
Let $S$ be dual-unitary.  Then there is a unique $\delta\in\F_q^\times$ such that
\[
 \det B=\det C=\delta,
 \qquad
 \det A=\det D=1-\delta.
\]
Conversely, choose $B,C$ with determinant $\delta\ne0$ and choose $A$ with determinant $1-\delta$.  Then
\begin{equation}
 D=-(C^TJ)^{-1}A^TJB
 \label{eq:Dformula}
\end{equation}
is the unique block for which $S$ is symplectic.
\end{lemma}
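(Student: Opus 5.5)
We will work entirely with the block form of the symplectic condition $S^T\Omega S=\Omega$, $\Omega=J\oplus J$. Writing $S=\begin{pmatrix}A&B\\C&D\end{pmatrix}$, this is equivalent to the three block equations
\[
A^TJA+C^TJC=J,\qquad B^TJB+D^TJD=J,\qquad A^TJB+C^TJD=0 .
\]
By the identity \eqref{eq:detidentity}, $X^TJX=(\det X)J$ for every $2\times2$ matrix $X$. The first two equations therefore collapse to scalar identities,
\[
\det A+\det C=1,\qquad \det B+\det D=1 .
\]
These identities use only \eqref{eq:detidentity}, so they hold in every characteristic.

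\textbf{Forward direction.} Assume $B,C\in\GL(2,q)$. The third equation gives $D=-(C^TJ)^{-1}A^TJB$, since $C^TJ$ is invertible. Taking determinants, and using $\det J=1$ and $\det(-I_2)=1$, we get $\det D=\det A\det B/\det C$. Set $\delta=\det B\neq0$ and $\gamma=\det C\neq0$. The two scalar identities read $\det A=1-\gamma$ and $\det D=1-\delta$. Substituting into the expression for $\det D$ gives
\[
(1-\delta)\gamma=(1-\gamma)\delta ,
\]
which simplifies to $\gamma=\delta$. Hence $\det B=\det C=\delta$ and $\det A=\det D=1-\delta$. The scalar $\delta$ is unique because it equals $\det B$.

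\textbf{Converse.} Take $B,C$ with determinant $\delta\ne0$ and $A$ with determinant $1-\delta$, and define $D$ by \eqref{eq:Dformula}. The off-diagonal equation then holds by construction. It is also the unique $D$ satisfying that equation, since $C^TJ$ is invertible; this gives uniqueness. The first diagonal equation holds because $\det A+\det C=1$.

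\textbf{Main obstacle.} The remaining step is to check the second diagonal equation $B^TJB+D^TJD=J$. By \eqref{eq:detidentity} this reduces to $\det B+\det D=1$. From the determinant computation above, $\det D=(1-\delta)\delta/\delta=1-\delta$, so $\det B+\det D=\delta+(1-\delta)=1$, as required. This holds in characteristic two as well, because no division by $2$ or sign argument is used. Once all three block equations are verified, $S$ is symplectic, which completes the converse.
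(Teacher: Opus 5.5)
Your proof is correct and follows essentially the same route as the paper. You reduce the two diagonal block equations to $\det A+\det C=1$ and $\det B+\det D=1$ via \eqref{eq:detidentity}, solve the off-diagonal equation for $D$, and compare determinants to force $\det B=\det C$; your explicit check of the second diagonal equation in the converse, via $\det D=1-\delta$, spells out what the paper summarizes as ``reversing the argument.''
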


\begin{proof}
The diagonal symplectic equations and \eqref{eq:detidentity} give
\[
 \det A+\det C=1,
 \qquad
 \det B+\det D=1.
\]
The off-diagonal equation is
\[
 A^TJB+C^TJD=0,
\]
which yields \eqref{eq:Dformula}.  Taking determinants gives
\[
 \det D=\frac{\det A\det B}{\det C}.
\]
Substituting the two diagonal equations and using $\det B,\det C\ne0$ gives $\det B=\det C$.  The converse follows by reversing the argument.
\end{proof}

The determinant fiber
\[
 \{X\in M_2(\F_q):\det X=a\}
\]
has cardinality $h$ for every $a\ne0$. 

The one-site edge channels used below are the finite-field Clifford and Weyl-basis specialization of the light-cone maps \(\mathcal M_\pm\) introduced by Bertini, Kos, and Prosen \cite[Property~2, Eqs.~(17)--(19)]{Bertini2019exact}. Eigenoperators of these maps with eigenvalues of unit modulus propagate along a light-cone edge without spreading and are therefore maximal-velocity one-site solitons, or gliders, in the terminology of Holden-Dye, Masanes, and Pal \cite[Sec.~2.3, Eqs.~(14)--(17), and Definition~1]{holden2312fundamental}. The same channels determine the long-time odd-parity out-of-time-order correlator inside the light cone for the maximally chaotic dual-unitary class \cite[Eqs.~(58)--(59)]{ClaeysLamacraft2020maximum}.

\begin{definition}
Let $W_x$, $x\in\F_q^2$, denote the Weyl basis with $W_0=I$.  The right and left one-site edge channels are the partial-trace superoperators obtained from one backward gate application with the inward leg maximally mixed.  In the normalized nonidentity Weyl basis $\{W_x/\sqrt q:x\ne0\}$, write their transfer matrices as $T_R$ and $T_L$.  For a Clifford gate, each column of these matrices is either zero or a single phase times another Weyl basis vector, so $T_R$ and $T_L$ are partial permutation matrices.  Their Frobenius norms and powers are therefore ordinary linear-superoperator quantities, not norms of a piecewise map on the label space.
\end{definition}

\begin{lemma}
\label{lem:edgemaps}
For a $2\times2$ matrix write
\[
 X^\#=-JX^TJ=\operatorname{adj}(X).
\]
If
\[
 S^{-1}=\begin{pmatrix}A^\#&C^\#\\B^\#&D^\#\end{pmatrix},
\]
then, up to Weyl phases, their action on Weyl basis elements is
\[
 \mathcal T_R(W_x)=
 \begin{cases}
 W_{B^\#x},&A^\#x=0,\\
 0,&A^\#x\ne0,
 \end{cases}
 \qquad
 \mathcal T_L(W_x)=
 \begin{cases}
 W_{C^\#x},&D^\#x=0,\\
 0,&D^\#x\ne0.
 \end{cases}
\]
For a dual-unitary class with determinant parameter $\delta$,
\[
 B^\#=\delta B^{-1},\qquad C^\#=\delta C^{-1}.
\]
The scalar $\delta$ is irrelevant projectively but is required for the exact linear Weyl-label map.

Thus,
\begin{enumerate}[label=(\roman*)]
\item if $A$ and $D$ are invertible, both edge channels vanish;
\item if $A$ and $D$ have rank one, then
\[
 \ker A^\#=\operatorname{im}A,
 \qquad
 \ker D^\#=\operatorname{im}D,
\]
so each edge channel is supported on one projective line and its square is nonzero precisely when the image line returns to the same domain line;
\item if $A=D=0$, then $\ker A^\#=\ker D^\#=\F_q^2$ and both channels are full Weyl permutations.
\end{enumerate}
\end{lemma}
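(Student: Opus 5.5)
The plan has three stages: identify $S^{-1}$, compute the edge channels as Weyl-label maps, and then read off the three rank cases.

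\emph{Step 1: the inverse of $S$.} For $S\in\Sp(4,q)$ we have $S^{-1}=\Omega^{-1}S^T\Omega$. Since $\Omega=J\oplus J$ and $J^{-1}=-J$, expanding blockwise gives the upper-left block $-JA^TJ=A^\#$ and the upper-right block $-JC^TJ=C^\#$. The lower blocks are $B^\#$ and $D^\#$, which is the displayed form of $S^{-1}$. The identity $-JX^TJ=\operatorname{adj}(X)$ is checked entrywise, and it holds in every characteristic.

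\emph{Step 2: the edge channels.} Write the backward Heisenberg action as $\cU^{-1}W\cU$ on a single gate. Up to phases, this sends $W_{(u,v)}$ to $W_{S^{-1}(u,v)}$, because Clifford conjugation acts on labels by the symplectic matrix. For $\mathcal T_R$, the observed Weyl operator $W_x$ sits on the output leg specified by the edge convention, and the inward output leg carries the identity. Its backward image therefore has label $(A^\#x,\,B^\#x)$ (respectively $(C^\#x,D^\#x)$ for the left edge). Taking the partial trace with the inward input leg maximally mixed, with normalized trace $\tr(\cdot)/q$ on that leg, kills the string unless its component on the traced leg is $0$. When it survives, only the other component remains. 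This gives exactly the stated case formula; the phase is irrelevant because it is a single unimodular scalar. The main bookkeeping obstacle is fixing which leg is traced in each channel so that the kernel condition involves $A^\#$ for $T_R$ and $D^\#$ for $T_L$. I would pin this down by checking against the brickwork geometry of \eqref{eq:brickwork}.

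For dual-unitary classes, \Cref{lem:detparam} gives $\det B=\det C=\delta$. Then $B^\#=\operatorname{adj}(B)=(\det B)B^{-1}=\delta B^{-1}$, and likewise $C^\#=\delta C^{-1}$.

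\emph{Step 3: the three rank cases.} These follow from rank properties of the adjugate.
\begin{enumerate}[label=(\roman*)]
\item If $A$ is invertible, then $A^\#$ is invertible, so $A^\#x=0$ only for $x=0$. Hence $T_R$ vanishes on the nonidentity Weyl basis, and the same argument applied to $D$ shows that $T_L$ vanishes.
\item If $\rank A=1$, then $AA^\#=(\det A)I=0$, so $\operatorname{im}A^\#\subseteq\ker A$. Also $A^\#$ has rank one. Symmetrically, $A^\#A=0$ gives $\operatorname{im}A\subseteq\ker A^\#$. Both sides are one-dimensional, so $\ker A^\#=\operatorname{im}A$, and the same holds for $D$ (recall that $\det A=\det D=1-\delta=0$ in this case by \Cref{lem:detparam}). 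The channel $T_R$ is therefore supported on the nonzero vectors of the line $\ell=\operatorname{im}A$, and it maps them injectively, since $B^\#$ is invertible, onto the line $B^\#\ell$. Consequently $T_R^2\neq0$ if and only if $B^\#\ell=\ell$. The same argument with $C^\#$ and $\operatorname{im}D$ handles $T_L$.
\item If $A=0$, then \Cref{thm:fullcores} forces $D=0$. Hence $A^\#=D^\#=0$, so both kernels are all of $\F_q^2$. Each channel is then the invertible label map $B^\#$ or $C^\#$, that is, a full Weyl permutation up to phases.
\end{enumerate}
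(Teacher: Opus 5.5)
Your proposal is correct and follows the paper's route. You backward-conjugate $(x,0)$ (resp.\ $(0,x)$) to $(A^\#x,B^\#x)$ (resp.\ $(C^\#x,D^\#x)$), discard it under the normalized partial trace unless the traced component vanishes, use $\operatorname{adj}(B)=(\det B)B^{-1}$, and read off the rank cases from adjugate identities. You also fill in details the paper leaves implicit, namely the blockwise derivation $S^{-1}=\Omega^{-1}S^T\Omega$ and the argument $A^\#A=0$ giving $\ker A^\#=\operatorname{im}A$. Citing \Cref{thm:fullcores} in (iii) is unnecessary, since $A=D=0$ is already part of the hypothesis.
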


\begin{proof}
Backward conjugation of a right-edge one-site Weyl label $(x,0)$ gives
\[
 S^{-1}(x,0)=(A^\#x,B^\#x).
\]
The partial trace over the inward site survives exactly when $A^\#x=0$, and the surviving exterior label is $B^\#x$.  This proves the right formula; the left formula is identical.  Since $B$ and $C$ are invertible and have determinant $\delta$,
\[
 B^\#=(\det B)B^{-1}=\delta B^{-1},
 \qquad
 C^\#=\delta C^{-1}.
\]
The three rank cases follow from the elementary identities for the adjugate of a $2\times2$ matrix.  In the rank-one case, a second edge step survives exactly when the first image lies again in the one-dimensional domain.
\end{proof}

\begin{theorem}
\label{thm:fivephase}
The number of dual-unitary finite-field Clifford symplectic classes is
\begin{equation}
 N_{\rm DU}(q)=Lq(q^3-q^2+1).
 \label{eq:NDu}
\end{equation}
They split into five edge-transport phases as follows.
\begin{table}[t]
\centering
\begin{tabular}{p{0.58\linewidth}r}
\toprule
\textbf{phase}&\textbf{number of classes}\\
\midrule
Immediate bilateral erasure: $T_L=T_R=0$
&$Lq(q-2)(q^2-1)$\\
Delayed bilateral erasure: $T_L,T_R\ne0$ and $T_L^2=T_R^2=0$
&$Lq^2(q-1)$\\
Exactly one gliding edge
&$2Lq(q-1)$\\
Two rank-one gliding edges
&$L(q-1)$\\
SWAP sector
&$L$\\
\bottomrule
\end{tabular}
\caption{Exact phase counts for finite-field Clifford dual-unitary symplectic classes.}
\label{tab:phasecounts}
\end{table}
The immediate-erasure phase is exactly the perfect-tensor phase.
\end{theorem}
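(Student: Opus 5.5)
The count follows from the determinant parametrization, \Cref{lem:detparam}. A dual-unitary class is fixed by $\delta\in\F_q^\times$, by $B,C$ in the determinant-$\delta$ fiber, and by $A$ in the determinant-$(1-\delta)$ fiber; $D$ is then given by \eqref{eq:Dformula}.

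\textbf{Total count.} For $\delta\ne1$ each fiber has size $h$, which gives $(q-2)h^3=Lq(q-2)(q^2-1)$ classes. For $\delta=1$, $B,C\in\SL(2,q)$ contribute $L$, and $A$ ranges over all singular matrices. Their number is
\[
q^4-|\GL(2,q)|=q^4-(q-1)h=q^3+q^2-q .
\]
Adding the two contributions gives
\[
h^2\bigl[(q-2)h+q^3+q^2-q\bigr]=Lq(q^3-q^2+1),
\]
which is \eqref{eq:NDu}.

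\textbf{Perfect and SWAP sectors.} By \Cref{lem:edgemaps}(i), $\delta\ne1$ means $A,D$ are invertible, so $T_L=T_R=0$. This is exactly the perfect-tensor condition of \Cref{lem:blockcriteria}, which proves the last sentence of the theorem and the first row of the table. The case $A=0$ forces $D=0$, and by \Cref{lem:edgemaps}(iii) it is the SWAP sector with $L$ classes.

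\textbf{Rank-one sector: reduction to line-fixing.} The remaining $(q-1)(q+1)^2$ rank-one matrices $A$ all have $\delta=1$, so $B^\#=B^{-1}$ and $C^\#=C^{-1}$. By \Cref{lem:edgemaps}(ii), $T_R$ is nonzero on the nonidentity Weyl labels of the line $\ell=\operatorname{im}A$ and maps $\ell$ to $B^{-1}\ell$. Since $T_R$ is a partial permutation supported on one line, $T_R^2\neq0$ exactly when $B\ell=\ell$; in that case $T_R$ permutes that line, so it is non-nilpotent, i.e.\ the edge glides. Otherwise $T_R^2=0$.

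For the left edge I will express $\operatorname{im}D$ through $C$. From \eqref{eq:Dformula} and $JB$ invertible,
\[
\operatorname{im}D=J^{-1}C^{-T}\operatorname{im}(A^T)=J^{-1}C^{-T}n ,
\]
where $n:=\operatorname{im}(A^T)$ depends only on $A$. For $C\in\SL(2,q)$ we have $C^TJC=J$, hence $C^{-T}=JCJ^{-1}$, and therefore
\[
\operatorname{im}D=CJ^{-1}n .
\]
The left return condition $C^{-1}\operatorname{im}D=\operatorname{im}D$ is then equivalent to $C$ fixing the line $J^{-1}n$.

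So for fixed $A$, the right condition constrains only $B$ and the left condition constrains only $C$, each requiring that an $\SL(2,q)$ element fix a prescribed line. The stabilizer of a line is a Borel subgroup of order $q(q-1)$; the complement has $h-q(q-1)=q^2(q-1)$ elements.

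\textbf{Rank-one sector: counting.} Multiplying the choices gives:
\begin{itemize}
\item two gliding edges: $(q-1)(q+1)^2\,q^2(q-1)^2=L(q-1)$;
\item exactly one gliding edge: $2(q-1)(q+1)^2\,q(q-1)\,q^2(q-1)=2Lq(q-1)$;
\item delayed erasure: $(q-1)(q+1)^2\,q^4(q-1)^2=Lq^2(q-1)$.
\end{itemize}
In the delayed case both channels are nonzero with vanishing squares, as required. The three counts sum to $L(q-1)(q+1)^2$, which serves as a consistency check.

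The main obstacle is rewriting the left-edge return condition as a condition on $C$ alone. The adjugate identity $C^{-T}=JCJ^{-1}$ on $\SL(2,q)$ is the key step, and it needs a careful check in characteristic two, where $J=-J$ does not simplify sign bookkeeping. The rest is fiber counting.
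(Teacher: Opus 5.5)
Your proposal is correct and follows essentially the paper's argument: the determinant parametrization, the separate perfect and SWAP sectors, and, in the rank-one sector, independent line-fixing conditions on $B$ and $C$ counted with the order-$q(q-1)$ line stabilizer. The differences are cosmetic: you obtain $N_{\rm DU}$ directly from the $q^3+q^2-q$ singular matrices rather than by summing the five rows, and you derive $\operatorname{im}D=CJ^{-1}\operatorname{im}(A^T)$ from $C^{-T}=JCJ^{-1}$ rather than writing $D=CJvu^TJB$ explicitly.
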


\begin{proof}
If $\delta\ne1$, all four blocks have nonzero determinant.  There are $q-2$ choices of $\delta$ and $h$ choices for each of $A,B,C$, while $D$ is fixed by \eqref{eq:Dformula}.  This gives
\[
 (q-2)h^3=Lq(q-2)(q^2-1)
\]
perfect tensors.

Now let $\delta=1$.  Then $A$ and $D$ are singular.  If $A=0$, equation \eqref{eq:Dformula} gives $D=0$, while $B,C\in\SL(2,q)$ are arbitrary.  These are the $L$ SWAP-sector classes.

Every remaining singular $A$ has rank one and can be written
\[
 A=u v^T.
\]
It is determined by the two projective lines $[u],[v]\in\mathbb P^1(\F_q)$ and one nonzero scalar.  There are therefore $(q+1)^2(q-1)$ possible rank-one matrices.  In the determinant-one sector, \eqref{eq:Dformula} becomes
\[
 D=CJv\,u^TJB.
\]
Thus $\operatorname{im}A=[u]$ and $\operatorname{im}D=CJ[v]$.  By \Cref{lem:edgemaps}, the right edge glides exactly when $B$ fixes $[u]$.  The left condition is
\[
 C(CJ[v])=CJ[v],
\]
projectively, which is equivalent to $C$ fixing $J[v]$.  Hence the two edge incidences depend independently on the pairs $(B,[u])$ and $(C,J[v])$.

The projective line \(\mathbb P^1(\F_q)\) has \(q+1\) points, and the
projective action of \(\SL(2,q)\) is transitive
\cite[Exercise~3.6 and Theorem~4.1]{taylor1992geometry}.
Therefore orbit--stabilizer gives, for every projective line \(\ell\),
\[
 \left|\operatorname{Stab}_{\SL(2,q)}(\ell)\right|
 =
 \frac{|\SL(2,q)|}{q+1}
 =
 q(q-1).
\]
The number of matrix--line pairs \((G,\ell)\) satisfying
\(G\ell=\ell\) is
\[
 (q+1)q(q-1)=h,
\]
whereas the number of nonfixed pairs is \(hq\).
  Multiplying fixed or nonfixed pair counts and the $q-1$ rank-one scalars gives
\[
 Lq^2(q-1),\quad 2Lq(q-1),\quad L(q-1)
\]
for zero, one, or two gliding edges.  Adding the five rows yields \eqref{eq:NDu}.
\end{proof}

\begin{definition}
Using the normalized nonidentity Weyl basis, define
\[
 \Gamma_q(C)=\Frob{T_L(C)^2}^2+\Frob{T_R(C)^2}^2.
\]
The bilateral edge-erasure depth is
\[
 \kappa(C)=\min\{k\ge1:T_L(C)^k=T_R(C)^k=0\},
\]
with $\kappa=\infty$ when no such $k$ exists.
\end{definition}

\begin{corollary}
\label{cor:gammaq}
For dual-unitary finite-field Clifford classes,
\[
 \Gamma_q\in\{0,q-1,2(q-1),2(q^2-1)\}.
\]
The value $\Gamma_q=0$ contains two phases,
\[
 \kappa=1\quad\Longleftrightarrow\quad T_L=T_R=0,
\]
\[
 \kappa=2\quad\Longleftrightarrow\quad T_L,T_R\ne0,
 \quad T_L^2=T_R^2=0.
\]
For $q=2$ the $\kappa=1$ phase is empty.
\end{corollary}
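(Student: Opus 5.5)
We will compute $\Gamma_q$ phase by phase, reading off $T_L$ and $T_R$ from \Cref{lem:edgemaps} and using the five-phase split of \Cref{thm:fivephase}. The key observation is that $T_R$ and $T_L$ are partial permutation matrices up to phases in the normalized nonidentity Weyl basis. The same then holds for their squares. Hence $\Frob{T_R^2}^2$ equals the number of nonzero labels $x\in\F_q^2\setminus\{0\}$ on which $\mathcal T_R^2$ does not vanish, and similarly for $T_L$. The proof therefore reduces to counting surviving labels.

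We treat the phases in turn. In the perfect phase, $A$ and $D$ are invertible, so $A^\#,D^\#$ are invertible and both channels vanish. This gives $T_L=T_R=0$, $\Gamma_q=0$ and $\kappa=1$. In the SWAP phase, $A=D=0$, and $\mathcal T_R(W_x)=W_{B^\#x}$ is a bijection on the $q^2-1$ nonzero labels, as is $\mathcal T_L$. Their squares are again full permutations, so $\Gamma_q=2(q^2-1)$.

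In the rank-one phase, part (ii) of \Cref{lem:edgemaps} says $\mathcal T_R$ survives exactly on the $q-1$ nonzero vectors of the line $\ell_R=\operatorname{im}A$ and sends them into $B^\#\ell_R$. The square survives on $x\in\ell_R$ exactly when $B^\#x\in\ell_R$, that is, when $B$ fixes $\ell_R$. Thus $\Frob{T_R^2}^2$ is $q-1$ if the right edge glides and $0$ otherwise, and the left edge behaves the same way. Delayed erasure therefore gives $\Gamma_q=0$, one gliding edge gives $q-1$, and two gliding edges give $2(q-1)$. This proves the list of values.

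For the $\kappa$ dichotomy inside $\Gamma_q=0$, note that $\Gamma_q=0$ occurs only in the perfect and delayed-erasure phases. In the perfect phase $\kappa=1$. In the delayed-erasure phase both channels are nonzero, since each survives on $q-1\ge1$ labels, but both squares vanish, so $\kappa=2$. Conversely, $T_L=T_R=0$ forces $A,D$ invertible: if $A$ were singular, $\ker A^\#\ne0$ and $T_R\ne0$. So the equivalences hold in both directions.

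For $q=2$, the perfect family is indexed by $\F_2\setminus\{0,1\}=\emptyset$, and equivalently the count $Lq(q-2)(q^2-1)$ vanishes. Hence the $\kappa=1$ phase is empty. The main obstacle is the rank-one case, specifically checking that the kernel and image lines are related as the gliding condition of \Cref{thm:fivephase} states. The scalar $\delta=1$ makes $B^\#=B^{-1}$, and a line is fixed by $B^{-1}$ if and only if it is fixed by $B$, so the gliding criteria agree.
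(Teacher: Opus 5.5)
Your proposal is correct and takes the same approach as the paper. In both, the edge transfer matrices and their squares are unimodular partial permutations, so $\Frob{T_\nu^2}^2$ counts surviving labels: $q-1$ on a fixed projective line for a rank-one gliding edge, $0$ for a nilpotent edge, and $q^2-1$ per SWAP edge. You also spell out the $\kappa$ equivalences and the emptiness of the $\kappa=1$ phase at $q=2$, which the paper's short proof leaves implicit.
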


\begin{proof}
A nonzero rank-one edge channel acts on the $q-1$ nontrivial Weyl labels of one projective line.  Its squared Frobenius norm is therefore $q-1$ when it glides and zero when it is nilpotent.  A SWAP edge permutes all $q^2-1$ nonidentity Weyl labels, giving squared Frobenius norm $q^2-1$ per edge.
\end{proof}

For $q=2$, \Cref{thm:fivephase} reduces to
\[
 0,144,144,36,36,
\]
where the initial zero records the absent perfect phase.  After merging the two one-edge orientations, the usual glider-index distribution is
\[
 \#\{\Gamma=0,1,2,6\}=144,144,36,36.
\]
For $q=3$ the five counts are
\[
 13824,\quad10368,\quad6912,\quad1152,\quad576,
\]
with total $32832$.

\section{Large-\texorpdfstring{$q$}{q} behavior}
\label{sec:concentration}

The standard symplectic order formula is
\[
 |\Sp(2m,q)|
 =
 q^{m^2}\prod_{i=1}^{m}(q^{2i}-1).
\]
In particular,
\[
 |\Sp(4,q)|
 =
 q^4(q^2-1)(q^4-1)
\]
\cite[Ch.~8, p.~70]{taylor1992geometry}.

\begin{theorem}
\label{thm:concentration}
For a uniformly random symplectic class $S\in\Sp(4,q)$,
\begin{align}
 \Pr(S\text{ is dual-unitary})
 &=\frac{q^3-q^2+1}{q(q^2+1)},
 \label{eq:probdu}\\
 \Pr(S\text{ is a perfect tensor})
 &=\frac{(q-2)(q^2-1)}{q(q^2+1)}
 =1-\frac2q+O(q^{-2}),
 \label{eq:probperfect}\\
 \Pr(S\text{ has a persistent one-site glider})
 &=\frac{2q-1}{q(q^2+1)}=O(q^{-2}).
 \label{eq:probgliderall}
\end{align}
Conditioned on dual-unitarity,
\begin{align}
 \Pr(\kappa=1\mid\mathrm{DU})
 &=\frac{(q-2)(q^2-1)}{q^3-q^2+1},\\
 \Pr(\kappa=\infty\mid\mathrm{DU})
 &=\frac{2q-1}{q^3-q^2+1}.
\end{align}
Thus perfect tensors dominate both the dual-unitary ensemble and the full finite-field Clifford symplectic ensemble as $q\to\infty$.
\end{theorem}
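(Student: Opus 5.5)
The plan is to divide the phase counts of \Cref{thm:fivephase} by the group order. By the order formula,
\[
|\Sp(4,q)|=q^4(q^2-1)(q^4-1)=q^4(q^2-1)^2(q^2+1),
\]
and since $L=h^2=q^2(q^2-1)^2$, this gives $|\Sp(4,q)|=Lq^2(q^2+1)$. Every probability in the statement is then a phase count from \Cref{tab:phasecounts} divided by $Lq^2(q^2+1)$, so the factor $L$ cancels throughout.

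For \eqref{eq:probdu}, \eqref{eq:NDu} gives
\[
\frac{Lq(q^3-q^2+1)}{Lq^2(q^2+1)}=\frac{q^3-q^2+1}{q(q^2+1)}.
\]
For \eqref{eq:probperfect}, I would use the immediate-erasure count $Lq(q-2)(q^2-1)$, which \Cref{thm:fivephase} identifies with the perfect-tensor phase. Dividing gives $(q-2)(q^2-1)/(q(q^2+1))$. The asymptotic form follows by writing this as
\[
\frac{q^3-2q^2-q+2}{q^3+q}=1-\frac{2q^2+2q-2}{q^3+q}=1-\frac2q+O(q^{-2}).
\]

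For \eqref{eq:probgliderall}, the main step is deciding which phases count as having a persistent one-site glider, i.e.\ $\kappa=\infty$. I would argue from \Cref{lem:edgemaps} and \Cref{cor:gammaq}:
\begin{itemize}[label=--]
\item In the two erasure phases, some power of $T_L$ and $T_R$ vanishes, so there is no persistent glider.
\item In the one-edge and two-edge gliding phases, a rank-one gliding edge acts as a nonzero partial permutation on the $q-1$ labels of an invariant line. Every power of it is therefore nonzero.
\item In the SWAP sector, both channels are full permutations, so every power is nonzero.
\end{itemize}
The glider count is therefore $2Lq(q-1)+L(q-1)+L=L(2q^2-q)=Lq(2q-1)$. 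Dividing by $Lq^2(q^2+1)$ gives $(2q-1)/(q(q^2+1))=O(q^{-2})$.

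The conditional probabilities come from dividing the $\kappa=1$ count $Lq(q-2)(q^2-1)$ and the $\kappa=\infty$ count $Lq(2q-1)$ by $N_{\rm DU}(q)$. As a consistency check, the delayed-erasure count gives the $\kappa=2$ remainder, and the three pieces should sum to $q^3-q^2+1$:
\[
(q-2)(q^2-1)+q(q-1)+(2q-1)=q^3-q^2+1,
\]
which holds. The closing remark follows because $\Pr(\text{perfect})\to1$ and $\Pr(\kappa=1\mid\mathrm{DU})\to1$ as $q\to\infty$.

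The main obstacle is justifying that $\kappa$ is exactly $1$, $2$, or $\infty$ on the five phases. In particular, the delayed-erasure phase must have $T^2=0$ and not merely $T^k=0$ for some larger $k$, and the gliding phases must have no vanishing power. This reduces to the rank-one analysis of \Cref{lem:edgemaps}(ii): the channel is supported on a single projective line. If its image line differs from that line, the second step dies. If the image line equals it, the channel permutes the nonzero labels of that line, so no power vanishes. The rest is algebra.
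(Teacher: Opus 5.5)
Your proposal is correct and is the paper's argument: divide the counts of \Cref{thm:fivephase} by $|\Sp(4,q)|=Lq^2(q^2+1)$ and take the glider count as the sum of the last three rows, $Lq(2q-1)$. Your check via \Cref{lem:edgemaps}(ii) that $\kappa\in\{1,2,\infty\}$ on the phases makes explicit what the paper leaves implicit. One further tacit point, shared with the paper, is that non-dual-unitary classes contribute no gliders to the unconditional probability. This holds because $\det A+\det C=\det B+\det D=1$ and $\det A\det B=\det C\det D$ on all of $\Sp(4,q)$ force $\det B=\det C$ and $\det A=\det D=1-\det B$, so a non-dual-unitary class has $A,D$ invertible and both edge channels vanish.
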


\begin{proof}
Divide the counts in \Cref{thm:fivephase} by $|\Sp(4,q)|$ and use $L=q^2(q^2-1)^2$.  The persistent-glider count is the sum of the last three nonperfect rows,
\[
 L\bigl(2q(q-1)+(q-1)+1\bigr)=Lq(2q-1).
\]
The asymptotic expansions are immediate.
\end{proof}

\begin{corollary}
\label{cor:rarity}
For a uniformly random $S\in\Sp(4,q)$, the five phase probabilities are
\[
\begin{array}{c|c|c}
\text{phase}&\text{exact probability}&\text{large-$q$ order}\\
\hline
\kappa=1\text{ perfect tensor}
&\dfrac{(q-2)(q^2-1)}{q(q^2+1)}&1-2q^{-1}+O(q^{-2})\\[2mm]
\kappa=2\text{ delayed erasure}
&\dfrac{q-1}{q^2+1}&q^{-1}+O(q^{-2})\\[2mm]
\text{one gliding edge}
&\dfrac{2(q-1)}{q(q^2+1)}&2q^{-2}+O(q^{-3})\\[2mm]
\text{two rank-one gliding edges}
&\dfrac{q-1}{q^2(q^2+1)}&q^{-3}+O(q^{-4})\\[2mm]
\text{SWAP sector}
&\dfrac{1}{q^2(q^2+1)}&q^{-4}+O(q^{-6}).
\end{array}
\]
The five probabilities follow the codimensions of the corresponding algebraic loci in $\Sp_4$.  Each additional edge-incidence condition contributes one inverse power of the local dimension.
\end{corollary}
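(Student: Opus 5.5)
The plan is to divide each row count of \Cref{thm:fivephase} by the group order and then expand in powers of $q^{-1}$. I will use $|\Sp(4,q)|=q^4(q^2-1)(q^4-1)$ from the order formula above and $L=h^2=q^2(q^2-1)^2$. Since $q^4-1=(q^2-1)(q^2+1)$, this gives
\[
 \frac{L}{|\Sp(4,q)|}=\frac{q^2(q^2-1)^2}{q^4(q^2-1)^2(q^2+1)}=\frac{1}{q^2(q^2+1)} .
\]
This is already the SWAP-sector probability. Every other row is $L$ times a polynomial, so each exact probability is that polynomial divided by $q^2(q^2+1)$.

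The exact values then come row by row. For the perfect row, $q(q-2)(q^2-1)/(q^2(q^2+1))=(q-2)(q^2-1)/(q(q^2+1))$, which agrees with \eqref{eq:probperfect}. Delayed erasure gives $q^2(q-1)/(q^2(q^2+1))=(q-1)/(q^2+1)$. One gliding edge gives $2q(q-1)/(q^2(q^2+1))=2(q-1)/(q(q^2+1))$. Two gliding edges give $(q-1)/(q^2(q^2+1))$.

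For the large-$q$ orders I will factor out the leading power and use $(1+q^{-2})^{-1}=1-q^{-2}+O(q^{-4})$.
\begin{itemize}
\item Perfect tensor: the probability is $(1-2q^{-1})(1-q^{-2})(1+q^{-2})^{-1}=1-2q^{-1}+O(q^{-2})$.
\item Delayed erasure: $(q-1)/(q^2+1)=q^{-1}(1-q^{-1})(1+q^{-2})^{-1}=q^{-1}+O(q^{-2})$.
\item One gliding edge: $2q^{-2}(1-q^{-1})(1+q^{-2})^{-1}=2q^{-2}+O(q^{-3})$.
\item Two gliding edges: $q^{-3}(1-q^{-1})(1+q^{-2})^{-1}=q^{-3}+O(q^{-4})$.
\item SWAP sector: $q^{-4}(1+q^{-2})^{-1}=q^{-4}-q^{-6}+\cdots$. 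The error term here is really $O(q^{-6})$, not just $O(q^{-5})$, because the expansion contains only even powers of $q^{-1}$.
\end{itemize}

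The codimension statement needs an interpretation, because it is not a precise theorem as written. I would justify it as follows.
\begin{itemize}
\item $\dim\Sp_4=10$.
\item Dual-unitarity is an open condition, and perfect tensors form an open dense subset, so they have codimension $0$.
\item The condition $\delta=1$ (that is, $\det A=0$) is one equation, giving codimension $1$; the delayed-erasure stratum is open in it.
\item Each gliding edge adds one incidence condition, namely that $B$ fixes the line $[u]$, or that $C$ fixes $J[v]$. Each is one equation on the projective line, so they give codimensions $2$ and $3$.
\item The SWAP stratum $A=0$ is a product of two copies of $\SL_2$, of dimension $6$, so it has codimension $4$.
\end{itemize}
These codimensions match the leading exponents above. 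This is consistent with Lang--Weil-type point counting, but the exact counts already exhibit it directly, so no appeal to that theory is needed.

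The main obstacle is only bookkeeping: checking the SWAP error term and phrasing the codimension heuristic carefully. The exact formulas follow immediately from \Cref{thm:fivephase}.
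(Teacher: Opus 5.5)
Your proposal is correct and follows the paper's route. The exact probabilities come from dividing the counts of \Cref{thm:fivephase} by $|\Sp(4,q)|$ using $L/|\Sp(4,q)|=1/(q^2(q^2+1))$, as in the proof of \Cref{thm:concentration}. Your codimension bookkeeping (the hypersurface $\det A=0$, the two independent fixed-line incidences, and the $\SL_2\times\SL_2$ SWAP locus) is the same dimension count the paper makes precise in \Cref{thm:codimension}, where the density of the perfect locus that you simply assert is justified by the irreducibility of $\Sp_4$.
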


\begin{theorem}
\label{thm:codimension}
Inside the algebraic group $\Sp_4$, the perfect-tensor locus
\[
 \det A\,\det B\,\det C\,\det D\ne0
\]
is a nonempty Zariski-open dense subset.  The five transport loci in the dual-unitary locus have dimensions and codimensions
\[
\begin{array}{c|cc}
\text{phase}&\dim&\operatorname{codim}_{\Sp_4}\\
\hline
\kappa=1\text{ perfect}&10&0\\
\kappa=2\text{ delayed}&9&1\\
\text{one gliding edge}&8&2\\
\text{two gliding edges}&7&3\\
\text{SWAP}&6&4.
\end{array}
\]

Thus the probability hierarchy in \Cref{cor:rarity} follows from an exact hierarchy of incidence codimensions over finite fields.

\end{theorem}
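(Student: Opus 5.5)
We treat $\Sp_4$ as a smooth connected algebraic group of dimension $10$ (the order $q^4(q^2-1)(q^4-1)$ has leading term $q^{10}$). For each locus we write down explicit defining equations and give an explicit parametrization. The parametrizations come from the proofs of \Cref{lem:detparam} and \Cref{thm:fivephase}, read over the algebraic closure.

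\emph{Perfect locus.} The locus is cut out by the nonvanishing of the regular function $f=\det A\det B\det C\det D$, so it is Zariski open. It is nonempty because $S_\delta$ exists for any $\delta\ne0,1$ over $\overline{\F}_q$. Since $\Sp_4$ is irreducible, every nonempty open subset is dense, which gives dimension $10$ and codimension $0$. As a cross-check, \Cref{lem:detparam} gives a birational parametrization: choose $\delta$ (1 parameter) and then $A,B,C$ in determinant fibers ($3\cdot3$ parameters), for a total of $10$.

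\emph{Remaining loci.} On the dual-unitary open set $\{\det B\det C\ne0\}$, the closed subvariety $\{\delta=1\}=\{\det A=0\}$ is the zero set of one function. It is irreducible of dimension $9$: by \Cref{thm:fivephase}, its rank-one part is parametrized by $(B,C)\in\SL_2^2$ ($6$ parameters) and $A=uv^T$ ($3$ parameters: two points of $\mathbb P^1$ and a scalar). The delayed-erasure phase is the open subset where $B[u]\ne[u]$ and $CJ[v]\ne J[v]$, so it has dimension $9$. Each gliding condition $B[u]=[u]$ is a single incidence equation, namely $\det(u,Bu)=0$. It defines a hypersurface in the space of pairs $(B,[u])$, and that hypersurface is fibered over $\mathbb P^1$ with fiber the Borel subgroup $\operatorname{Stab}(\ell)$ of dimension $2$. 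The fixed pairs therefore form a variety of dimension $3$, versus $4$ for all pairs. Because the two incidences depend on the independent pairs $(B,[u])$ and $(C,J[v])$, one gliding edge gives dimension $8$ and two gliding edges give dimension $7$. Finally, the SWAP locus $\{A=0\}$ is $\SL_2\times\SL_2$, of dimension $6$.

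The main obstacle is making the parametrizations rigorous as dimension statements, meaning the maps are injective or have finite fibers and their images are the stated constructible sets. Two facts handle this. First, $A=uv^T$ determines $([u],[v],\text{scalar})$ uniquely. Second, $D$ is determined by $A,B,C$ through \eqref{eq:Dformula}, so the parametrization maps are bijective onto their images. A second, independent route is the Lang--Weil consistency check. Each locus is defined over $\F_q$ and geometrically irreducible (it is a product of irreducible pieces). Its point counts, taken from \Cref{tab:phasecounts} with $L=q^2(q^2-1)^2$, are
\[
q^{10},\quad q^9,\quad 2q^8,\quad q^7,\quad q^6
\]
to leading order. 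These match the stated dimensions, and the factor $2$ in the one-edge row reflects two components, one for the left edge and one for the right. Dividing by $|\Sp(4,q)|\sim q^{10}$ recovers the probability hierarchy of \Cref{cor:rarity}, which is the final sentence of the statement.
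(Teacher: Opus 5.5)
Your proposal is correct and follows essentially the same route as the paper. The common steps are the determinant-fiber count $1+3+3+3$ for the perfect sector, the $\SL_2^2\times\{\text{rank-one }A\}$ parametrization with $D$ fixed by \eqref{eq:Dformula}, the two-dimensional line stabilizer that removes one dimension per gliding incidence, and openness plus irreducibility of $\Sp_4$ for density. The differences are minor: the paper proves connectedness of $\Sp_4$ through generation by transvection subgroups isomorphic to $\mathbb G_a$, whereas you take it as standard, and your Lang--Weil comparison with the point counts is an extra consistency check that the paper does not use.
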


\begin{proof}
We work over an algebraic closure \(K\) of the ground field.  The
symplectic group \(\Sp_4(K)\) is the closed algebraic subgroup of
\(\GL_4(K)\) defined by the polynomial equations
\[
 S^{T}\Omega S=\Omega
\]
\cite[Sec.~7.2, p.~52]{humphreys2012linear}.

For each \(\delta\in K^\times\), the determinant fiber
\[
 \{X\in M_2(K):\det X=\delta\}
\]
is a left coset of \(\SL_2(K)\).  It therefore has dimension
\[
 \dim\SL_2(K)=2^2-1=3
\]
\cite[Sec.~7.2, p.~52]{humphreys2012linear}.
The perfect parameterization has one nonzero determinant parameter
and three such fibers, hence dimension
\[
 1+3+3+3=10.
\]

The rank-one condition on \(A\) is determinantal:
\[
\rank A\le 1
\quad\Longleftrightarrow\quad
\det A=0.
\]
The variety
\[
\mathcal R_1
=
\{A\in M_2(K):\rank A\le1\}
\]
is irreducible of dimension
\[
2\cdot1+2\cdot1-1^2=3,
\]
and is nonsingular away from its lower-rank locus
\(\mathcal R_0=\{0\}\)
\cite[Sec.~1.C, Prop.~(1.1), pp.~4--5]{BrunsVetter1988}.
Hence,
\[
\mathcal R_1^\times
=
\{A\in M_2(K):\rank A=1\}
\]
is a smooth irreducible locally closed variety of dimension \(3\).
Since \(B,C\in\SL_2(K)\) contribute three dimensions each and
\(D\) is uniquely determined by \eqref{eq:Dformula}, the full
rank-one sector has dimension
\[
3+3+3=9.
\]

Write \(A=uv^T\), with \(u,v\ne0\).  By the edge-incidence
description in the proof of \Cref{thm:fivephase}, the right edge
glides precisely when
\[
B[u]=[u],
\]
and the left edge glides precisely when
\[
C[Jv]=[Jv].
\]
For a fixed projective line \(\ell\subset K^2\), its stabilizer in
\(\SL_2(K)\) is conjugate to
\[
\left\{
\begin{pmatrix}
a&b\\
0&a^{-1}
\end{pmatrix}
:
a\in K^\times,\ b\in K
\right\},
\]
which has dimension \(2\).  Thus each independent fixed-line
incidence lowers the dimension by one.  It follows that the delayed
bilateral-erasure locus, where neither incidence holds, is a
nonempty open subset of the rank-one sector and has dimension \(9\).
The locus with exactly one gliding edge has dimension \(8\), and the
locus with two rank-one gliding edges has dimension \(7\).

Finally, the SWAP locus is characterized by \(A=D=0\).  The remaining
blocks \(B,C\in\SL_2(K)\) are independent, so this locus has dimension
\[
3+3=6.
\]
The conditions
\[
 \det A\ne0,\qquad
 \det B\ne0,\qquad
 \det C\ne0,\qquad
 \det D\ne0
\]
define principal Zariski-open subsets of \(\Sp_4(K)\).  Their
intersection is nonempty by the perfect-core parameterization, so
the perfect locus is a nonempty Zariski-open subset.

It remains to prove density.  The group \(\Sp_4(K)\) is generated by
symplectic transvections
\cite[Theorem~8.5, p.~72]{taylor1992geometry}.
For each fixed \(u\ne0\), the transformations
\[
 T_{u,a}(v)=v+a\langle v,u\rangle u,
 \qquad a\in K,
\]
form a closed one-parameter subgroup
\[
 U_u=\{T_{u,a}:a\in K\}
\]
isomorphic to the additive algebraic group \(\mathbb G_a\).
The group \(\mathbb G_a\) is irreducible and one-dimensional
\cite[Sec.~7.1, p.~51]{humphreys2012linear}, and hence \(U_u\) is
connected.  Since the subgroups \(U_u\) generate \(\Sp_4(K)\),
Humphreys's connected-generation criterion implies that
\(\Sp_4(K)\) is connected
\cite[Sec.~7.5, Corollary, p.~56]{humphreys2012linear}.

For an algebraic group, the identity component is the unique
irreducible component containing the identity, and its cosets are
both the irreducible and connected components
\cite[Sec.~7.3, p.~53]{humphreys2012linear}.
The connected algebraic group \(\Sp_4(K)\) is
irreducible.  Every nonempty open subset of an irreducible variety
is dense
\cite[Sec.~1.3, p.~7]{humphreys2012linear}.
The perfect locus is therefore Zariski dense in \(\Sp_4(K)\).

This is a geometric statement over \(K\).  The open locus has no
\(\F_2\)-rational points, consistently with the absence of the
qubit perfect phase.
\end{proof}

\begin{corollary}
Let
\[
Q=q^2
\]
denote the total Hilbert-space dimension of the two-qudit system, and
let \(\overline{\mathrm{Cl}}_{\mathrm{FF}}(2,q)\) denote the preimage
of the extension-field subgroup \(\Sp(4,q)\) under the projective
Clifford-to-symplectic quotient map.  The projective two-qudit Weyl
group is labeled by the additive phase space
\[
(\F_q^4,+),
\]
equivalently by \(\F_p^{4m}\) when \(q=p^m\).  It therefore has order
\[
|\F_q^4|=q^4=Q^2.
\]
The restricted projective Clifford group consequently fits into the
exact sequence
\[
1\longrightarrow (\F_q^4,+)
\longrightarrow \overline{\mathrm{Cl}}_{\mathrm{FF}}(2,q)
\longrightarrow \Sp(4,q)
\longrightarrow 1
\]
\cite[Sec.~II, the two paragraphs following Eq.~(7), and the
paragraph following Eq.~(10)]{Zhu2017Multiqubit}.

Thus every \(S\in\Sp(4,q)\) has exactly \(q^4\) projective Clifford
lifts, obtained by composing any fixed lift of \(S\) with the
\(q^4\) projective Weyl displacements.  This factor \(q^4\) is the
cardinality of the Weyl kernel and is independent of the factor
\(q^4\) appearing in the order formula for \(\Sp(4,q)\).  Since Weyl
displacements do not change the transport-phase label, all
probabilities in \Cref{thm:concentration,cor:rarity} hold unchanged
for a uniformly random finite-field projective Clifford gate.

For this subgroup, acting on the \(Q=q^2\)-dimensional two-qudit
Hilbert space, Zhu's frame-potential formulas give
\[
\Phi_2=2,
\qquad
\Phi_3=2q+2.
\]
Thus this ensemble is an exact unitary \(2\)-design for every prime
power \(q\), but a unitary \(3\)-design only for \(q=2\)
\cite[Theorem~1, Lemma~1, Eqs.~(8)--(11), and the paragraph following
Eq.~(11)]{Zhu2017Multiqubit}.  Therefore the large-\(q\) concentration
of the perfect-tensor phase is an ensemble-specific algebraic result,
not a consequence of third-moment Haar matching.
\end{corollary}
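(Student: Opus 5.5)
The statement has three parts: the Weyl group order, the extension and lift count, and the frame-potential values. I would establish them in that order, relying on the finite-field Weyl construction of \Cref{sec:setup} and on Zhu's results.

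\textbf{Weyl group order.} I would first identify the projective two-qudit Weyl group with the label space. The operators $X(a)Z(b)$ satisfy
\[
X(a)Z(b)X(a')Z(b')\propto X(a+a')Z(b+b').
\]
Hence the map $(a_1,b_1,a_2,b_2)\mapsto W$ is a homomorphism from $(\F_q^4,+)$ onto the Weyl group modulo phases. It is injective because distinct labels give operators that are orthogonal under the Hilbert--Schmidt inner product: the trace of a nonidentity $X(a)Z(b)$ vanishes, since either $a\ne0$ gives a zero diagonal, or $a=0,b\ne0$ gives a character sum of the nontrivial character $x\mapsto\omega^{\operatorname{tr}_{q/p}(bx)}$, which is zero. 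Additively $\F_q^4\cong\F_p^{4m}$, so the order is $q^4=Q^2$.

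\textbf{Exact sequence and lift count.} Conjugation by a restricted Clifford gate induces a map to $\Sp(4,q)$; by definition of the restricted group it lands in $\Sp(4,q)$. Its kernel consists of projective Cliffords acting trivially on all Weyl labels up to phase. I would show this kernel is the Weyl group as follows. The phases form a character of the label space, so they are given by $\omega^{\operatorname{tr}_{q/p}[\cdot,w]}$ for some $w$, because the trace-symplectic pairing is nondegenerate. The element $U W_w^{-1}$ then commutes with every Weyl operator, and since the Weyl operators form an irreducible set it is a scalar, by Schur's lemma. Surjectivity onto $\Sp(4,q)$ I would quote from Zhu (or realize it through generators: local $\SL(2,q)$ gates and an $\F_q$-SUM). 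The extension then gives $|\text{fiber}|=q^4$ by coset counting.

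\textbf{Probabilities, and the main obstacle.} Transport phases are functions of the blocks $A,B,C,D$ alone, and Weyl displacements only change phases. The uniform measure on Cliffords therefore pushes forward to the uniform measure on $\Sp(4,q)$, since every fiber has the same size $q^4$. This is why the probabilities carry over unchanged.

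For the frame potentials I would not recompute. I would invoke Zhu's Theorem~1 and Lemma~1: the orbit structure of the restricted Clifford group on Weyl labels (transitive on nonzero labels) gives $\Phi_2=2$. The third-moment count of orbits on pairs, or equivalently on stabilizer-code data, gives $\Phi_3=2q+2$, which equals the Haar value $6$ only when $q=2$. The main obstacle is exactly this identification: checking that Zhu's formula, stated for the full $\F_q$ Clifford group, applies verbatim to the restricted subgroup acting on $Q=q^2$ dimensions. I would first confirm this by rereading the paragraph following Eq.~(11) of Zhu for the restricted case.
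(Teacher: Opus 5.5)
Your proposal is correct and takes essentially the paper's route. The paper also takes the exact sequence and the values $\Phi_2=2$, $\Phi_3=2q+2$ from Zhu, and it obtains the $q^4$-to-one lift count and the unchanged phase probabilities just as you do; your Hilbert--Schmidt injectivity and Schur-lemma kernel arguments only make the Weyl part self-contained.

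The obstacle you flag can be closed without rereading Zhu. When $a_1+a_2+a_3=0$, the product $W_{a_1}W_{a_2}W_{a_3}$ is a nonzero scalar. Hence any Clifford lift of a symplectic map fixing $(a_1,a_2)$ fixes $W_{a_1}\otimes W_{a_2}\otimes W_{a_3}$. So $\Phi_3$ is the number of $\Sp(4,q)$-orbits on label pairs, namely $1+1+q+q=2q+2$. Note that the full $p$-ary Clifford group in dimension $q^2$ would instead give $2p+2$, so that formula must not be applied verbatim to the restricted group.
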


\subsection{Sampling perfect tensors}

\Cref{cor:compilation} gives a local-orbit sampler that is especially convenient for compilation.  The counting proof gives a second direct sampler in determinant coordinates.

\begin{corollary}
\label{cor:sampler}
The following procedure samples uniformly from the perfect-tensor symplectic classes.
\begin{enumerate}[label=\arabic*.]
\item Choose $\delta$ uniformly from $\F_q\setminus\{0,1\}$.
\item Choose $A,B,C$ independently and uniformly subject to
\[
 \det A=1-\delta,
 \qquad
 \det B=\det C=\delta.
\]
\item Set $D$ by \eqref{eq:Dformula}.
\end{enumerate}
Every perfect-tensor symplectic class is generated exactly once.
\end{corollary}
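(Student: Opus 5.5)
The plan is to show that the procedure defines a bijection between its set of outputs and the perfect-tensor symplectic classes, and that every output is reached with equal probability. Everything needed is already contained in \Cref{lem:detparam} and in the perfect-tensor count of \Cref{thm:fivephase}.

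\emph{Step 1: the output is always a perfect tensor.} Fix $\delta\in\F_q\setminus\{0,1\}$ and blocks $A,B,C$ with $\det A=1-\delta\neq0$ and $\det B=\det C=\delta\neq0$. By the converse part of \Cref{lem:detparam}, the block $D$ given by \eqref{eq:Dformula} is the unique fourth block making $S$ symplectic. Taking determinants in \eqref{eq:Dformula} and using \eqref{eq:detidentity} gives $\det D=\det A\det B/\det C=1-\delta\neq0$. Hence all four blocks are invertible, and by \Cref{lem:blockcriteria} the class $S$ is a perfect tensor.

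\emph{Step 2: every perfect class arises, and from exactly one choice of data.} Let $S$ be any perfect class. It is dual-unitary, so \Cref{lem:detparam} supplies a unique $\delta$ with $\det B=\det C=\delta$ and $\det A=1-\delta$. Invertibility of $A$ forces $\delta\neq1$, and invertibility of $B$ forces $\delta\neq0$. Thus the triple $(A,B,C)$ read off from $S$ is an admissible input to the procedure. By the uniqueness clause of \Cref{lem:detparam}, the $D$ produced by the procedure equals the given $D$.

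The map from data to output, $(\delta,A,B,C)\mapsto S$, is injective, because $\delta$ and the three blocks are read directly off $S$. Combining this with the previous paragraph, the map is a bijection onto the perfect locus. This recovers the count $(q-2)h^3$ from \Cref{thm:fivephase}, since each determinant fiber has cardinality $h$.

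\emph{Step 3: uniformity.} The probability of producing a given perfect $S$ with parameter $\delta$ is
\[
\frac{1}{q-2}\cdot\frac{1}{h^{3}}.
\]
This value does not depend on $S$, because every nonzero determinant fiber has the same cardinality $h$. It is this equal fiber size that allows $\delta$ to be chosen uniformly, rather than with weights proportional to fiber sizes. The procedure is therefore uniform on the $(q-2)h^3$ perfect classes.

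The only point needing care is the uniqueness of $D$ in \eqref{eq:Dformula}. This step relies on $C^TJ$ being invertible, which holds because $C$ is invertible. It is already covered by \Cref{lem:detparam}, so I do not expect a genuine obstacle here.
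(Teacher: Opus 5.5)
Your proposal is correct and follows the paper's own route. The paper obtains this sampler directly from the determinant parameterization of \Cref{lem:detparam} together with the perfect-tensor count $(q-2)h^3$ in the proof of \Cref{thm:fivephase}, with uniformity coming from the equal fiber size $h$ exactly as you argue; your explicit check that $\det D=1-\delta$, and your bijection bookkeeping, simply spell out what the paper leaves implicit.
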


A matrix with prescribed nonzero determinant $a$ can itself be sampled without rejection.  Choose a nonzero first column $u\in\F_q^2$, choose one $v_0$ with $\det(u,v_0)=a$, choose $c\in\F_q$ uniformly, and set $v=v_0+cu$.  The matrix $(u\ v)$ is uniform in the determinant-$a$ fiber because each nonzero $u$ has exactly $q$ admissible second columns.

\section{Edge erasure and causal local masking}
\label{sec:masking}

The edge channels act on one-site Weyl operators that remain confined to a light-cone edge.  Immediate erasure means no nonidentity one-site Weyl survives one edge step.  Delayed erasure means a mode may survive once but not twice.

\subsection{Perfect-tensor causal saturation}

\begin{lemma}
\label{lem:perfectpeeling}
Consider a brickwork circuit whose two-site gate is a Clifford perfect tensor.  In the doubled tensor network for a reduced output channel, every gate with two contracted legs can be canceled in any of the three possible $2|2$ orientations.  Starting from one output site, the greedy backward peeling region reaches exactly $4t$ consecutive input sites after $t$ periods.  Starting from two output sites, the smallest possible nontrivial peeling region has length $4t-2$.
\end{lemma}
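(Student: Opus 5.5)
We will work directly with the doubled (ket--bra) tensor network of the reduced output channel $\tr_{A^c}\circ\cU_C^t(\cdot)\cU_C^{-t}$ and combine the unitarity of the gate with the two-unitarity supplied by \Cref{lem:blockcriteria}. For a perfect Clifford gate all four blocks $A,B,C,D$ are invertible. By the isometry characterisation of perfect tensors recalled before \Cref{lem:blockcriteria}, contracting a gate $U$ with $U^\dagger$ on any two legs leaves a multiple of the identity on the remaining two legs. This holds whether the contracted pair is the two outputs (ordinary unitarity), one input and one output on the same side (dual unitarity, from $B,C$ invertible), or one input and one output on opposite sides (the third cut, from $A,D$ invertible). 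This establishes the first sentence of the statement: every gate with two contracted legs cancels in any of the three $2|2$ orientations, up to normalisation.

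Next we run the greedy peeling from a single output site $x$ after $t$ periods. We first strip the gates outside the backward light cone by ordinary unitarity, layer by layer from the top. The surviving gates form a diamond-shaped backward cone of depth $2t$. Every gate on its left or right boundary has one leg traced (the trace over $A^c$ on top, or a maximally mixed input below) and one leg inside the cone. We then peel boundary gates whose two external legs are both contracted. This is exactly the configuration produced when a boundary leg meets a maximally mixed pad site. Here the perfect-tensor property is stronger than dual unitarity: a dual-unitary gate cancels only in the two spacetime orientations, whereas a perfect tensor also cancels when the contracted legs sit diagonally. Consequently no boundary gate survives. Equivalently, in the Weyl picture of \Cref{lem:edgemaps}, case (i) gives $T_L=T_R=0$, so no one-site operator is confined to an edge.

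The peeling therefore never shrinks the cone, and the region reached at the input layer is the full causal base. The brickwork \eqref{eq:brickwork} has depth $2t$, and each layer widens the cone by one site per side, starting from a two-site gate. This gives exactly $4t$ consecutive input sites, matching the ceiling in \eqref{eq:causalceilings}. We will check the count by induction on $t$: one period adds two gates per side, and hence four input sites.

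For two output sites, the lower bound $4t-2$ will follow from the same argument applied to the backward cone of the pair. Two sites sharing a final gate have a cone at least as wide as that of a single site after peeling the last layer. This reduces to the one-site cone of depth $2t-1$, whose base has $4t-2$ sites. Two sites not sharing a final gate have cones whose union is at least as large. The construction behind \eqref{eq:causalceilings}, a two-site operator equal to the image of a one-site Weyl operator under the final gate, attains $4t-2$.

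The main obstacle is the bookkeeping in the greedy peeling. We must show that at every step some boundary gate has two contracted legs, in one of the three orientations, so that the process terminates exactly at the light-cone base and never stalls partway, particularly at the diagonal corners where only the $A,D$ orientation applies. We intend to handle this by a layer-by-layer induction that records, at each depth, which boundary legs are contracted.
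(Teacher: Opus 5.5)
Your first paragraph, on cancellation in all three $2|2$ orientations, is fine. The rest, however, does not establish the part of the lemma that is actually used. Your numbers $4t$ and $4t-2$ come from the width of the causal cone. That only reproduces the ceilings \eqref{eq:causalceilings}, which are upper bounds on $d_1$ and $d_2$. \Cref{thm:perfectmasking} uses the lemma for the \emph{lower} bounds: no nonidentity one-site (respectively weight-two) output Weyl string has a backward image of span below $4t$ (respectively $4t-2$).

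Your peeling paragraph does not deliver this, and it is internally inconsistent. Cancelling boundary gates where the pad meets the cone shrinks the region on which the reduced channel depends, so ``no boundary gate survives'' cannot yield ``the peeling never shrinks the cone''. The remark $T_L=T_R=0$ controls only the first half-layer, where a one-site label becomes two-site. It says nothing about whether a longer backward image can later stall or lose an endpoint. The ``layer-by-layer induction'' you postpone is essentially the whole content of the lemma.

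The missing step is the paper's endpoint no-cancellation argument. Write $S^{-1}=\begin{pmatrix}E&F\\G&H\end{pmatrix}$; for a perfect tensor all four blocks are invertible. At an exposed endpoint the active gate sees a label $(x,0)$ or $(0,x)$ with $x\neq0$ and returns $(Ex,Gx)$ or $(Fx,Hx)$, so the new exterior label is nonzero. No other label touches that exterior site in the current half-layer, so nothing from the interior can cancel it. Hence each endpoint advances by exactly one site per half-layer, giving spans $2,4,\dots,4t$. A pair on one bond either compresses to one site (span $4t-2$) or remains a two-site outward-facing interval (span $4t$).

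The same point is needed for your other two-site case. ``Cones whose union is at least as large'' is again an upper-bound statement: the support of a product of backward images can be smaller than the union of supports, through cancellation in the overlap. The lower bound comes from each of the two extreme exterior labels being produced by a single gate.

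Your doubled-network picture could instead be made rigorous in the contrapositive direction. If the pad contains an endpoint of the $4t$-site base, spacetime cancellations propagate up that edge, each cancelled gate leaving its remaining two legs contracted. A final spacetime or diagonal cancellation at the top gate then makes the observed site maximally mixed. That propagation, and its analogue for pairs, is precisely what you have not carried out.
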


\begin{proof}
A perfect tensor is unitary across every bipartition of two legs against two legs.  In the doubled network, contracting a tensor with its conjugate across any two corresponding legs cancels the pair.  For one output site, the first backward half-layer exposes the two legs of one gate, after which each alternating half-layer exposes one new gate at each boundary.  The peeled input interval has lengths
\[
 2,4,6,\ldots,4t
\]
after the successive half-layers.

For two output sites, the smallest region is obtained when the two sites occupy one gate in the first backward layer and are chosen as the two-leg image of a one-site Weyl.  The first gate then peels to one effective site.  The remaining $2t-1$ half-layers produce an interval of length $4t-2$.  Any other placement exposes at least as many boundary legs.  Write the
inverse symplectic action of one gate as
\[
S^{-1}
=
\begin{pmatrix}
E&F\\
G&H
\end{pmatrix}.
\]
Because the gate is a perfect tensor, all four blocks \(E,F,G,H\) are
invertible.  At an exposed boundary gate, the incoming two-site label
has the form \((x,0)\) or \((0,x)\) with \(x\neq0\).  Its backward image
is therefore, respectively,
\[
(Ex,Gx)
\qquad\text{or}\qquad
(Fx,Hx).
\]
Both components are nonzero.  In particular, the newly created
exterior label is nonzero.  That exterior site is touched by only this
boundary gate during the active half-layer, so no label coming from the
interior can reach the same site and cancel it.  Hence every exposed
boundary advances by one site at each subsequent half-layer, and the
backward support cannot terminate before reaching the stated boundary.
Therefore no smaller nontrivial backward support is possible.
\end{proof}

\begin{theorem}
\label{thm:perfectmasking}
For every finite-field Clifford perfect tensor and every $t\ge1$,
\[
 \boxed{d_1(t)=4t,\qquad d_2(t)=4t-2.}
\]
Consequently a payload of length $4t-1$ is perfectly hidden from every one-site observer, and a payload of length $4t-3$ is perfectly hidden from every two-site observer.  These values saturate the causal ceilings \eqref{eq:causalceilings}.
\end{theorem}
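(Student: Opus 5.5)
The proof has two halves: upper bounds from the causal ceilings \eqref{eq:causalceilings}, and matching lower bounds from the perfect-tensor support propagation of \Cref{lem:perfectpeeling}. The privacy statements then follow at once from \Cref{thm:referenceprivacy}. Since $|M|=4t-1<4t=d_1(t)$ and $|M|=4t-3<4t-2=d_2(t)$, the observer channels are completely depolarizing.

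For the upper bounds, we exhibit explicit output strings. For $d_1$, take any one-site output Weyl $W_x$ with $x\neq0$. Its backward image under $\cU_C^{-t}$ is supported in the backward light cone of a single site, which after $2t$ half-layers has width $4t$, so $d_1(t)\le 4t$. For $d_2$, take a one-site label $(y,0)$ at the input of a final-layer gate and push it forward through that gate. Its image $S(y,0)$ is a weight-$\le2$ output string whose backward image, after undoing that gate, is the one-site label $y$. This label then lives at the start of $2t-1$ remaining half-layers, giving span at most $4t-2$.

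The lower bounds are the substantive part. Fix a nonidentity output string $P$ with $\wt(P)\le r$ and track the backward image $\cU_C^{-s}P\cU_C^{s}$ half-layer by half-layer. I will prove the following invariant. After the first backward half-layer that touches the support, the leftmost and rightmost occupied sites are the exposed boundary legs of gates. Each subsequent half-layer then extends the support by exactly one site on each side. Only the outer boundary gates can create new exterior sites. The incoming label on such a gate's interior leg is nonzero, while its other input leg carries either $0$ or an interior label. In the case relevant to extension, where the gate straddles the current boundary, its input is $(x,0)$ or $(0,x)$ with $x\ne0$. By the block computation in \Cref{lem:perfectpeeling}, all four blocks of $S^{-1}$ are invertible, since $S^{-1}$ is again a perfect tensor with blocks $A^\#,C^\#,B^\#,D^\#$. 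Hence both output components of that gate are nonzero, and the new exterior site is occupied. No other gate touches that site in the half-layer, so there is no cancellation.

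Counting then gives the bounds. For $r=1$, the first backward half-layer maps a one-site label to two nonzero sites. The remaining $2t-1$ half-layers each add two sites, giving span exactly $4t$. For $r=2$, I split on the placement of the two output sites. If they lie in one first-layer gate, the worst case collapses them to one site, and the same count gives $1+2\cdot(2t-1)$, which rounds to $4t-2$. This needs care about which leg the surviving site sits on relative to the next layer. A surviving single site always becomes the boundary of one gate in the next half-layer and then contributes two sites, so the count is $2+2(2t-2)+\ldots=4t-2$; I will fix this bookkeeping by direct induction. If the two sites lie in different gates, or are separated, each generates its own expanding cone, and the union has span at least the single-site cone $4t\ge4t-2$. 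The cancellation check is automatic here because the extreme boundaries are controlled by only one of the two cones.

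The main obstacle is the careful boundary bookkeeping in the $r=2$ case: showing that the first backward gate can kill at most one site and that thereafter no interior interference can shrink the outermost boundaries. I would handle it by inducting only on the two extreme occupied sites, using invertibility of every block so that an extreme nonzero label always produces a nonzero label one step further out.
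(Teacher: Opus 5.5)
Your proposal is correct and takes essentially the paper's route: the same explicit witnesses give the causal ceilings, and the lower bounds come from the same boundary-advancement argument, namely that all four blocks of $S^{-1}$ are invertible and the new exterior site is touched only by the boundary gate, with the $r=2$ case split according to whether the two output sites share a first backward gate. One correction to your bookkeeping: discard the count $1+2(2t-1)$, which equals $4t-1$ rather than $4t-2$, and keep your corrected $2+2(2t-2)=4t-2$, since a lone surviving site meets only one gate and therefore grows to two sites, not three, at the next half-layer.
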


\begin{proof}
The peeling regions in \Cref{lem:perfectpeeling} give the lower bounds.  Equality for $d_1$ is realized by the backward image of any one-site Weyl operator.  Equality for $d_2$ is realized by choosing a two-site output Weyl that is the image, under the final gate, of a one-site Weyl and then pushing it through the remaining layers.
\end{proof}

For qubits, the perfect-tensor phase is impossible because four-party perfect tensors are equivalent to \(\operatorname{AME}(4,q)\) states, while an \(\operatorname{AME}(4,2)\) state does not exist \cite[p.~2]{higuchi2000entangled}.  A recent direct account of this nonexistence result is provided in
\cite[Observation~1 and Proofs~1--7]{Huber_2026}.
For qubits, the first available finite erasure depth is therefore \(\kappa=2\).

\subsection{Half-layer support growth}

Let
\[
 S^{-1}=R=\begin{pmatrix}E&F\\G&H\end{pmatrix}.
\]
Dual-unitarity makes the cross blocks $F$ and $G$ invertible.  A finite support interval is called outward-facing for a half-layer when its left endpoint is the right leg of an active bond and its right endpoint is the left leg of an active bond.

\begin{lemma}
\label{lem:survivalchains}
Fix either light-cone edge.  A nonidentity one-site Weyl label remains one-site through $s$ consecutive inverse half-layers along that edge if and only if the corresponding edge transfer map has a nonzero $s$th iterate on that label.  Thus,
\begin{enumerate}[label=(\alph*)]
\item in the perfect phase, no one-site label survives even one half-layer;
\item in the delayed phase, some label survives one half-layer but no label survives two;
\item in a glider phase, a nonzero projective line survives indefinitely.
\end{enumerate}
\end{lemma}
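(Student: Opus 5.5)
The plan is to reduce the statement to a single inverse half-layer and then iterate. Fix the right light-cone edge; the left edge is symmetric. A one-site label sitting on the outermost leg of the backward light cone enters exactly one active gate of the next inverse half-layer, and the other leg of that gate is the inward leg. That inward leg carries either the identity or the maximally mixed contribution, depending on whether we track the Heisenberg string or the doubled network. By the computation in the proof of \Cref{lem:edgemaps}, the label $(x,0)$ is mapped to $S^{-1}(x,0)=(A^\#x,B^\#x)$, and the case $(0,x)$ uses $C^\#,D^\#$. This image is again a one-site label, supported on the exterior leg, exactly when $A^\#x=0$, and the surviving label is $B^\#x$. This is precisely the rule defining $\mathcal T_R(W_x)$. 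Hence ``survives one half-layer as a one-site label'' is equivalent to $\mathcal T_R(W_x)\ne0$, and the surviving label is the image label.

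Next I would induct on $s$. After the first step the surviving label again occupies an edge leg of the next half-layer in the same orientation, because in the brickwork \eqref{eq:brickwork} the edge site alternates bonds consistently along a light-cone ray. The same one-step rule therefore applies again. By induction, survival through $s$ consecutive half-layers is equivalent to $\mathcal T_R^s(W_x)\ne0$. Phases are irrelevant here because $T_R$ is a partial permutation matrix, so a nonzero iterate is the same as a nonzero label. The one point I need to check carefully is orientation bookkeeping: depending on the parity of the half-layer, the edge uses either the $(A^\#,B^\#)$ pair or the $(D^\#,C^\#)$ pair. I expect this to be the main obstacle. I would resolve it by fixing the convention that ``right edge'' means the ray on which the label always enters as the first-input leg of the inverse gate. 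With that convention, homogeneity of the circuit makes the same transfer map act at every step.

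Items (a)--(c) then follow from \Cref{lem:edgemaps} together with the phase definitions in \Cref{thm:fivephase}. For (a), perfect means $A,D$ are invertible, so $\ker A^\#=\ker D^\#=0$ and $T_R=T_L=0$. For (b), the delayed phase means $T_R,T_L\ne0$ but $T_R^2=T_L^2=0$. Here the label lying on $\operatorname{im}A$ survives once. It cannot survive twice, because $B^\#\operatorname{im}A\ne\operatorname{im}A$ projectively, which is the nonfixed incidence in the proof of \Cref{thm:fivephase}. For (c), in the gliding case $B$ fixes the line $[u]=\operatorname{im}A=\ker A^\#$. Since $B^\#=B^{-1}$ when $\delta=1$, the map $B^\#$ maps $[u]$ to itself, so every nonzero label on this line has all iterates nonzero. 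In the SWAP sector, $T_R$ is a full permutation, so every label glides.
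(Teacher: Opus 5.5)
Your proposal is correct and follows essentially the same route as the paper: you identify one-site survival on an active bond with the partial Weyl map defining $T_\nu$ (the inward component $A^\#x$ must vanish because $B^\#$ is invertible), iterate along the staggered layers, and read off (a)--(c) from $T_\nu=0$, from $T_\nu\neq0=T_\nu^2$, or from an invariant line. The orientation issue you flag is not actually an obstacle. Both layers of \eqref{eq:brickwork} apply $C$ with the same leg ordering, so the same $T_\nu$ acts at every step, exactly as you conclude.
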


\begin{proof}
On one active bond, backward conjugation of a one-site label is obtained by applying one column pair of $R$.  It remains one-site exactly when the component on the inward leg vanishes.  The surviving exterior label is precisely the partial Weyl map defining the appropriate edge channel.  At the next staggered layer the same statement is applied to that surviving label.  Iterating identifies an $s$-step one-site path with a nonzero value of $T_\nu^s$.  The three conclusions follow from $T_\nu=0$, from $T_\nu\ne0$ but $T_\nu^2=0$, or from the invariant projective line of a nonnilpotent rank-one partial permutation.
\end{proof}

\begin{lemma}
\label{lem:nocollision}
In the delayed phase, two isolated one-site labels that survive inward from adjacent active bonds cannot collide into a one-site label at the next half-layer.
\end{lemma}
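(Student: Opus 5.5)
The plan is to interpret the statement concretely in the doubled brickwork picture. Two adjacent active bonds of the inverse half-layer sit at sites $(2n,2n+1)$ and $(2n+2,2n+3)$. Suppose a one-site label survives inward along the right edge of the left bond, landing on site $2n+1$. Suppose another survives along the left edge of the right bond, landing on site $2n+2$. At the next staggered half-layer these two labels, $(x,y)$ with $x,y\neq0$, enter a single gate on the bond $(2n+1,2n+2)$. The claim is that the backward image $R(x,y)=(Ex+Fy,\,Gx+Hy)$ is never supported on a single site. Here $R=S^{-1}$.

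\textbf{Step 1: constrain the incoming labels.} By \Cref{lem:edgemaps} and \Cref{lem:survivalchains}, a surviving right-edge label lies in $\operatorname{im}$ of the relevant edge map, namely $x\in B^\#(\ker A^\#)=B^\#(\operatorname{im}A)$. Likewise $y\in C^\#(\operatorname{im}D)$. In the delayed phase $\delta=1$, so $A$ and $D$ have rank one with $A=uv^T$. Then $B^\#=B^{-1}$, $C^\#=C^{-1}$, and $D=CJv\,u^TJB$ from the proof of \Cref{thm:fivephase}. Hence $x\in[B^{-1}u]$ and $y\in[J v]$, since $C^{-1}CJv=Jv$. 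Delayed erasure means $B[u]\neq[u]$ and $C[Jv]\neq[Jv]$.

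\textbf{Step 2: write out the collision condition.} With $R=\begin{pmatrix}A^\#&C^\#\\B^\#&D^\#\end{pmatrix}$, a collision onto the left site requires $B^\#x+D^\#y=0$ with $A^\#x+C^\#y\neq0$, and symmetrically for the right site. I will treat the two cases separately. For the right-site case I expand $A^\#x+C^\#y=0$. Since $A^\#=\operatorname{adj}(uv^T)$ annihilates $u$ and maps onto $[Jv]$-type lines, this becomes an equation between two vectors, each lying in an explicit projective line. The key observation I expect to use is that $A^\#x$ lies in $\operatorname{im}A^\#=\ker A=[Jv]^{\perp}$-direction. The equation then forces $C^{-1}y$ to lie on a specific line. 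Combined with $y\in[Jv]$, this should force $C^{-1}[Jv]=[Jv]$, i.e.\ $C$ fixes $[Jv]$, which contradicts delay. Alternatively it forces $y=0$ or $x=0$, which is excluded. The left-site case is symmetric: $B^\#x+D^\#y=0$ should force $B$ to fix $[u]$. The rank-one structure of $D^\#$ makes its image a single line, so $B^{-1}x$ must lie on it, giving the same contradiction.

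\textbf{Step 3: use symplecticity and the local normal form.} To avoid sign bookkeeping, I will first apply \Cref{thm:fullcores} and move to the core $S_{\mathsf R}$ with local dressings. Homogeneous repetition conjugates the dressings into effective $B,C$, as in the proof of \Cref{thm:fivephase}, so I may take $A=E$ and $D=F$ with general $B,C\in\SL(2,q)$. Then $u=e_1$ and $Jv$ are explicit, and the collision equations reduce to checking one matrix entry of $B$ and one of $C$. That entry is exactly the one whose vanishing expresses a fixed line.

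The main obstacle is Step 2: tracking which line each surviving label occupies, and showing that the single-site condition is \emph{precisely} the fixed-line incidence rather than something weaker. If the direct adjugate computation becomes messy, I will fall back on the normal form of Step 3 and verify it by explicit $2\times2$ entries.
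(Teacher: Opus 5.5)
Your Steps 1--2 already give a complete proof, by a different route from the paper's. The paper normalizes to \(A=e_1e_1^T\) and writes \(B,C\) entrywise. It computes the two outgoing labels \(\ell=(-\beta f,\,-\alpha c+\beta e)^T\) and \(r=G(\alpha Ge_1+\beta fe_1)\), and reads off \(\alpha=\beta=0\) from \(c,f\neq0\). You instead argue by line incidence, with no normalization:
\begin{itemize}
\item A collapse onto the left leg means \(B^{-1}x=-D^\#y\in\operatorname{im}D^\#=\ker D\). From \(D=CJv\,u^TJB\) one gets \(\ker D=B^{-1}[u]\). So \(0\neq x\in[u]\cap B^{-1}[u]\), which forces \(B[u]=[u]\).
\item A collapse onto the right leg equates \(A^\#x\in\operatorname{im}A^\#=\ker A=[Jv]\) with the nonzero vector \(-C^{-1}y\in C^{-1}[Jv]\). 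This forces \(C[Jv]=[Jv]\).
\end{itemize}
Two small fixes are needed. You should write out \(\ker D=B^{-1}[u]\) explicitly, since you only said the image of \(D^\#\) is ``a single line''. And \(\operatorname{im}A^\#\) is the line \([Jv]\) itself, not its Euclidean orthogonal complement \([v]\).

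The two routes buy different things. Yours shows that each delay condition on its own blocks one of the two collisions. It also never has to ask which coordinate changes respect the brickwork. The paper's computation proves slightly more: no incoming pair with \((\alpha,\beta)\neq(0,0)\) loses an output label. That is why both \(c\neq0\) and \(f\neq0\) enter each of its two cases.

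Step 3, however, is wrong and should be dropped rather than kept as a fallback. First, the ordered action of \Cref{thm:fullcores} does not respect homogeneous repetition; that is exactly the content of \Cref{cor:corevsphase}. Second, the proposed normal form does not exist in the delayed phase. Take \(A=e_1e_1^T\), \(D=JEJ=-e_2e_2^T\) and \(B,C\in\SL(2,q)\). The off-diagonal symplectic equation \(A^TJB+C^TJD=0\) then forces \(c=f=0\) (and \(d=e\)). Equivalently, with \(A=e_1e_1^T\) symplecticity forces \(D=-Ce_2e_2^TB\), and this equals \(-e_2e_2^T\) only when \(c=f=0\). That is precisely the case in which both edges glide, so any computation in that normal form would say nothing about delayed erasers.

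The normalization that is actually available is the brickwork-compatible frame change
\[
S\mapsto\operatorname{diag}(P,Q)\,S\,\operatorname{diag}(Q,P)^{-1}.
\]
It sends \(A\mapsto PAQ^{-1}\) but only conjugates \(B\) and \(C\). It therefore reaches \(A=e_1e_1^T\) with \(D=-Ce_2e_2^TB\) still general, which is exactly the form the paper uses.
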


\begin{proof}
Support is unchanged by invertible one-site changes of Weyl coordinates, so normalize the rank-one block to
\[
 A=e_1e_1^T,
 \qquad
 B=\begin{pmatrix}a&b\\c&d\end{pmatrix},
 \qquad
 C=\begin{pmatrix}e&f\\g&h\end{pmatrix},
 \qquad
 B,C\in\SL(2,q).
\]
The delayed conditions say that $B$ does not fix $[e_1]$ and $C$ does not fix $[Je_1]=[e_2]$, hence
\[
 c\ne0,
 \qquad
 f\ne0.
\]
The inverse-gate blocks are
\[
 E=A^\#=\begin{pmatrix}0&0\\0&1\end{pmatrix},
 \qquad F=C^{-1},
 \qquad G=B^{-1},
\]
and, from $D=-Ce_2e_2^TB$, the second column of $H=D^\#$ is
\[
 He_2=fGe_1.
\]
The two inward-surviving labels become, up to nonzero amplitudes, the adjacent pair
\[
 (\alpha Ge_1,\,\beta e_2).
\]
After the next inverse gate its left and right labels are
\[
 \ell=E(\alpha Ge_1)+F(\beta e_2)
 =\begin{pmatrix}-\beta f\\-\alpha c+\beta e\end{pmatrix},
\]
\[
 r=G(\alpha Ge_1)+H(\beta e_2)
 =G(\alpha Ge_1+\beta f e_1).
\]
If $\ell=0$, then $f\ne0$ gives $\beta=0$, and $c\ne0$ then gives $\alpha=0$.  If $r=0$, invertibility of $G$ and linear independence of $Ge_1$ and $e_1$ again give $\alpha=\beta=0$.  Thus no nonzero incoming pair can lose either output label.
\end{proof}

\begin{lemma}
\label{lem:halflayer}
Let $m_r(n)$ be the minimum span after $n$ alternating inverse half-layers, starting from a nonidentity Weyl string of weight at most $r$.

If $C$ is a perfect tensor, then
\[
 m_1(n)=2n\quad(n\ge1),
\]
\[
 m_2(1)=1,\qquad m_2(n)=2n-2\quad(n\ge2).
\]

If $C$ is a delayed bilateral eraser, then
\[
 m_1(1)=1,\qquad m_1(n)=2n-2\quad(n\ge2),
\]
\[
 m_2(1)=m_2(2)=1,\qquad m_2(n)=2n-4\quad(n\ge3).
\]
\end{lemma}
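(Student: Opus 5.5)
Each formula will be proved as a matching pair of bounds: a construction for the upper bound and a boundary-advance argument for the lower bound. The common principle is that, after the first few half-layers, the support is an outward-facing interval. Each subsequent inverse half-layer can then add at most one site at each end. It adds exactly one site at an end precisely when the boundary label fails to stay one-site along that edge. This is governed by \Cref{lem:survivalchains} and by the block action of $R=S^{-1}$. Since $F,G$ are invertible, an exposed boundary label $x\ne0$ on the outer leg always produces a nonzero label on the partner leg. So the interval never shrinks at an active boundary. It fails to grow only if the inward component vanishes, which is exactly the edge-channel survival condition.

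\emph{Perfect case.} By \Cref{lem:survivalchains}(a), no one-site label survives a half-layer. A single site therefore becomes two sites after the first half-layer. Thereafter each boundary advances by one site per half-layer, by the argument already given in \Cref{lem:perfectpeeling}, which yields $m_1(n)=2n$. For $m_2$, the upper bound uses a two-site string that is the forward image of a one-site Weyl under a gate. One half-layer collapses it to span $1$, and then $m_1(n-1)=2n-2$ follows. For the lower bound, split by placement. If the two sites lie on a common active bond and the image is one-site, we get the value above. Otherwise each site is an outward boundary, or the string is a generic two-site label. In that situation every boundary advances immediately, giving span at least $2n$ or at least $2n-2$ by comparison with $m_1$.

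\emph{Delayed case.} By \Cref{lem:survivalchains}(b), the optimum for $m_1$ is to choose the initial label in the surviving line, so that it stays one-site for the first half-layer. After that it must expand, since $T^2=0$ forbids a second survival. Once the interval has two outward boundaries, each boundary label was created by a non-surviving step. The question is whether such a boundary can survive one step and so delay growth. This is the step I expect to be the main obstacle. I would show that after expansion the boundary labels, such as the images $Gx$ or $Hx$, cannot lie in the surviving domain line often enough to stall both boundaries. More precisely, any stall costs one site at one end, and the nilpotency $T^2=0$ prevents consecutive stalls at the same end. Counting these stalls should reproduce the formula $2n-2$ exactly: the delay happens only at the first step, and later potential stalls are compensated. \Cref{lem:nocollision} supplies the key no-cancellation input. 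Two adjacent surviving one-site labels cannot merge back to span $1$, so the interval cannot collapse further.

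For $m_2$ in the delayed case, the upper bound takes the two-site preimage of a surviving one-site label. This gives span $1$ after the first half-layer. The label survives one more half-layer, so the span is still $1$, and then $m_1$ growth gives $2n-4$. For the lower bound, I would run the same case split as in the perfect case. The only new configurations are disjoint placements of the two sites, which by \Cref{lem:nocollision} behave at least as badly as the worst case. Finally, the identities $d_r(t)=m_r(2t)$ should recover the stated masking laws.
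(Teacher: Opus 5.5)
Your overall skeleton matches the paper's: survival chains for one-site labels, outward-facing intervals, \Cref{lem:nocollision} for adjacent bonds, and same-bond compression for the upper bounds. The upper bounds are fine. The gap is in the delayed-phase lower bound, at exactly the step you flag as the main obstacle, and it comes from a wrong growth mechanism.

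\textbf{The delayed $m_1$ bound.} You claim an outward-facing end fails to grow when the inward component of the boundary label vanishes. You then try to control such ``stalls'' using nilpotency and a compensation count. For an outward-facing interval of span at least two, there are no stalls at all.
\begin{itemize}
\item At the left end, the boundary label \(x\ne0\) sits on the right leg of the active bond. The exterior site receives \(Fx\), and no other label touches that site.
\item At the right end, the boundary label \(y\) similarly deposits \(Gy\) on the exterior site.
\item Since \(F\) and \(G\) are invertible, both exterior labels are nonzero. So both ends advance by exactly one site whatever \(Hx\) and \(Ey\) are. A vanishing inward component only leaves the identity on a site that is now interior.
\end{itemize}
The new endpoints are legs of the just-active bonds, so the interval is again outward-facing for the staggered layer. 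The span therefore grows by exactly two at every later half-layer. This also contradicts your own preceding sentence: if the partner leg always receives a nonzero label, that end has grown.

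Edge survival matters only while the support is a single site. There, survival means the label hops to the neighbouring site instead of spreading. With \(T_\nu^2=0\) there is at most one hop, then a two-site outward-facing interval, which gives \(2n-2\) immediately. Your stall count could not have produced this bound. \(T_\nu^2=0\) only forbids consecutive stalls at one end, so in your model alternating stalls would be allowed and the span could be of order \(n\). ``Later potential stalls are compensated'' is not an argument.

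\textbf{The delayed $m_2$ bound.} The same misplacement of the difficulty affects this bound. The span can genuinely shrink only for initial two-bond strings whose extreme label sits on the outer leg of a bond pointing into the support. There the extreme moves inward when \(Ey=0\) at the left end or \(Hx=0\) at the right end. The paper closes this with an explicit case split:
\begin{itemize}
\item An extreme that does not move inward produces an outward-facing front immediately.
\item If both extremes move inward from separated bonds, \(T_\nu^2=0\) stops both at the next layer. The resulting outward-facing interval already has span at least \(4\) after two half-layers, hence at least \(2n\).
\item If the bonds are adjacent, the two hopped labels land on one active bond, and \Cref{lem:nocollision} forbids compression. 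After that the interval is outward-facing.
\end{itemize}
Your sentence that disjoint placements ``behave at least as badly as the worst case'' by \Cref{lem:nocollision} needs exactly this analysis. In particular, the separated case is one the no-collision lemma does not cover.
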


\begin{proof}

Whenever an interval is outward-facing, each exterior output site is
touched by exactly one endpoint label in the active half-layer.  No
interior label contributes to either exterior site, and the invertible
cross blocks $F$ and $G$ map the two nonzero endpoint labels to nonzero
exterior labels.  The next interval is outward-facing for the alternating
layer, so every later half-layer increases the span by exactly two.

For a perfect tensor, $E$ and $H$ are also invertible.  A one-site label therefore becomes a two-site outward-facing interval after one half-layer, proving $m_1(n)=2n$.  A weight-two operator placed on one active bond can be chosen as the inverse image of a one-site label, so it compresses to one site in the first half-layer and then follows the one-site profile.  Any operator occupying two different active bonds produces nonzero exterior labels through $F$ and $G$ and is strictly larger.  This proves the perfect profiles.

For a delayed eraser, $E$ and $H$ have rank one.  By \Cref{lem:survivalchains}, a one-site label can survive exactly one inverse half-layer but cannot survive the next.  When the survival chain terminates, both outputs of the active gate are nonzero; they form a two-site outward-facing interval.  Every later half-layer therefore expands the span by two.  This gives the $m_1$ profile.

A weight-two string placed on one active bond may compress to one site.  It can then follow one exceptional one-site edge step, but a second such survival is forbidden by \Cref{lem:survivalchains}.  Hence by the third half-layer it has become a two-site outward-facing interval.  Conversely, choosing the inverse image of the exceptional edge line realizes two consecutive one-site stages.

It remains to exclude a better history from two different active bonds.  If an extreme label does not survive inward, that side creates an outward-facing front immediately.  If both extremes survive inward and the bonds are separated, both survivals terminate at the next layer and the resulting exterior fronts are farther apart than the claimed minimum.  If the bonds are adjacent, the two surviving labels meet on one active bond, but \Cref{lem:nocollision} shows that they cannot compress there.  The next interval is therefore outward-facing and is again no smaller than the same-bond construction.  This proves the $m_2$ profile.
\end{proof}

\begin{theorem}
\label{thm:deficit}
Let $C$ be a finite-field Clifford dual-unitary gate over $\F_q$.
\begin{enumerate}[label=(\roman*)]
\item If $\kappa(C)=1$, then for every $t\ge1$,
\[
 d_1(t)=4t,\qquad d_2(t)=4t-2.
\]
\item If $\kappa(C)=2$, then
\[
 d_1(t)=4t-2\quad(t\ge1),
\]
\[
 d_2(1)=1,\qquad d_2(t)=4t-4\quad(t\ge2).
\]
\item If $\kappa(C)=\infty$, then a one-site Weyl glider exists and
\[
 d_r(t)=1
\]
for every $r,t\ge1$.
\end{enumerate}
In the finite-erasure phases and away from the single depth-one exception,
\[
 \boxed{d_r(t)=4t-2(r-1)-2(\kappa-1),\qquad r\in\{1,2\}.}
\]
Thus edge-erasure depth is exactly one plus half the deficit from the causal masking ceiling.
\end{theorem}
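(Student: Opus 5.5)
The plan is to reduce everything to the half-layer profiles of \Cref{lem:halflayer}. Since one Floquet period of \eqref{eq:brickwork} consists of two staggered layers, it equals two half-layers. The quantity $d_r^C(t)$ in \eqref{eq:dr} is therefore the minimal span of a backward image after $n=2t$ alternating inverse half-layers, starting from a nonidentity output Weyl string of weight at most $r$. The first step is to record this identification: $d_r(t)=m_r(2t)$. Two points need care here. The minimum in \eqref{eq:dr} ranges over output strings $W$ with $\wt(W)\le r$, whose preimage $P$ is what gets measured. Also, the output string may sit anywhere relative to the final layer. Both are covered by the definition of $m_r$, which already minimizes over placements.

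For case (i), $\kappa=1$ means $T_L=T_R=0$, which is the perfect phase by \Cref{thm:fivephase}. Substituting $n=2t$ into the perfect profile gives $m_1(2t)=4t$ and $m_2(2t)=4t-2$, since $2t\ge2$. This reproduces \Cref{thm:perfectmasking}.

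For case (ii), the delayed profile gives $m_1(2t)=4t-2$ for $t\ge1$, because $2t\ge2$. For $d_2$, the value $t=1$ gives $n=2$, and $m_2(2)=1$. For $t\ge2$ we have $n\ge4\ge3$, so $m_2(2t)=4t-4$. The boxed formula then follows by inspection: with $\kappa=1$ it gives $4t-2(r-1)$, and with $\kappa=2$ it gives $4t-2(r-1)-2$. The only mismatch is at $r=2$, $t=1$, $\kappa=2$, where the formula gives $0$ but the true value is $1$. This is the single depth-one exception that the statement excludes.

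For case (iii), I would use \Cref{lem:edgemaps} and \Cref{lem:survivalchains}(c). If $\kappa=\infty$, some $T_\nu$ is not nilpotent. This happens in the glider phases and in the SWAP sector, where the channels are full permutations. There is then an invariant projective line whose nonzero labels remain one-site along that edge under every inverse half-layer. Fix such a label $x$, place it on the corresponding edge leg, and choose the output string to be its forward image after $2t$ half-layers. Running backward, that output is one-site, so $r\ge1$ is enough, and its preimage is also one-site, so $d_r(t)\le1$. The bound $d_r(t)\ge1$ is trivial because $P\ne I$.

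I expect the main obstacle to be alignment in case (iii). The edge channel of \Cref{lem:survivalchains} follows a label moving outward by one site per half-layer, and the same edge orientation must be active at every layer of the staggered brickwork. I would check that a right-moving label always enters the next layer on the left leg of an active bond, and similarly for the mirror case. I would also check that an invariant line of the projective iterate gives survival for all $s$, not only along one period. Once $T_\nu^s\ne0$ on the line for all $s$ is verified, the rest is bookkeeping.
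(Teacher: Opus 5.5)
Your proposal is correct and is essentially the paper's proof: it substitutes $n=2t$ into \Cref{lem:halflayer} for (i)--(ii), and in the non-nilpotent case it uses the line from \Cref{lem:edgemaps} on which the edge partial permutation closes to obtain a one-site glider. Your alignment concern resolves automatically, because a label pushed backward from the left leg to the right leg of an active bond sits on the left leg of the next staggered bond, where the same edge channel applies again; the one small slip is that in the SWAP sector an invariant projective line need not exist, but none is needed since that edge channel permutes all nonzero labels.
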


\begin{proof}
One Floquet period consists of two half-layers, so substitute $n=2t$ into \Cref{lem:halflayer}.  In the nonnilpotent phases, \Cref{lem:edgemaps} supplies a projective line on which the edge partial permutation closes.  The associated one-site Weyl operator translates indefinitely without spreading.
\end{proof}

\section{Local-observer private encoding}
\label{sec:privacy}

The masking distances are not only operator-spreading diagnostics.  By \Cref{thm:referenceprivacy}, they are exact privacy thresholds for the reversible encoder \eqref{eq:privateencoder}.  Define the $r$-local privacy radius
\[
 p_r^C(t)=d_r^C(t)-1,
\]
the largest integer message length that is guaranteed to be invisible to every output set of at most $r$ qudits.

\begin{corollary}
\label{cor:privacyradii}
For every prime power $q$, the finite-erasure phases have
\[
\begin{array}{c|cc}
\text{transport phase}
& p_1(t)&p_2(t)\\
\hline
\kappa=1
&4t-1&4t-3\\
\kappa=2,\ t\ge2
&4t-3&4t-5.
\end{array}
\]
At depth $t=1$, a delayed eraser has $p_1(1)=1$ and $p_2(1)=0$.  Every gliding phase has $p_r(t)=0$ for all $r,t\ge1$.

If a message interval $M$ obeys the relevant bound and $A$ is any output set with $|A|\le r$, then
\[
 (\id_R\otimes\cN_{A\leftarrow M}^{(t)})(\rho_{RM})
 =\rho_R\otimes I_A/q^{|A|}
\]
for every reference-entangled input $\rho_{RM}$.  In the perfect phase these radii are optimal within the depth-$2t$ nearest-neighbor brickwork architecture because they saturate the causal ceilings \eqref{eq:causalceilings}.
\end{corollary}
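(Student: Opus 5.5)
This corollary is a direct combination of the masking distances in \Cref{thm:deficit} with the privacy statement \Cref{thm:referenceprivacy}. The plan is to compute the radii $p_r(t)=d_r(t)-1$ phase by phase and then invoke the privacy theorem. Optimality will come from the causal ceilings \eqref{eq:causalceilings}.

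The first step is to read off the radii. For $\kappa=1$, \Cref{thm:deficit}(i), equivalently \Cref{thm:perfectmasking}, gives $d_1(t)=4t$ and $d_2(t)=4t-2$. Hence $p_1(t)=4t-1$ and $p_2(t)=4t-3$. For $\kappa=2$ with $t\ge2$, part (ii) gives $d_1(t)=4t-2$ and $d_2(t)=4t-4$, so $p_1(t)=4t-3$ and $p_2(t)=4t-5$. At $t=1$ the same part gives $d_1(1)=2$ and the exceptional value $d_2(1)=1$, so $p_1(1)=1$ and $p_2(1)=0$. In the gliding phases, part (iii) gives $d_r(t)=1$ for all $r,t\ge1$, so $p_r(t)=0$.

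One point needs care: $p_r$ is defined as the \emph{largest} guaranteed-invisible length. The sufficiency direction, that any $|M|\le p_r(t)$ is private, comes directly from the theorem. For maximality, note that $d_r(t)$ is attained by some nonidentity output Weyl string $W_A$ with $|A|\le r$ whose backward image has span exactly $d_r(t)$. Take $M$ to be an interval of length $d_r(t)$ covering that span, and choose $\rho_M$ with a nonzero coefficient on the backward image. Then the expectation of $W_A$ in the output reduced state is nonzero, so the channel $\cN_{A\leftarrow M}^{(t)}$ is not depolarizing. Thus $p_r(t)=d_r(t)-1$ is sharp.

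Next I verify the interval hypothesis. The condition $|M|\le p_r(t)$ is the same as $|M|<d_r^C(t)$. For any $A$ with $|A|\le r$, \Cref{thm:referenceprivacy} then yields \eqref{eq:referenceprivacy} for every reference $R$ and every state $\rho_{RM}$, on any finite chain containing the backward cone.

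The last step is optimality in the perfect phase. Any depth-$2t$ nearest-neighbour brickwork satisfies $d_1(t)\le 4t$ and $d_2(t)\le 4t-2$ by \eqref{eq:causalceilings}. So no gate in this architecture can exceed $p_1=4t-1$ or $p_2=4t-3$, and the perfect phase attains these values. The main subtlety is the maximality argument for the radius. The rest is substitution.
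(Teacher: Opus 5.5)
Your proposal is correct and follows the paper's proof. Like the paper, you read off $p_r(t)=d_r(t)-1$ from \Cref{thm:deficit}, invoke \Cref{thm:referenceprivacy} for exact reference-secure decoupling, and cite the causal ceilings \eqref{eq:causalceilings} for optimality. Your added sharpness check is valid and goes beyond the paper, which simply builds ``largest invisible length'' into the definition of $p_r$: an interval of length $d_r(t)$ covering a minimizing backward image, with $\rho_M$ having nonzero overlap with it, leaks the expectation of $W_A$.
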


\begin{proof}
The formulas are $d_r(t)-1$ using \Cref{thm:deficit}.  Exact decoupling follows from \Cref{thm:referenceprivacy}.  The causal optimality statement follows because no circuit in the architecture can have $d_1(t)>4t$ or $d_2(t)>4t-2$.
\end{proof}

\subsection{Operational interpretation and limitations}

The encoder supplies a mixed-state spatial privacy structure rather
than a complete quantum secret-sharing access structure.  In the
terminology of Cleve, Gottesman, and Lo, an authorized set can
reconstruct the secret, whereas an unauthorized set has a reduced
state that is independent of the secret.  A \(((k,n))\) threshold
scheme requires every coalition of at least \(k\) shares to be
authorized and every coalition of at most \(k-1\) shares to be
unauthorized
\cite[definition of a threshold scheme and discussion of general
access structures]{cleve1999share}.

For comparison, a \(2n\)-party perfect tensor, regarded as an isometry
from one logical qudit to \(2n-1\) physical qudits, defines a
\([[2n-1,1,n]]_q\) erasure-correcting code and a
\(((n,2n-1))\) threshold scheme: any \(n-1\) shares contain no
information about the logical qudit, while any \(n\) shares can
reconstruct it
\cite[Sec.~2]{pastawski2015holographic}.  Here,
\Cref{thm:referenceprivacy} proves only that every output set of size
at most \(r\) is unauthorized below the stated radius, while
\Cref{cor:fulldecoder} proves that the full output set is authorized.

The independent maximally mixed exterior supplies the randomness, so
the construction differs from the pure-ancilla deterministic masking
setting studied in \cite{modi2018masking,li2018multipartite,shi2021k}.  It is also
operationally distinct from standard bipartite quantum data hiding.
In the latter setting, the adversaries collectively possess the full
bipartite state but are restricted to LOCC measurements
\cite[Secs.~I--III]{divincenzoquantum}.  Here, an observer has
unrestricted access to a bounded output set \(A\), but no access to
\(A^c\).  The guarantee
\[
(\id_R\otimes\cN_{A\leftarrow M}^{(t)})(\rho_{RM})
=
\rho_R\otimes\frac{I_A}{q^{|A|}}
\]
therefore implies perfect indistinguishability under every measurement
on \(A\), including when the encoded input is entangled with a
reference.

The closest masking analogue is \(k\)-uniform quantum information
masking, where every reduction to any chosen \(k\) parties is
independent of the encoded state
\cite[Definition~9]{shi2021k}.  Here,
\Cref{thm:referenceprivacy} establishes this exact decoupling for every
output set of size at most \(r\), while coalitions of larger
intermediate size are not classified.  We therefore use the term
\emph{local-observer private encoding} rather than either a standard
LOCC data-hiding protocol or a threshold quantum secret-sharing
scheme.

The result establishes exact low-locality decoupling, not Haar-like
scrambling or privacy against arbitrary large coalitions.  Quantum convolutional codes offer a useful comparison because they also encode a spatially ordered stream by repeatedly applying
finite-range Clifford operations.  In an early concrete stabilizer
construction, Ollivier and Tillich described a rate-\(1/5\) quantum
convolutional code whose stabilizer generators and logical Pauli
operators are periodic translates of finite-support operators.  Their
construction admits online encoding and, for the exhibited
non-catastrophic encoder, forward online decoding.  They also gave a
maximum-likelihood error-estimation algorithm whose classical
complexity is linear in the number of encoded qubits for memoryless
channels
\cite[pp.~1--4,  Eqs.~(2) and~(7), and the sections
``Error propagation and online decoding'' and
``Maximum likelihood error estimation'']{OllivierTillich2003}.

The masking distances \(d_r(t)\),
however, are not convolutional-code distances.  They measure the
smallest input span of the backward image of an output Weyl observable
of weight at most \(r\), and therefore determine an exact
local-decoupling threshold.  Whether the higher-local quantities
\(d_r(t)\) admit a coding-theoretic interpretation analogous to a
quantum column-distance profile remains open.  Stabilizer subsystem
codes provide a second comparison target.  In the subsystem stabilizer
formalism, the code space decomposes as
\[
\mathcal C=\mathcal A\otimes\mathcal B,
\]
where the logical information is stored in \(\mathcal A\), while gauge
transformations act trivially on \(\mathcal A\) and only on the gauge
subsystem \(\mathcal B\). The logical Pauli group is represented by the
quotient \(N(\mathcal S)/\mathcal G\), and the subsystem-code distance is
the minimum Pauli weight in \(N(\mathcal S)\setminus\mathcal G\)
\cite[pp.~2--3, Eqs.~(2)--(3), and the paragraph titled
``Bounds'']{poulin2005stabilizer}. This gives a concrete code-theoretic
comparison target, although the masking distances \(d_r(t)\) defined
here are not identified with subsystem-code distances. In classical
convolutional coding, a maximum-distance-profile code has column
distances that attain their largest permitted values throughout the
maximal initial window
\cite[Sec.~II, especially Definition~2.9 and
Theorem~2.12]{gluesing2006strongly}. Quantum burst-error-correcting codes provide another spatially local comparison. They correct Pauli errors whose nonidentity support is confined to a contiguous interval and satisfy the quantum Reiger bound
\[
n-k\ge 4\ell
\]
when every burst of length at most \(\ell\) is correctable
\cite[Secs. II.A--B and Theorem 3]{fan2018construction}.
Unlike a burst-error-correction distance, \(d_r(t)\) measures the
smallest input span of the backward image of a low-weight output
observable and therefore determines a local-decoupling threshold.  The
comparisons above are structural only; no direct coding-theoretic
equivalence is claimed.

\section{Qutrit implementation and error sensitivity}
\label{sec:qutritdemo}

The smallest local dimension supporting the immediate-erasure phase is
$q=3$.  In this case there is only one ordered perfect core, since
$\F_3\setminus\{0,1\}=\{2\}$.  The canonical matrix
\eqref{eq:canonicalcore} becomes
\begin{equation}
S_3=
\begin{pmatrix}
2&0&2&0\\
0&1&0&1\\
2&0&1&0\\
0&1&0&2
\end{pmatrix}
\in\Sp(4,3),
\label{eq:qutritS3}
\end{equation}
where the Weyl coordinates are ordered as
$(x_1,z_1,x_2,z_2)$.  This representative has the following two-gate unitary realization.  Let
\begin{equation}
M=
\begin{pmatrix}
2&2\\
2&1
\end{pmatrix}
\in\SL(2,3),
\qquad
U_3\lvert a,b\rangle
=
\lvert 2a+2b,\,2a+b\rangle,
\label{eq:qutritU3}
\end{equation}
with all arithmetic modulo three.  Since
\[
M^{-T}=
\begin{pmatrix}
1&1\\
1&2
\end{pmatrix},
\]
the action of $U_3$ on the $x$ and $z$ Weyl exponents is exactly
\eqref{eq:qutritS3}.  Equivalently,
\begin{equation}
U_3
=
\operatorname{SUM}_{2\to1}^{-1}
\operatorname{SUM}_{1\to2}^{-1},
\label{eq:twoSUMimplementation}
\end{equation}
where
\[
\operatorname{SUM}_{1\to2}^{-1}\lvert a,b\rangle
=\lvert a,b-a\rangle,
\qquad
\operatorname{SUM}_{2\to1}^{-1}\lvert a,b\rangle
=\lvert a-b,b\rangle.
\]
Thus one perfect-tensor brick requires only two inverse qutrit-SUM
gates.  This realization is locally Clifford-equivalent to the
two-controlled-adder construction of the four-qutrit perfect tensor
given in \cite[App.~A.2, Eq.~(A.11)]{pastawski2015holographic}.  Indeed,
over \(\F_3\),
\[
\begin{pmatrix}
2&2\\
2&1
\end{pmatrix}
=
\begin{pmatrix}
1&0\\
0&2
\end{pmatrix}
\begin{pmatrix}
2&1\\
1&1
\end{pmatrix}
\begin{pmatrix}
1&0\\
0&2
\end{pmatrix},
\]
where the middle matrix is the qutrit perfect-tensor map of
\cite[Eq.~(A.11)]{pastawski2015holographic}, and the two diagonal
matrices correspond to one-qutrit inversions on the second input and
second output. A complementary representative of the same qutrit perfect core arises from the quadratic-phase Bell-basis construction of Yu, Wang, and Kos
\cite[Sec.~2.2, Eqs.~(11), (19), and (20)]{Yu2024hierarchical}.
By \Cref{thm:fullcores}, their \(D=3\) example is ordered locally
Clifford-equivalent to \(U_3\).

The normalized Choi state of \(U_3\) is the nine-term stabilizer state
\begin{equation}
\lvert U_3\rangle
=
\frac13\sum_{a,b\in\F_3}
\lvert a,b,2a+2b,2a+b\rangle,
\label{eq:qutritAMEstate}
\end{equation}
which is an \(\operatorname{AME}(4,3)\) state.  Directly reshaping the four-index tensor in the three inequivalent \(2|2\) orientations gives three \(9\times 9\) matrices proportional to unitaries; multiplying each reshape by \(3\) gives a unitary matrix.

\subsection{Exact support and channel verification}

We implemented the inverse brickwork dynamics directly on Weyl labels.
For every depth $t\in\{1,2,3\}$, the search enumerated all nonidentity
one-site labels and all pairs of nonidentity two-site labels on both
output sublattices.  The computed minima are
\begin{equation}
\begin{array}{c|cc|cc}
&\multicolumn{2}{c|}{\text{qutrit perfect core}}
&\multicolumn{2}{c}{\text{qubit delayed eraser}}\\
 t&d_1(t)&d_2(t)&d_1(t)&d_2(t)\\
\hline
1&4&2&2&1\\
2&8&6&6&4\\
3&12&10&10&8
\end{array}
\label{eq:qutritdistancecheck}
\end{equation}
The qutrit values reproduce the causal optima
$d_1(t)=4t$ and $d_2(t)=4t-2$, while the binary representative
reproduces the delayed-erasure deficits.  A separate exhaustive
enumeration of the determinant parameterization gives the five
$q=3$ phase counts
\begin{equation}
13824,
\quad10368,
\quad6912,
\quad1152,
\quad576,
\label{eq:qutritphasecheck}
\end{equation}
in the order immediate erasure, delayed erasure, one gliding edge,
two gliding edges, and SWAP transport.

We also evaluated the full channel matrix for a four-qutrit periodic
brickwork circuit after one period.  For a channel
$\cN_{A\leftarrow M}$, define the normalized Choi leakage
\begin{equation}
\ell(A\leftarrow M)
=
\frac12\left\|
J(\cN_{A\leftarrow M})
-
\frac{I_R}{3^{|M|}}\otimes\frac{I_A}{3^{|A|}}
\right\|_1.
\label{eq:choileakage}
\end{equation}
This quantity vanishes exactly when the observed channel is completely
depolarizing, including for inputs entangled with the reference $R$.
For a contiguous three-qutrit message, the maximum over all four
one-site observers was
\[
\max_{|A|=1}\ell(A\leftarrow M)
=7.1\times10^{-17}.
\]
For a one-qutrit message, the maximum over all six two-site sets,
including disconnected pairs, was
\[
\max_{|A|=2}\ell(A\leftarrow M)
=1.9\times10^{-16}.
\]
Applying the exact inverse circuit produced the identity recovery
channel, and hence both its entanglement fidelity and average recovery
fidelity are equal to one
\cite[Eqs.~(2)--(3)]{NIELSEN2002249}.
The residual leakage is therefore numerical roundoff rather than
physical leakage.
\subsection{Coherent calibration errors}

To test sensitivity away from the perfect-tensor manifold, every
brick was replaced by
\begin{equation}
\widetilde U_3(\epsilon)
=
\exp(-i\epsilon H)U_3,
\qquad
H=\lvert00\rangle\!\langle11\rvert
 +\lvert11\rangle\!\langle00\rvert.
\label{eq:coherenterrormodel}
\end{equation}
This perturbation mixes two computational-basis amplitudes and is not
a phased permutation.  For generic nonzero \(\epsilon\), it breaks both nontrivial reshuffled-unitarity conditions.  The forward circuit used
$\widetilde U_3(\epsilon)$, while the authorized receiver applied the
ideal inverse.  Let $\mathcal R_\epsilon$ denote the induced decoded
channel on the one-qutrit message.  We measure its recovery quality by
the standard average gate fidelity relative to the identity channel,
\begin{equation}
F_{\rm avg}(\mathcal R_\epsilon)
=
\int d\psi\,
\left\langle\psi\middle|
\mathcal R_\epsilon
\bigl(\lvert\psi\rangle\!\langle\psi\rvert\bigr)
\middle|\psi\right\rangle,
\label{eq:averagerecoveryfidelity}
\end{equation}
where $d\psi$ is normalized Haar measure
\cite[Eqs.~(1)--(2)]{NIELSEN2002249}.
The one-site leakage uses a maximal three-qutrit payload, while the
two-site leakage and $F_{\rm avg}$ use a one-qutrit payload.

\begin{table}[t]
\centering
\begin{tabular}{cccc}
\toprule
$\epsilon$
&$\max_{|A|=1}\ell(A\leftarrow M_3)$
&$\max_{|A|=2}\ell(A\leftarrow M_1)$
&$F_{\rm avg}$\\
\midrule
0     &$7.0\times10^{-17}$&$1.9\times10^{-16}$&1.000000\\
0.01  &$3.878\times10^{-3}$&$2.222\times10^{-3}$&0.999956\\
0.02  &$7.756\times10^{-3}$&$4.444\times10^{-3}$&0.999822\\
0.05  &$1.938\times10^{-2}$&$1.111\times10^{-2}$&0.998890\\
0.10  &$3.869\times10^{-2}$&$2.219\times10^{-2}$&0.995574\\
0.20  &$7.685\times10^{-2}$&$4.415\times10^{-2}$&0.982518\\
\bottomrule
\end{tabular}
\caption{Reference-sensitive local leakage and ideal-inverse recovery
for the coherent error model \eqref{eq:coherenterrormodel}.}
\label{tab:qutritcoherenterror}
\end{table}

For $\epsilon\le0.05$, least-squares fits through the origin give
\begin{equation}
\max_{|A|=1}\ell(A\leftarrow M_3)
\simeq0.388\epsilon,
\qquad
\max_{|A|=2}\ell(A\leftarrow M_1)
\simeq0.222\epsilon,
\label{eq:leakagefit}
\end{equation}
and
\begin{equation}
1-F_{\rm avg}
\simeq0.444\epsilon^2.
\label{eq:recoveryfit}
\end{equation}
Thus coherent miscalibration is first visible in local privacy, while
the ideal-inverse recovery infidelity begins quadratically.  

\subsection{Stochastic Pauli noise}

A different behavior occurs for stochastic Weyl errors.  After each
half-layer, independently on every qutrit, apply the identity with
probability $1-p$ and one of the eight nonidentity qutrit Weyl
operators uniformly with total probability $p$.  Every fixed error
pattern can be propagated to the final time through the intervening
Clifford gates, so the noisy output has the form
\begin{equation}
P_{\boldsymbol e}\,
\cU_{U_3}^{t}(\rho)\,
P_{\boldsymbol e}^{\dagger}
\label{eq:paulipropagation}
\end{equation}
for a message-independent tensor-product Weyl operator
$P_{\boldsymbol e}$.  Therefore every output set that was exactly
maximally mixed in the ideal circuit remains exactly maximally mixed
for every error pattern.  Stochastic Pauli noise degrades authorized
recovery but does not create local information leakage.

The recovered one-qutrit channel is a Weyl--Pauli channel of the form
\begin{equation}
\mathcal P_p(\rho)
=
\sum_{a\in\F_3^2}
\pi_a W_a\rho W_a^\dagger,
\qquad
\sum_{a\in\F_3^2}\pi_a=1.
\label{eq:recoveredpaulichannel}
\end{equation}
Let
\[
\lvert\Phi_3\rangle
=
\frac{1}{\sqrt3}
\sum_{j=0}^{2}\lvert j,j\rangle.
\]
Since every nonidentity qutrit Weyl operator is traceless,
\begin{align}
F_e(\mathcal P_p)
&=
\sum_{a\in\F_3^2}
\pi_a
\left|
\langle\Phi_3|
(I\otimes W_a)
|\Phi_3\rangle
\right|^2
\nonumber\\
&=
\sum_{a\in\F_3^2}
\pi_a
\left|
\frac{\operatorname{tr}(W_a)}{3}
\right|^2
=
\pi_0.
\label{eq:paulientanglementfidelity}
\end{align}
Thus the entanglement fidelity is exactly the probability that the
propagated Weyl label is trivial on the message site.  The general
qudit relation between average gate fidelity and entanglement fidelity
then gives
\begin{equation}
F_{\rm avg}(\mathcal P_p)
=
\frac{3F_e(\mathcal P_p)+1}{4}
=
\frac{3\pi_0+1}{4}
\label{eq:qutritaveragefidelity}
\end{equation}
(see \cite[Eq.~(3)]{NIELSEN2002249}).
Exact convolution over all $3^8$ four-qutrit Weyl labels gives
\begin{equation}
\pi_0^{(t)}(p)
=
\frac{1+8\left(1-\frac{9p}{8}\right)^{n_t}}{9},
\qquad
(n_1,n_2,n_3)=(6,9,15).
\label{eq:qutritidentityprobability}
\end{equation}
Substitution into \eqref{eq:qutritaveragefidelity} gives
\begin{table}[h]
\centering
\begin{tabular}{c|cccc}
\toprule
&\multicolumn{4}{c}{$F_{\rm avg}$}\\
$t\backslash p$&0.001&0.01&0.02&0.05\\
\midrule
1&0.995513&0.956247&0.914913&0.804365\\
2&0.993280&0.935459&0.876533&0.729266\\
3&0.988838&0.895942&0.807204&0.613079\\
\bottomrule
\end{tabular}
\caption{One-qutrit recovery after independent single-site Pauli noise
of total probability $p$ following every half-layer.  The local Choi
leakage remains exactly zero.}
\label{tab:qutritpaulinoise}
\end{table}

The two error models affect the protocol differently.  Perfect-tensor
geometry gives exact local privacy and saturates the nearest-neighbor
causal ceiling.  Coherent perturbations transverse to the perfect
manifold create local leakage at first order, whereas stochastic Pauli
errors preserve local privacy but reduce the fidelity of the inverse
decoder.  Comparing local leakage with recovery fidelity therefore
distinguishes the two error models considered here.

\section{The qubit case}
\label{sec:qubit}

For $q=2$, the edge matrices act on the three Pauli axes.  Define
\[
 \Gamma(C)=\Frob{T_L(C)^2}^2+\Frob{T_R(C)^2}^2.
\]
The exact class counts are
\[
 \Gamma\in\{0,1,2,6\},
 \qquad
 \#\{\Gamma=0,1,2,6\}=144,144,36,36.
\]

\begin{proposition}
\label{prop:glider}
If $\Gamma(C)>0$, then the brickwork circuit contains a one-site Pauli glider.  Hence
\[
 d_r^C(t)=1
\]
for every $r\ge1$ and $t\ge1$.
\end{proposition}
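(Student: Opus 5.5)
The argument reduces $\Gamma(C)>0$ to a nonnilpotent edge channel, produces an invariant one-site Weyl line, and then reads off $d_r^C(t)=1$ from the definition \eqref{eq:dr}. Since $\Gamma(C)=\Frob{T_L^2}^2+\Frob{T_R^2}^2$ is a sum of squared Frobenius norms of partial permutation matrices, $\Gamma(C)>0$ means $T_\nu^2\neq0$ for some edge $\nu\in\{L,R\}$. By \Cref{cor:gammaq} with $q=2$, this excludes the two $\Gamma=0$ phases $\kappa\in\{1,2\}$. The gate therefore lies in a one-edge gliding, two-edge gliding, or SWAP phase of \Cref{thm:fivephase}.

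The next step is to show that $T_\nu$ has a nonzero iterate of every order, i.e.\ is not nilpotent, using \Cref{lem:edgemaps}. In the SWAP sector $A=D=0$, so both edge channels are full Weyl permutations and every power is nonzero. In the rank-one sector each edge channel is supported on a single projective line $\ell=\operatorname{im}A$ (or $\operatorname{im}D$). The condition $T_\nu^2\neq0$ says the image line returns to $\ell$, so the associated invertible block (up to the scalar $\delta=1$) maps $\ell$ to itself. Hence $T_\nu$ restricts to a permutation, up to phases, of the $q-1$ nonidentity labels on $\ell$, and $T_\nu^k\neq0$ for all $k$; for qubits this is a single Pauli axis fixed by the map. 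This is exactly case (c) of \Cref{lem:survivalchains}.

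I then build the glider explicitly. Take $x\in\ell\setminus\{0\}$ and let $P$ be the one-site Weyl $W_x$ on a site that is the outward (edge) leg of an active bond. By \Cref{lem:survivalchains}, the backward image after $s$ half-layers is a one-site Weyl on the site moved one step outward along that edge, for every $s$. Taking $s=2t$ shows that $\cU_C^{-t}W_x\cU_C^{t}$ has span $1$. Equivalently, running the same argument forward with the inverse gate, whose edge maps are the inverse permutations on the same invariant data, gives a one-site output whose backward image is one-site. The infimum in \eqref{eq:dr} is at least $1$ for any nonidentity string. An output string of weight $1\le r$ with a span-$1$ backward image therefore gives $d_r^C(t)=1$ for all $r,t\ge1$, matching \Cref{thm:deficit}(iii).

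The main obstacle is the bookkeeping between the transfer matrix and the circuit geometry. The edge channels are defined with the inward leg maximally mixed, whereas in \eqref{eq:dr} we need an exact operator identity. I will argue that for Clifford gates survival of a label under $T_\nu$ means its backward conjugate literally has trivial inward component, namely $A^\#x=0$ in the proof of \Cref{lem:edgemaps}. The partial trace is then not an approximation, and each step is an exact one-site conjugation. I also need to check that the staggered brickwork places the propagated site on the outward leg at every half-layer. This holds because an edge mover shifts by one site per half-layer in the same direction the bonds alternate.
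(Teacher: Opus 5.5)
Your proposal is correct and follows essentially the same route as the paper: $\Gamma(C)>0$ forces a nonnilpotent edge channel, which in the rank-one sector means the edge map fixes its domain line (the paper phrases this for qubits as $T=\pm e_ae_b^T$ with $a=b$), while in the SWAP sector every axis is transported, so a one-site Pauli glider exists and $d_r^C(t)=1$. Your extra bookkeeping, namely that survival under $T_\nu$ is an exact one-site Clifford conjugation and that the staggered brickwork keeps the label on the correct leg at every half-layer, makes explicit what the paper's two-line proof leaves implicit.
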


\begin{proof}
In an iSWAP-core edge channel, $T=\pm e_ae_b^T$.  The condition $T^2\ne0$ is $a=b$, in which case that Pauli axis is translated along the edge without spreading.  In the SWAP sector all three axes are transported without spreading.
\end{proof}

The \(144\) classes with \(\Gamma=0\) reduce under local Pauli-frame changes, translation, and reflection to three representatives.  These representatives correspond to the three iSWAP-core classes termed ``good scramblers'' by Sommers, Huse, and Gullans \cite[Secs.~VI.B and VI.D]{Sommers2023Crystalline}.  Our classification also retains the exact symplectic-class count and
the parity-tagged boundary dynamics.  The representative binary symplectic matrices and boundary automata are listed in \cref{app:boundary}.

\begin{theorem}
\label{thm:qubitmain}
Let $C$ be a two-qubit Clifford dual-unitary gate.  For every $t\ge2$,
\[
 \boxed{
 \Gamma(C)=0
 \iff
 d_1^C(t)=4t-2,
 \quad
 d_2^C(t)=4t-4.
 }
\]
If $\Gamma(C)>0$, then $d_r^C(t)=1$ for all $r,t\ge1$.
Consequently
\[
 \max_{C\ \mathrm{Clifford\ DU}}d_2^C(t)=4t-4,
\]
and the maximizers are exactly the $144$ bilaterally nilpotent symplectic classes.
\end{theorem}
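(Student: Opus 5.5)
The plan is to reduce \Cref{thm:qubitmain} to the general finite-field results specialized at $q=2$. The first step is to identify $\Gamma(C)$ with $\Gamma_2(C)$ from \Cref{cor:gammaq}. For $q=2$ the perfect phase is empty: by \Cref{thm:fivephase} its count $Lq(q-2)(q^2-1)$ vanishes, consistent with the nonexistence of $\operatorname{AME}(4,2)$. Every qubit Clifford dual-unitary class therefore lies in one of the four remaining phases. These are delayed bilateral erasure, with $\Gamma=0$ and $\kappa=2$; exactly one gliding edge, with $\Gamma=q-1=1$; two rank-one gliding edges, with $\Gamma=2$; and the SWAP sector, with $\Gamma=2(q^2-1)=6$. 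This dichotomy also reproduces the counts $144,144,36,36$ from the table at $L=36$.

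For the forward direction, $\Gamma(C)=0$ forces $\kappa(C)=2$, because $\kappa=1$ is unavailable at $q=2$. I will then invoke \Cref{thm:deficit}(ii), which gives $d_1(t)=4t-2$ for all $t\ge1$ and $d_2(t)=4t-4$ for $t\ge2$. For the converse, and for the statement about $\Gamma>0$, I will use \Cref{prop:glider}, or equivalently \Cref{thm:deficit}(iii). If $\Gamma(C)>0$, a one-site glider exists, so $d_r(t)=1$ for all $r,t\ge1$. Since $4t-2\ne1$ for every $t\ge2$, the masking values $d_1=4t-2$ and $d_2=4t-4$ can only occur when $\Gamma=0$. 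This proves the equivalence for each fixed $t\ge2$.

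For the maximization I will compare values over the whole dual-unitary locus. Nilpotent classes give $d_2^C(t)=4t-4$, while gliding classes give $1<4t-4$ for $t\ge2$. Hence the maximum is $4t-4$, and it is attained exactly on the $\Gamma=0$ classes, whose number is the delayed-erasure count $Lq^2(q-1)=36\cdot4=144$.

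The main obstacle is not this bookkeeping. It lies upstream, in the validity of \Cref{lem:halflayer} for the delayed eraser, in particular the exclusion of better two-bond histories via \Cref{lem:nocollision}. I will take that lemma as given. The one point to verify at $q=2$ is that the normalization in \Cref{lem:nocollision} still applies, since in characteristic two $J$ and the signs degenerate. There I will recheck that $c\ne0$ and $f\ne0$ still give $\ell\neq0$ and $r\neq0$. The sign changes do not affect linear independence of $Ge_1$ and $e_1$, which follows from $c\ne0$, so I expect the argument to go through unchanged. As a consistency check, I will compare against the exhaustive qubit column in \eqref{eq:qutritdistancecheck}.
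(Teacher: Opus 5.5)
Your proposal is correct and follows the paper's own argument: at $q=2$ the perfect phase is empty, so $\Gamma=0$ is exactly the delayed-erasure phase $\kappa=2$ and $\Gamma>0$ is exactly the gliding phases, and the statement is then the $q=2$ specialization of \Cref{thm:deficit} (with the count $Lq^2(q-1)=144$ taken from \Cref{thm:fivephase}). Your additional characteristic-two recheck of \Cref{lem:nocollision} is a reasonable sanity check but goes beyond the paper's proof, which simply inherits \Cref{lem:halflayer} without revisiting it at $q=2$.
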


\begin{proof}
For $q=2$ the perfect phase is empty, the classes with $\Gamma=0$ are exactly the delayed bilateral erasers, and every nonzero value of $\Gamma$ is a gliding phase.  The claim is therefore the $q=2$ specialization of \Cref{thm:deficit}.
\end{proof}

\begin{proposition}
\label{prop:threshold}
Let $C$ be any bilaterally nilpotent two-qubit Clifford dual-unitary gate and let $t\ge2$.  Fix a threshold input interval $M$ of length $4t-4$ with the parity alignment attaining the minimum.  Up to phases, exactly two weight-two output Pauli observables, denoted $G_t^{(1)}$ and $G_t^{(2)}$, have backward images supported inside $M$.  They commute and generate
\[
 \operatorname{alg}\bigl(G_t^{(1)},G_t^{(2)}\bigr)\cong\mathbb C^4.
\]
Their product has output weight four.  Thus the set of leaked weight-at-most-two observables is not itself a vector space.  Rather, the collection of two-local probes first exposes two commuting binary parity generators.  A single fixed two-site observer need not access both generators, and the opposite interval parity need not attain the same threshold structure.
\end{proposition}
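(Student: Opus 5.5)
We would work entirely on binary Weyl labels, as in the proof of \Cref{lem:halflayer}, and argue by following the optimal weight-two backward histories for a delayed eraser. First we normalize $C$ by ordered local Clifford frames. By \Cref{thm:fullcores} together with the proof of \Cref{lem:nocollision}, we may take $A=e_1e_1^T$ and $B,C\in\SL(2,2)$, with $B$ not fixing $[e_1]$ and $C$ not fixing $[e_2]$. Homogeneous local frames only relabel Pauli axes sitewise, so they preserve supports and weights. Over $\F_2$ each projective line contains exactly one nonzero vector. Hence each edge channel has a \emph{unique} surviving label: $T_R$ survives only on the vector spanning $\ker A^\#=\operatorname{im}A$, and $T_L$ survives only on the vector spanning $\operatorname{im}D$ (\Cref{lem:edgemaps}). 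This uniqueness is the source of the finiteness claim.

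Next we classify the weight-two output strings whose backward span equals $4t-4$. The proof of \Cref{lem:halflayer} shows that span $2n-4$ after $n=2t$ half-layers can only arise along one route. The string sits on a single active bond and compresses to one site in the first inverse half-layer. That site then takes exactly one exceptional edge step, and afterwards two outward-facing fronts grow by two per half-layer. Every other history loses at least one site: two different bonds, or an incompressible string, or no exceptional step. In \Cref{lem:nocollision} this loss is strict. The compression step forces the output pair to be the forward image of a one-site label $x$ under the last gate. The exceptional step forces $x$ to be the unique surviving vector of the relevant edge. There are two edges, left and right, each compatible with one parity alignment of the compressing bond. After fixing $M$ and its parity alignment, we expect exactly two surviving candidates. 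We would also check by direct label computation that their backward images fit inside $M$ and not inside a translate. This gives two observables $G_t^{(1)}$ and $G_t^{(2)}$, each unique up to phase. We expect a careful accounting of which two of the a priori four (bond, edge) combinations meet a single fixed $M$ to be the main obstacle. We would first settle it by explicitly listing the three representatives of \cref{app:boundary} and then transport the result to all $144$ classes by local frames, translation and reflection.

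For commutation, the two backward images both lie in $M$. Conjugation by $\cU_C^{t}$ preserves the symplectic form, so it suffices to check $[\,\cdot,\cdot\,]=0$ on the backward labels. Each backward label is the one-site surviving label pushed outward through invertible cross blocks. On sites where both are nonzero, we would pair them directly; we expect the contributions to cancel by the relation $D=-Ce_2e_2^TB$. Equivalently, we would verify that the two output supports are disjoint or sit on a common bond with commuting labels, since the compressing bonds are adjacent in the two cases. Commutation plus independence gives $\operatorname{alg}\cong\mathbb C^2\otimes\mathbb C^2$ as a commutative algebra, namely $\mathbb C^4$, because both generators square to $I$ up to phase.

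Finally, we would show the product has output weight four by proving the two output supports are disjoint. That is, the compressing bonds for the left-edge and right-edge routes are distinct adjacent bonds for the fixed alignment. The product then has weight at most four and exceeds two. This explains why the leaked weight-$\le2$ set is not a subspace, and why no single two-site $A$ sees both generators. For the last sentence, we would exhibit the opposite alignment on a representative. There the compressing bond lies on the other sublattice, and the minimum $4t-4$ is attained by at most one generator, or the product structure differs.
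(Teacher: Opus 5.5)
Your route differs from the paper's. The paper obtains ``exactly two'' by exhaustive search, in three steps.
\begin{itemize}
\item For each of the representatives $S_{00},S_{03},S_{33}$, it pushes all $435$ weight-$\le2$ strings in the depth-two causal cone of $M=\{0,1,2,3\}$ back through four inverse half-layers and finds exactly two survivors.
\item It then uses the boundary automaton of \cref{app:boundary} to pull every depth-$t$ threshold generator back to a depth-two one.
\item It transports the result to all $144$ classes by frames, translation and reflection.
\end{itemize}

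You instead use that over $\F_2$ each nilpotent edge channel has a single surviving vector, so each admissible compress-then-step route fixes its output pair. Carried through, this is more informative: uniformly in $t$ and for all $144$ classes at once, the generators come out in closed form.
\begin{itemize}
\item The first is $S(0,v)$, with $v$ spanning $\operatorname{im}A$, on the last-layer bond just left of the central bond of $M$.
\item The second is $S(w,0)$, with $w$ spanning $\operatorname{im}D$, on the last-layer bond just right of it.
\end{itemize}
For $S_{00},S_{03},S_{33}$ this reproduces the paper's table exactly.

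\textbf{The gap.} The gap is the step you attribute to \Cref{lem:halflayer}, namely that span $2n-4$ arises only along that route. That proof gives only ``no smaller''. \Cref{lem:nocollision} covers only the subcase where both extreme labels survive inward onto one bond.

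Mixed histories on adjacent last-layer bonds are not treated: one label spreads while the other survives inward. There an endpoint genuinely lags. For $S_{00}$, the output $Z_{-1}X_2$ evolves as follows.
\begin{itemize}
\item After the first inverse half-layer it is $Z_{-1}X_0Z_1$.
\item After the second it is $X_{-2}Z_{-1}X_0$, and the right endpoint has retreated to a site that is not outward-facing.
\item After the third it is $Z_{-3}X_0$.
\item Only then does it expand, to $X_{-4}Z_{-3}Z_1$.
\end{itemize}
Strictness survives in this example (span $6=2n-2$ at $n=4$), but only after an explicit nocollision-type computation.

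``Exactly two'' requires every other history to end with span at least $2n-2$, so you must supply this finite case analysis. Show that every lag is transient, and use the fact that once an endpoint is outward-facing it advances at every half-layer. Alternatively, import the paper's depth-two enumeration together with its pullback.

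\textbf{Smaller points.}
\begin{itemize}
\item The position accounting you flag as the main obstacle is immediate. Both routes compress onto the inner leg of a flanking bond and step inward onto the central bond. Given strictness, the opposite parity therefore admits no such observable at all, not ``at most one''.
\item The two output bonds are separated by the central bond; they are not adjacent. Since they are distinct bonds of the same layer, the supports are disjoint. Commutation and output weight exactly four then follow at once, with no pairing computation via $D=-Ce_2e_2^TB$.
\item The frames of \Cref{thm:fullcores} do not preserve brickwork dynamics (\Cref{cor:corevsphase}). The circuit-compatible action is $A\mapsto PAR^{-1}$, $B\mapsto PBP^{-1}$, $C\mapsto RCR^{-1}$. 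This suffices for your normalization of $A$, so cite it rather than \Cref{thm:fullcores}.
\end{itemize}
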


\begin{proof}
The \(144\) bilaterally nilpotent classes reduce, under local
Pauli-frame changes, translation, and reflection, to the three
representatives
\[
S_{00},\qquad S_{03},\qquad S_{33}
\]
listed in \Cref{app:boundary}.  It is therefore enough to perform the
depth-two calculation for these representatives.

By translation, take the threshold interval to be
\[
M=\{0,1,2,3\}.
\]
After two Floquet periods, the forward causal cone of \(M\) is the
ten-site interval
\[
\{-3,-2,\ldots,6\}.
\]
Any output Pauli observable of weight at most two whose
backward image is supported in \(M\) must be one of
\[
3\cdot 10+3^2\binom{10}{2}=435
\]
nonidentity Pauli strings of weight one or two supported in this causal
cone.

For an output Pauli string \(P\), write
\[
\mathcal B_S^{(2)}(P)
=
\mathcal U_S^{-2}P\mathcal U_S^2
\]
for its backward image through two Floquet periods, modulo phase.
Applying the four inverse brickwork layers to the \(435\) candidates
gives exactly the following strings whose backward images are contained
in \(M\):
\[
\begin{array}{c|c|c}
\text{representative}
&
P
&
\mathcal B_S^{(2)}(P)
\\
\hline
S_{00}
&
Z_{-1}X_0
&
X_0Z_1I_2Z_3
\\
S_{00}
&
X_3Z_4
&
Z_0I_1Z_2X_3
\\[1mm]
S_{03}
&
Y_{-1}X_0
&
X_0Z_1Z_2X_3
\\
S_{03}
&
X_3Z_4
&
Z_0I_1Z_2Y_3
\\[1mm]
S_{33}
&
Y_{-1}X_0
&
Y_0Z_1Z_2X_3
\\
S_{33}
&
X_3Y_4
&
X_0Z_1Z_2Y_3.
\end{array}
\]
No other weight-one or weight-two candidate among the \(435\) strings
has backward support contained in \(M\).  Thus each representative has
exactly two threshold generators at depth two.  In the compressed
two-letter notation for their nonidentity output factors, these are
\[
\begin{array}{c|c}
S_{00}&XZ,\ ZX\\
S_{03}&XZ,\ YX\\
S_{33}&XY,\ YX.
\end{array}
\]

For each representative, the two displayed output generators have
disjoint support and therefore commute.  Clifford conjugation preserves
commutation, so their backward images also commute.  They are
independent modulo phases and hence generate an abelian Pauli algebra
isomorphic to
\(
\mathbb C^4.
\)
Their product has output weight four for the displayed representatives.

The two inward endpoint words of every backward image in the table
belong to the stable parity-tagged alphabet
\[
\mathcal B=\{ZI,XZ,YZ\}.
\]
By the boundary update in \Cref{app:boundary}, one further backward
Floquet period applies the permutations \(\pi_L\) and \(\pi_R\) to the
two endpoint words and moves each endpoint outward by two sites.
Therefore each of the two depth-two generators extends uniquely to a
threshold generator at every depth \(t\ge2\), and its input span grows
by four per period.

Conversely, equality in the minimum-span bound
\[
d_2(t)=4t-4
\]
forces equality at every expanding half-layer in the proof of
\Cref{lem:halflayer}.  Hence a threshold-saturating backward image must
remain in the stable boundary alphabet after its initial transient.
Since \(\pi_L\) and \(\pi_R\) are permutations of \(\mathcal B\), one
may reverse the boundary update and remove two exterior sites from
each end.  Repeating this reduction maps every threshold generator
injectively to a depth-two threshold generator.  Because the depth-two
calculation has exactly two such generators, no third generator can
appear at any later depth.

Finally, local Pauli-frame changes, translation, and reflection preserve
output weight, input span, independence, and commutation.  The result
therefore holds for all \(144\) bilaterally nilpotent classes.
\end{proof}

\section{Scalar witnesses for the qubit phase}
\label{sec:witness}

Set
\[
 x_\star=\frac1{\sqrt3}(1,1,1)^T,
 \qquad
 O_\star=\frac{X+Y+Z}{\sqrt3},
\]
and define
\[
 w_\nu=x_\star^TT_\nu^2x_\star,
 \qquad \nu\in\{L,R\}.
\]

\begin{theorem}
\label{thm:scalarwitness}
For every promised two-qubit Clifford dual-unitary edge channel,
\[
\begin{array}{c|c|c}
\Frob{T_\nu^2}^2&w_\nu&\text{edge type}\\
\hline
0&0&\text{nilpotent}\\
1&1/3&\text{rank-one glider}\\
3&-1/3\text{ or }1&\text{SWAP transport}.
\end{array}
\]
Therefore the pair $(w_L,w_R)$ determines the complete phase and the exact value
\[
 \Gamma\in\{0,1,2,6\}.
\]
The sign of $w_\nu$ must be retained.
\end{theorem}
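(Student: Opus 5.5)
For $q=2$ the edge transfer matrices act on the three Pauli axes $X,Y,Z$ and are $3\times3$ signed partial permutation matrices. By \Cref{lem:edgemaps}, each $T_\nu$ falls into one of three classes. It can be a rank-one map $\pm e_ae_b^T$, as in the rank-one core with $q-1=1$. It can be a full signed permutation matrix, in the SWAP sector. It can vanish, in the perfect case, which is empty for $q=2$. The plan is therefore a case analysis on $T_\nu$, computing $T_\nu^2$ and then $w_\nu=x_\star^TT_\nu^2x_\star=\tfrac13\sum_{i,j}(T_\nu^2)_{ij}$, which is one third of the sum of all entries of $T_\nu^2$.

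\textbf{Rank-one case.} If $T_\nu=s\,e_ae_b^T$ with $s=\pm1$, then $T_\nu^2=s^2\,e_a(e_b^Te_a)e_b^T$. This is zero unless $a=b$, and when $a=b$ it equals $e_ae_a^T$. The sign squares away, so $\Frob{T_\nu^2}^2\in\{0,1\}$, $w_\nu\in\{0,1/3\}$, and $w_\nu=1/3$ exactly in the gliding case of \Cref{prop:glider}. A vanishing $T_\nu$ lands in the same row, with both quantities zero.

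\textbf{SWAP case.} Here $T_\nu$ is the signed permutation induced by $B^\#$ or $C^\#$, an element of $\SL(2,2)\cong S_3$ acting on the three nonzero labels. I would first check that, with the canonical Hermitian choice of Weyl representatives, the induced map is an actual permutation of axes up to signs, with the sign pattern fixed by requiring the induced $3\times3$ map to have determinant $+1$. Then $T_\nu^2$ is a signed permutation matrix, so $\Frob{T_\nu^2}^2=3$. For $w_\nu$, split by the cycle type of the underlying permutation $\pi$.
\begin{itemize}
\item If $\pi$ is the identity or a transposition, $\pi^2=\mathrm{id}$ and $T_\nu^2$ is diagonal with entries equal to products of pairs of signs.
\item If $\pi$ is a $3$-cycle, $T_\nu^2$ is a signed $3$-cycle.
\end{itemize}
The signs are forced by the determinant-one condition together with the fact that $Y\propto XZ$ transforms consistently. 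I expect the squared signs to give either $T_\nu^2=I$, so $w_\nu=1$, or a signed $3$-cycle with an odd number of minus signs, so $w_\nu=\tfrac13(\pm1\pm1\pm1)=-\tfrac13$. The value $+\tfrac13$ must not occur, since otherwise SWAP transport would be confused with a rank-one glider; ruling it out is the content of the remark that the sign must be retained.

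This sign bookkeeping is the main obstacle, and I would settle it by direct enumeration. The SWAP edge channels correspond to the six elements of $\SL(2,2)$ (with $B^\#$ ranging over them). For each, compute the conjugation action on $(X,Y,Z)$ with phases, square it, and sum the entries. This confirms that the set of values is exactly $\{1,-1/3\}$, with $1$ for the involutions and the identity and $-1/3$ for the two order-three elements. If a transposition instead produces a $-1$ on the diagonal, I would recheck the phase conventions against the definition of the partial permutation.

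\textbf{Assembly.} The three value sets $\{0\}$, $\{1/3\}$ and $\{-1/3,1\}$ are disjoint, so each $w_\nu$ determines the edge type. The pair $(w_L,w_R)$ then determines whether each edge is nilpotent, a glider, or SWAP. Since the SWAP sector has $A=D=0$, both edges are SWAP together. By \Cref{thm:fivephase} this recovers the full phase, and $\Gamma=\Frob{T_L^2}^2+\Frob{T_R^2}^2\in\{0,1,2,6\}$ follows by adding the table entries. Finally, the value $-1/3$ shows why the sign must be kept: replacing $w_\nu$ by $|w_\nu|$ would identify SWAP transport with a rank-one glider.
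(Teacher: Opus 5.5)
Your rank-one case and your final assembly are correct and match the paper. The SWAP case, however, has a genuine gap, and it sits in the sign bookkeeping you flagged.

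\textbf{The signs are not forced.} Determinant one and $Y\propto XZ$ fix the sign of the image of $Y$, but the signs of the images of $X$ and $Z$ remain free. This four-fold freedom is realized by changing the Clifford lift by a Pauli. That multiplies $T_\nu$ by an element of $\{I,\mathrm{diag}(1,-1,-1),\mathrm{diag}(-1,1,-1),\mathrm{diag}(-1,-1,1)\}$. So the SWAP edge channels are all $24$ rotations of the cube, four over each element of $\SL(2,2)$, and $w_\nu$ is not a function of $B^\#$. Two consequences follow:
\begin{enumerate}[label=(\roman*)]
\item Enumerating six matrices under one phase convention does not prove the claim for every promised edge channel.
\item The assignment you predict is false. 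An edge channel acting as the phase gate ($X\mapsto Y$, $Y\mapsto -X$, $Z\mapsto Z$) has an involutive label map, yet $T_\nu^2=\mathrm{diag}(-1,-1,1)$ and $w_\nu=-1/3$. The Hadamard, whose label map is also an involution, gives $T_\nu^2=I$ and $w_\nu=1$. The unsigned cycle $X\mapsto Y\mapsto Z\mapsto X$ has an order-three label map but $w_\nu=1$.
\end{enumerate}
These are genuine features, not convention errors to be ``rechecked'' away.

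\textbf{Your parity is reversed.} A signed $3$-cycle with an odd number of minus signs has $w_\nu\in\{1/3,-1\}$. That would produce exactly the value $+1/3$ you need to exclude.

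\textbf{The missing idea.} A short parity argument works uniformly over all lifts. If $T_\nu$ is a signed permutation with underlying permutation $\pi$, then $T_\nu^2$ is a signed permutation with underlying permutation $\pi^2$. This permutation is even: the identity or a $3$-cycle. Also $\det T_\nu^2=(\det T_\nu)^2=1$. The determinant of a signed permutation matrix is the sign of its permutation times the product of its signs, so $T_\nu^2$ has an even number of minus signs. Hence the sum of its entries is $3$ or $-1$, and $w_\nu\in\{1,-1/3\}$; both values occur by the examples above.

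\textbf{Comparison with the paper.} The paper reaches the same conclusion geometrically. $T_\nu$ is an orientation-preserving signed permutation, hence a rotation of the cube. Its square lies in the tetrahedral rotation subgroup, which permutes the four body diagonals $(\pm1,\pm1,\pm1)$ having an even number of minus signs. Their normalized inner products with $x_\star$ are $1$ and $-1/3$. Your case split, repaired with the parity argument, is a valid alternative, and it does not even need orientation preservation.
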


\begin{proof}
For an iSWAP-core edge, $T=\pm e_ae_b^T$.  If $a\ne b$, then $T^2=0$.  If $a=b$, then $T^2=e_ae_a^T$ and $w_\nu=1/3$.

For a SWAP-core edge, $T$ is an orientation-preserving signed permutation, hence a rotation of the cube.  Its square either fixes the body diagonal $x_\star$ or maps it to one of the other tetrahedral body diagonals.  Their inner products with $x_\star$ are $1$ and $-1/3$.
\end{proof}

\begin{theorem}
\label{thm:maximin}
Among all scalar edge queries
\[
 W_{x,y}(T)=y^TT^2x
\]
with unit Bloch vectors $x,y$, the largest possible worst-case separation between the zero edge and every nonzero promised edge type is $1/3$.  The choice $x=y=x_\star$ attains this bound and simultaneously separates the SWAP alphabet.
\end{theorem}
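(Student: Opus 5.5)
We need to show two things: no unit pair $(x,y)$ does better than $1/3$, and $x=y=x_\star$ attains $1/3$. Here the separation is
\[
\sigma(x,y)=\min_{T\neq0\ \text{promised}}\lvert y^TT^2x\rvert,
\]
and the zero edge always gives $W_{x,y}=0$. By \Cref{thm:scalarwitness}, the nonzero promised edge types fall into two families. The rank-one gliders have $T=\pm e_ae_a^T$ for $a\in\{1,2,3\}$, so $T^2=e_ae_a^T$. The SWAP edges have $T^2=R^2$, where $R$ ranges over the rotation group of the cube. For achievability, \Cref{thm:scalarwitness} already tabulates the values at $x_\star$: $1/3$ on every glider, and $1$ or $-1/3$ on SWAP squares. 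Hence $\sigma(x_\star,x_\star)=1/3$. The sign pattern $\{1,-1/3\}$ also separates the two SWAP subclasses from each other and from the gliders.

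For the upper bound, I would use only the glider constraints. For each axis $a$, the glider gives $W_{x,y}(T)=y_ax_a$. A separation $\ge s$ therefore forces $\lvert x_a\rvert\,\lvert y_a\rvert\ge s$ for $a=1,2,3$. Summing and applying Cauchy--Schwarz gives
\[
3s\le\sum_a\lvert x_a\rvert\,\lvert y_a\rvert\le\norm{x}\norm{y}=1,
\]
so $s\le1/3$. Equality forces $\lvert x_a\rvert=\lvert y_a\rvert=1/\sqrt3$ with $\lvert x\rvert$ proportional to $\lvert y\rvert$. So the optimum is a body-diagonal pair, $x_\star$ up to signs. For completeness I would then check that such a signed choice still gives $\lvert W\rvert\ge1/3$ on the SWAP types.

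The step I expect to need the most care is this final consistency check: after a Pauli-frame sign flip, every square of a cube rotation must still give a nonzero value of modulus at least $1/3$. Squares of cube rotations are the identity, the $180^\circ$ face rotations (via squares of $90^\circ$ rotations), and the $120^\circ$ vertex rotations (via squares of the order-three rotations). Squares of the $180^\circ$ rotations are the identity. The images of a body diagonal under these squares are body diagonals, and distinct body diagonals have inner product $\pm1/3$. So every value has modulus $1/3$ or $1$ and never vanishes. This confirms the maximin value $1/3$, attained at $x=y=x_\star$, as stated.
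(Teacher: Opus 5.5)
Your proposal is correct and follows the paper's proof: the three glider projections $e_ae_a^T$ give $\min_a\lvert x_ay_a\rvert\le\frac13\sum_a\lvert x_ay_a\rvert\le\frac13$ by Cauchy--Schwarz, and attainment at $x=y=x_\star$ comes from the table in \Cref{thm:scalarwitness}; your equality-case analysis and signed-body-diagonal check are correct but not needed for the statement. One small fix is needed in your definition of $\sigma(x,y)$: the minimum should range over promised edges with $T^2\neq0$, not $T\neq0$, because nonzero nilpotent edges also give $W_{x,y}=0$ and would otherwise force $\sigma\equiv0$.
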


\begin{proof}
The promised family contains the three coordinate projections $P_a=e_ae_a^T$.  Hence
\[
 \inf_{T^2\ne0}|W_{x,y}(T)|
 \le\min_a|x_ay_a|
 \le\frac13\sum_a|x_ay_a|
 \le\frac13.
\]
The uniform direction attains equality for every rank-one projection, and \Cref{thm:scalarwitness} shows that it also separates the SWAP sector.
\end{proof}

Operationally, one prepares the state with Bloch vector $x_\star$, applies the induced edge channel twice with the unused gate input freshly maximally mixed at each application, and measures $O_\star$.  One experiment is performed in each edge direction.  The tetrahedral state $x_\star$ and the observable $O_\star$ are not Pauli stabilizer resources.  The experimental tradeoff is therefore
\[
\begin{array}{c|c}
\text{protocol}&\text{required resources}\\
\hline
\text{two signed scalar settings}
&
\text{one tetrahedral preparation and measurement per edge}
\\
\text{Pauli-only certification}
&
\text{multiple Pauli preparation and measurement settings}
\end{array}
\]
Full Pauli transfer-matrix tomography uses nine scalar expectation
values per edge.  A smaller Pauli-only certification protocol would
require a separate construction and proof.

\begin{corollary}
\label{cor:shots}
With exact edge channels, $N$ independent $\{\pm1\}$ measurements per direction give total failure probability at most
\[
 4e^{-N/72}.
\]
Thus
\[
 \boxed{N\ge72\log\frac4\alpha}
\]
shots per direction suffice for failure probability at most $\alpha$.

More generally, suppose $\norm{\widetilde T_\nu-T_\nu}_2\le\varepsilon$ and both matrices are contractions.  If the statistical estimation error is at most $\eta$, nearest-alphabet decoding is guaranteed whenever
\[
 2\varepsilon+\eta<\frac16.
\]
A sufficient shot bound is
\[
 \boxed{
 N\ge\frac{2}{(1/6-2\varepsilon)^2}\log\frac4\alpha,
 \qquad \varepsilon<\frac1{12}.}
\]
\end{corollary}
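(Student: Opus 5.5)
The approach is to reduce the statement to a Hoeffding bound on each of the two scalar estimates, combined with a margin argument on the finite witness alphabet of \Cref{thm:scalarwitness}. By that theorem, each $w_\nu$ lies in $\{-1/3,0,1/3,1\}$, so adjacent alphabet values are separated by at least $1/3$. Decoding $w_\nu$ to the nearest alphabet value is therefore correct whenever the estimate lies within $1/6$ of the true value. By \Cref{thm:scalarwitness}, the decoded pair $(w_L,w_R)$ then determines the phase and $\Gamma$.

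\textbf{Exact case.} In a single shot the state with Bloch vector $x_\star$ is prepared, passed twice through the edge channel with fresh maximally mixed inward inputs, and $O_\star$ is measured. The output Bloch vector is $T_\nu^2x_\star$, so the $\pm1$ outcome has mean $x_\star^TT_\nu^2x_\star=w_\nu$. The empirical mean $\hat w_\nu$ of $N$ independent shots is an average of variables in $[-1,1]$, so Hoeffding gives
\[
\Pr\bigl(|\hat w_\nu-w_\nu|\ge s\bigr)\le 2e^{-Ns^2/2}.
\]
With $s=1/6$ this is $2e^{-N/72}$. A union bound over the two directions $\nu\in\{L,R\}$ gives total failure probability at most $4e^{-N/72}$. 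Setting $4e^{-N/72}\le\alpha$ yields $N\ge72\log(4/\alpha)$.

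\textbf{Perturbed case.} If the true channel is $\widetilde T_\nu$, the shot mean becomes $\widetilde w_\nu=x_\star^T\widetilde T_\nu^2x_\star$. Using the contraction hypothesis, write
\[
\widetilde T_\nu^2-T_\nu^2=\widetilde T_\nu(\widetilde T_\nu-T_\nu)+(\widetilde T_\nu-T_\nu)T_\nu.
\]
Both terms have operator norm at most $\varepsilon$, and $x_\star$ is a unit vector, so $|\widetilde w_\nu-w_\nu|\le2\varepsilon$. If the statistical error satisfies $|\hat w_\nu-\widetilde w_\nu|\le\eta$, the triangle inequality gives
\[
|\hat w_\nu-w_\nu|\le2\varepsilon+\eta.
\]
This is below $1/6$ exactly under the stated condition $2\varepsilon+\eta<1/6$. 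Taking $\eta=1/6-2\varepsilon$, which is positive because $\varepsilon<1/12$, the same Hoeffding-plus-union-bound computation gives failure probability at most
\[
4\exp\bigl(-N(1/6-2\varepsilon)^2/2\bigr).
\]
This is at most $\alpha$ under the displayed shot bound.

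\textbf{Main obstacle.} The step needing most care is justifying that the measured mean is exactly $x_\star^TT_\nu^2x_\star$. This requires the Bloch-vector normalization of $T_\nu$ in the Pauli basis to be consistent with the normalized-Weyl-basis convention. It also requires the qubit edge channels to be unital, so that no affine offset term appears. Both follow because the edge channel is a partial trace of a unitary conjugation with a maximally mixed inward input, which sends $I$ to $I$.

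A second point is the strict versus non-strict boundary at exactly $1/6$. Hoeffding bounds the probability of deviations $\ge s$, so the complementary event has strict inequality $|\hat w_\nu-w_\nu|<1/6$. Ties at $1/6$ therefore fall in the failure event and are already counted in the bound.
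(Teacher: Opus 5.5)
Your proposal is correct and follows the paper's proof essentially step for step: the telescoping bound $\norm{\widetilde T_\nu^2-T_\nu^2}_2\le2\varepsilon$ from the contraction hypothesis, two-sided Hoeffding for $[-1,1]$-valued outcomes giving $2e^{-N\eta^2/2}$, the $1/3$ separation of the witness alphabet, the choice $\eta=1/6-2\varepsilon$, and a union bound over the two edge directions. You also center the Hoeffding bound at the perturbed mean $\widetilde w_\nu$ rather than $w_\nu$, and you note that the complement event gives a strict margin below $1/6$; both tidy points that the paper leaves implicit.
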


\begin{proof}
The matrix perturbation obeys
\[
 \norm{\widetilde T_\nu^2-T_\nu^2}_2
 \le\norm{\widetilde T_\nu-T_\nu}_2
 (\norm{\widetilde T_\nu}_2+\norm{T_\nu}_2)
 \le2\varepsilon.
\]
The exact one-edge alphabet has minimum separation \(1/3\).
Applying Hoeffding's inequality to the independent measurement outcomes,
which take values in \([-1,1]\), and separately to their negatives gives
\cite[Theorem~2, Eq.~(2.6), p.~16]{hoeffding1963probability}
\[
\Pr\!\left(
  \left|\widehat w_\nu-w_\nu\right|\ge \eta
\right)
\le
2\exp\!\left(-\frac{N\eta^2}{2}\right).
\]
Taking \(\eta=1/6-2\varepsilon\) and applying a union bound over the two
edge directions proves the stated bounds.
\end{proof}

The witness theorem requires an independently justified Clifford dual-unitary promise.  Outside that discrete family, identical one-site edge data can coexist with different higher-body dynamics.

\section{Discussion and outlook}

Ordered local Clifford equivalence and homogeneous-circuit transport
are distinct.  Ordered local Clifford equivalence describes the nonlocal content of an isolated gate, whereas homogeneous brickwork transport also depends on how the local factors are repeated.  This is why the entire rank-one sector forms one ordered local orbit but produces delayed-erasure, one-glider, and two-glider circuits.  The one-site edge channels retain this wiring information and completely determine the one- and two-local masking laws.

The local dimension controls which transport phases are available.  A four-qubit perfect tensor does not exist \cite[Sec.~II.B and App.~C.2]{Rajchel2026}, so delayed erasure is optimal for qubits.  For \(q\ge3\), immediate bilateral erasure occurs and saturates the causal limits of the depth-\(2t\) brickwork architecture.  The counting formulas show that this perfect-tensor phase becomes dominant as \(q\) grows, while the delayed and gliding phases lie on the lower-dimensional determinantal and fixed-line incidence strata described in \Cref{thm:codimension}.  The same parameterization gives direct samplers and local-plus-core compilations for the finite-field Clifford gates.

The masking distances are exact local-privacy thresholds.  Below the relevant threshold, every one- or two-qudit observer receives a completely depolarizing channel, including for inputs entangled with a reference, while the complete output is decoded by reversing the circuit.  This is a low-locality access statement rather than a full threshold secret-sharing theorem because larger coalitions are not classified.  For qubits, the scalar edge witness gives a compact experimental test of the transport phase under the Clifford dual-unitary promise.

The main open problem is to determine the higher-local distances
\(d_r(t)\) for \(r\ge3\).  At weight three, edge-erasure depth is no
longer sufficient: qubit circuits separate according to boundary
monodromy, and canonical perfect cores already depend on the
cross-ratio parameter.  A complete description therefore requires
additional finite boundary invariants.  A second open problem is to
relate these higher-local profiles to quantum convolutional,
burst-error, or subsystem codes.

\appendix

\section{The qubit parity-tagged boundary automaton}
\label{app:boundary}
An exhaustive enumeration of the \(144\) bilaterally nilpotent
two-qubit symplectic classes partitions their parity-tagged boundary
dynamics into three types, represented by

\[
S_{00}=
\begin{pmatrix}
1&0&0&1\\
0&0&1&0\\
0&1&1&0\\
1&0&0&0
\end{pmatrix},
\quad
S_{03}=
\begin{pmatrix}
1&0&1&1\\
0&0&1&0\\
0&1&1&0\\
1&0&0&0
\end{pmatrix},
\quad
S_{33}=
\begin{pmatrix}
1&0&1&1\\
0&0&1&0\\
1&1&1&0\\
1&0&0&0
\end{pmatrix}.
\]
A brickwork circuit is invariant under translation by two sites, so endpoint parity is part of the boundary state.  For a finite Pauli string $P$, read the first two symbols inward from each end.  The stable alphabet is
\[
 \cB=\{ZI,XZ,YZ\},
 \qquad
 \tau:ZI\mapsto XZ\mapsto YZ\mapsto ZI.
\]
The stable left endpoint is even and the stable right endpoint is odd.

\begin{lemma}
Suppose the two inward endpoint words lie in $\cB$ with stable endpoint parities.  One backward Floquet period moves each endpoint outward by exactly two sites.  The boundary words update as
\[
\begin{array}{c|cc}
\text{representative}&\pi_L&\pi_R\\
\hline
S_{00}&\id&\id\\
S_{03}&\id&\tau\\
S_{33}&\tau&\tau.
\end{array}
\]
Hence every later period adds four to the span.
\end{lemma}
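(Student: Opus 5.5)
The plan is a direct finite computation on Weyl labels, organized around the local symplectic structure of each representative. I treat each of $S_{00},S_{03},S_{33}$ separately. Since the brickwork is invariant under translation by two sites, I fix a finite Pauli string $P$ whose left inward endpoint word sits at an even site and whose right inward endpoint word sits at an odd site. Both words are assumed to lie in $\cB=\{ZI,XZ,YZ\}$. One backward Floquet period applies $\cU_C^{-1}$, which consists of two inverse half-layers. I compute the inverse symplectic matrix $R=S^{-1}=\begin{pmatrix}E&F\\G&H\end{pmatrix}$ once for each representative, using the adjugate formula of \Cref{lem:edgemaps}. Over $\F_2$, $X^\#=\operatorname{adj}X$ is immediate. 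All later steps only push two-qubit endpoint data through $R$.

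The key observation is locality of the boundary update. The only gates whose action on the outermost labels can change the outermost four sites at each end are the boundary gate of each half-layer and its neighbour. With the stated parities, the first inverse half-layer has the endpoint word straddling an active bond in the outward-facing position. The endpoint letter is mapped through one column pair of $R$, and the inward letter through the adjacent gate. Because $F,G$ are invertible (\Cref{lem:halflayer}, outward-facing case), the exterior label is nonzero, so the support grows by one site at each end. The same holds for the second half-layer, giving the claimed growth of exactly two sites per end per period.

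To obtain $\pi_L$ and $\pi_R$, I track the new inward two-letter word at each end. It depends only on the old endpoint word and the next letter inward. The subtle point, and the expected main obstacle, is that the new word might depend on interior letters beyond the stored two symbols. I would check that the second letter of every word in $\cB$ has exactly the form needed for the output to be determined by the pair alone. Concretely, $ZI$ has identity inward, and $XZ,YZ$ have $Z$ inward. Writing the $4$-site window explicitly as a $4\times$ local-block product shows that the third letter influences only sites that are not among the new outermost two. If this fails for some letter, I would enlarge the window to three symbols and verify closure there instead. Then, for each of the three words, I evaluate the two half-layers: $3$ words $\times$ $2$ ends $\times$ $3$ representatives, which is eighteen small $\F_2$ matrix–vector products. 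I read off that the image word again lies in $\cB$ with the correct parity, and that the induced map is $\id$ or the cycle $\tau$ as in the table.

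Finally, since $\pi_L,\pi_R$ map $\cB$ to $\cB$ and preserve the stable parities, the hypothesis reproduces itself after every period. Induction on the number of periods then shows that each later period adds exactly $2+2=4$ to the span. Reflection symmetry would halve the work for $S_{00}$ and $S_{33}$, but I would still verify $S_{03}$ at both ends explicitly.
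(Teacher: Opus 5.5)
Your proposal is correct and takes the same route as the paper: you multiply by the inverse binary symplectic blocks at the boundary gate each endpoint meets in each half-layer, use invertibility of $F,G$ in the outward-facing configuration to get growth by exactly two sites per end, and read off the induced maps on $\cB$ case by case, which reproduces the stated table. The interior-dependence worry is unnecessary: the new outer word is $(F^2x,\,HFx)$ at the left end and $(G^2y,\,EGy)$ at the right end, where $x,y$ are the old endpoint letters alone, so the update depends on neither the second nor the third letter and the enlarged-window fallback is never needed.
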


\begin{proof}
Only the two inverse gates meeting an endpoint can determine the two newly created outer symbols.  Direct multiplication by the three inverse binary symplectic matrices gives
\[
S_{00},S_{03}:\quad ZI\mapsto ZI,
\quad XZ\mapsto XZ,
\quad YZ\mapsto YZ
\]
at the left endpoint, whereas
\[
S_{33}:\quad ZI\mapsto XZ,
\quad XZ\mapsto YZ,
\quad YZ\mapsto ZI.
\]
The analogous right-end calculation gives the table above.  The update is independent of the interior Pauli labels.  Omitting endpoint parity is incorrect because the first backward step from the opposite sublattice can move an endpoint by only one site before it enters the stable class.
\end{proof}

For one- and two-local masking, the general half-layer proof in \Cref{lem:halflayer} replaces the earlier finite-base argument.  The parity-tagged automaton remains useful for the threshold leakage algebra and for the three-local spectrum.  At span four and depth two, exactly two weight-two directions attain the minimum, and they commute.

\section{Three-local qubit monodromy}
\label{app:d3}

Define the unordered boundary-monodromy invariant
\[
 \Omega(C)
 =
 \{\operatorname{ord}(\pi_L),\operatorname{ord}(\pi_R)\}.
\]
The \(144\) bilaterally nilpotent classes split as
\[
\begin{array}{c|c}
\Omega&\text{number of classes}\\
\hline
\{1,1\}&36\\
\{1,3\}&72\\
\{3,3\}&36.
\end{array}
\]

Exact binary symplectic enumeration gives
\[
\begin{array}{c|ccc}
\Omega&d_3(2)&d_3(3)&d_3(4)\\
\hline
\{1,1\}&2&5&8\\
\{1,3\}&2&6&10\\
\{3,3\}&3&6&10.
\end{array}
\]

It is enough to treat the representatives
\(S_{00},S_{03},S_{33}\), since local Pauli-frame changes,
translation by two sites, and reflection preserve the support
span and output weight. We enumerated all \(144\) bilaterally nilpotent symplectic classes.
For each class and each \(t\in\{2,3,4\}\), we enumerated all
nonidentity output Pauli strings of weight at most three modulo
two-site translation.  The three matrices
\(S_{00},S_{03},S_{33}\) represent the three resulting distance
profiles.

The minimizing histories found in this calculation enter the stable
boundary alphabet of \Cref{app:boundary}, after which the boundary
closure lemma increases their span by four per additional Floquet
period.  This supplies the corresponding upper bounds at all later
times.  A proof that no different transient weight-three history
produces a smaller span at an arbitrarily later time would require
a complete finite-state exhaustion of all transient boundary
histories.

\begin{conjecture}
\label{conj:d3qubit}
For every bilaterally nilpotent two-qubit Clifford dual-unitary gate,
\[
\begin{array}{c|c}
\Omega&d_3(t)\\
\hline
\{1,1\}
&
d_3(2)=2,\quad d_3(3)=5,\quad
d_3(t)=4t-8\quad(t\ge4)
\\[1mm]
\{1,3\}
&
d_3(t)=4t-6\quad(t\ge2)
\\[1mm]
\{3,3\}
&
d_3(2)=3,\quad
d_3(t)=4t-6\quad(t\ge3).
\end{array}
\]
\end{conjecture}

\section{Higher-local dependence on the perfect-tensor modulus}
\label{app:harmonic}

\subsection{A harmonic three-site cancellation}

The cross-ratio coordinate already appears in a concrete three-local cancellation.  This does not yet classify $d_3$ for all perfect gates, because arbitrary local decorations need not telescope in a homogeneous brickwork circuit.

\begin{proposition}
\label{prop:harmoniccollision}
Assume $\operatorname{char}\F_q\ne2$ and let $C_\delta$ be the canonical perfect core corresponding to $S_\delta$.  Place three $x$-type output Weyl labels at sites $1,2,4$ with amplitudes
\[
 (-2,2,1).
\]
After two backward Floquet periods, the labels outside \([4,7]\)
occur at sites \(0,1,2,3\) and are, in that order,
\[
 \frac{\delta-2}{\delta},
 \qquad
 \frac{(\delta-2)(\delta-1)}{\delta^2},
 \qquad
 \frac{(\delta-2)(3\delta-1)}{\delta^2},
 \qquad
 \frac{3(\delta-2)(\delta-1)}{\delta^2}.
\]
At \(\delta=2\), all four exterior labels vanish, while the remaining
nonzero labels lie at sites \(4,6,7\).  In particular,
\[
 d_3^{C_2}(2)\le4.
\]
Under \eqref{eq:crossratio}, \(\delta=2\) gives
\[
\lambda=\frac{2}{2-1}=2.
\]
For \(\operatorname{char}\F_q\ne2\), this is one of the three
anharmonic representatives
\[
\left\{-1,\,2,\,\frac12\right\}
\]
of a harmonic tetrad
\cite[Sec.~6.1]{Hirschfeld_projective}.
\end{proposition}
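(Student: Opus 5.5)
This is a direct computation: push the three labels backward through four inverse half-layers of the canonical core $C_\delta$, keeping exact track of the $x$-components. Write $S_\delta^{-1}=\begin{pmatrix}E&F\\G&H\end{pmatrix}$, using the formula $S^{-1}=\begin{pmatrix}A^\#&C^\#\\B^\#&D^\#\end{pmatrix}$ of \Cref{lem:edgemaps}. Because $S_\delta$ is diagonal in the $(x,z)$ coordinates, every block is diagonal. So the $x$-coordinates of a purely $x$-type string evolve by a scalar linear recursion over $\F_q$, and $z$-components never appear.

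The scalars are read off from the $x$-entries:
\[
A^\#_{xx}=1,\quad B^\#_{xx}=1,\quad C^\#_{xx}=1,\quad D^\#_{xx}=\tfrac{\delta-1}{\delta}.
\]
These come from the adjugate swapping the diagonal entries. On each active bond, the pair $(a,b)$ of $x$-amplitudes on the left and right legs maps to
\[
\bigl(E a + F b,\ G a + H b\bigr),
\]
with scalars taken from the corresponding $x$-entries. I will fix this convention carefully against \eqref{eq:brickwork}. The last forward layer acts on bonds $(2n+1,2n+2)$, so the first backward half-layer pairs sites $(1,2)$, $(3,4)$, and so on; the second backward half-layer pairs $(0,1)$, $(2,3)$, and so on; and the two alternate.

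With the initial vector $(-2,2,0,1)$ on sites $1,2,3,4$, I apply the four half-layers in order. After each one I record the amplitudes on the growing support, which is at most the interval $[-3,8]$ by causality. After four half-layers I collect the labels at sites $0,1,2,3$. These should simplify to the four displayed rational functions, each carrying the common factor $(\delta-2)$. I expect this factor to come from the first half-layer on bond $(1,2)$, where the combination $-2E+2F$ or its analogue produces $\delta-2$ (up to a unit) and then propagates linearly.

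The main obstacle is bookkeeping rather than any idea. I need the correct assignment of which leg is ``left'', the right transpose or inverse conventions in $S^\#$, and care not to drop the sites left of $0$. Those sites must also vanish, or lie outside the claimed exterior list, for the stated conclusion to hold, so I will track them explicitly.

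At $\delta=2$ the four exterior labels vanish identically. I will then verify by direct substitution that the surviving nonzero labels occupy exactly sites $4,6,7$, and that nothing survives to the left of $4$ or beyond $7$. Checking the sites beyond $7$ uses the fact that $\operatorname{char}\F_q\ne2$ keeps the amplitude $-2$ nonzero and prevents accidental cancellations.

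This gives a nonidentity output Weyl string of weight $3$ whose backward image has span $7-4+1=4$, hence $d_3^{C_2}(2)\le4$ by the definition \eqref{eq:dr}. Finally, the cross-ratio claim follows by substituting $\delta=2$ into \eqref{eq:crossratio} to get $\lambda=2$, which is one of the fixed points $\{\tfrac12,-1,2\}$ of the anharmonic involutions listed earlier.
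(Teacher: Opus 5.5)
Your plan is correct and is the paper's own argument: the paper likewise applies the inverse $x$-block $R_x=\begin{pmatrix}1&1\\1&(\delta-1)/\delta\end{pmatrix}$ on four alternating half-layers, bonds $(1,2),(3,4),\dots$ first and then $(0,1),(2,3),\dots$, and reads off the labels, with the same scalars and bond convention you give. Two side remarks need correcting. First, the factor $\delta-2$ does not appear at the first half-layer, where bond $(1,2)$ sends $(-2,2)$ to $(0,-2/\delta)$; it appears at the second, where bond $(2,3)$ produces $1-2/\delta=(\delta-2)/\delta$. Second, $\operatorname{char}\F_q\ne2$ is needed so that $\delta=2$ is an admissible perfect parameter and the survivors $1/4,1/2,1/2$ at sites $4,6,7$ are defined and nonzero, not to control sites beyond $7$, which causality already excludes because the backward cone is $[-2,7]$.
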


\begin{proof}
The $x$ coordinates evolve independently under the inverse-gate matrix
\[
 R_x=
 \begin{pmatrix}
 1&1\\[1mm]
 1&\dfrac{\delta-1}{\delta}
 \end{pmatrix}.
\]
Apply $R_x$ on the four alternating bonds encountered in two backward periods.  Substitution of the amplitudes $(-2,2,1)$ gives the displayed exterior labels.  At $\delta=2$, the remaining nonzero labels lie at sites $4,6,7$.
\end{proof}

\begin{conjecture}
\label{conj:d3harmonic}
For canonical perfect cores in odd characteristic,
\[
 d_3^{C_\delta}(t)=
 \begin{cases}
 2,&t=1,\\
 4t-4,&\delta=2\text{ and }t\ge2,\\
 4t-2,&\delta\ne2.
 \end{cases}
\]

The harmonic cancellation in
\Cref{prop:harmoniccollision} suggests this all-time law.  A proof
would require a complete symbolic classification of three-site
boundary histories.  For general locally decorated gates, additional
circuit-compatible boundary-monodromy data may be required beyond
\(\delta\).
\end{conjecture}

\begingroup
\small
\bibliographystyle{unsrtnat}
\bibliography{ref_cliff}
\endgroup

\end{document}